\pgfplotsset{compat=newest}
\theoremstyle{plain}
\newtheorem{theorem}{Theorem}[section]
\newtheorem{lemma}[theorem]{Lemma}
\newtheorem{proposition}[theorem]{Proposition}
\theoremstyle{definition}
\newtheorem{remark}[theorem]{Remark}
\newcommand\blfootnote[1]{%
	\begingroup
	\renewcommand\thefootnote{}\footnote{#1}%
	\addtocounter{footnote}{-1}%
	\endgroup
}
\newcommand*{\ra}{\rightarrow}
\newcommand*{\cA}{\mathcal{A}}
\newcommand*{\cB}{\mathcal{B}}
\newcommand*{\cC}{\mathcal{C}}
\newcommand*{\cE}{\mathcal{E}}
\newcommand*{\cH}{\mathcal{H}}
\newcommand*{\cL}{\mathcal{L}}
\newcommand*{\cI}{\mathcal{I}}
\newcommand*{\cN}{\mathcal{N}}
\newcommand*{\cM}{\mathcal{M}}
\newcommand*{\cT}{\mathcal{T}}
\newcommand*{\cV}{\mathcal{V}}
\newcommand*{\rDbf}{\mathrm{\mathbf{D}}}
\newcommand*{\rD}{\mathrm{D}}
\newcommand*{\RR}{\mathbb{R}}
\newcommand*{\CC}{\mathbb{C}}
\newcommand*{\NN}{\mathbb{N}}
\newcommand*{\GL}{\mathrm{GL}}
\newcommand*{\End}{\mathrm{End}}
\newcommand*{\id}{I}
\newcommand*{\poly}{\mathrm{poly}}
\newcommand*{\supp}{\mathrm{supp}}
\newcommand*{\tr}{\mathrm{tr}\,}
\newcommand*{\ket}[1]{| #1 \rangle}
\newcommand{\ketbra}[2]{|#1\rangle\!\langle #2|}
\newcommand*{\Pos}{\mathscr{P}}
\newcommand*{\Lin}{\mathscr{L}}
\newcommand*{\D}{\mathscr{D}}
\newcommand*{\CP}{\mathrm{CP}}
\newcommand{\CCap}{\mathrm{C}}
\newcommand{\QCapTW}{\mathrm{Q}^{\leftrightarrow}}
\newcommand{\QCapPPT}{\mathrm{Q}^{\mathrm{PPT},\leftrightarrow}}
\newcommand{\SQCapPPT}{\mathrm{Q}^{\mathrm{PPT},\leftrightarrow, \dagger}}
\DeclarePairedDelimiter\ceil{\lceil}{\rceil}
\newcommand*{\<}{\langle}
\renewcommand*{\>}{\rangle}
\newcommand {\br} [1] {\ensuremath{ \left( #1 \right) }}
\newcommand {\Br} [1] {\ensuremath{ \left[ #1 \right] }}
\newcommand {\cbr} [1] {\ensuremath{ \left\lbrace #1 \right\rbrace }}
\newcommand\reallywidehat[1]{%
	\savestack{\tmpbox}{\stretchto{%
			\scaleto{%
				\scalerel*[\widthof{\ensuremath{#1}}]{\kern-.6pt\bigwedge\kern-.6pt}%
				{\rule[-\textheight/2]{1ex}{\textheight}}
			}{\textheight}%
		}{0.5ex}}%
	\stackon[1pt]{#1}{\tmpbox}%
}
\newcommand{\newD}{\rD^{\#}}
\newcommand{\newQ}{\mathrm{Q}^{\#}}
\newcommand{\reg}{\mathrm{reg}}
\newcommand{\suppress}[1]{}
\DeclareMathOperator*{\argmin}{argmin}
\begin{document}

\title{{\LARGE A hierarchy of efficient bounds on quantum capacities \\
 exploiting symmetry}}

\author[1]{Omar Fawzi}
\author[1]{Ala Shayeghi}
\author[1]{Hoang Ta}

\affil[1]{\small{Univ Lyon, ENS Lyon, UCBL, CNRS, Inria, LIP, F-69342, Lyon Cedex 07, France}}

\maketitle


\begin{abstract}
	Optimal rates for achieving an information processing task are often characterized in terms of regularized information measures. In many cases of quantum tasks, we do not know how to compute such quantities. Here, we exploit the symmetries in the recently introduced $\newD$ in order to obtain a hierarchy of semidefinite programming bounds on various regularized quantities. As applications, we give a general procedure to give efficient bounds on the regularized Umegaki channel divergence as well as the classical capacity and two-way assisted quantum capacity of quantum channels. In particular, we obtain slight improvements for the capacity of the amplitude damping channel. We also prove that for fixed input and output dimensions, the regularized sandwiched R\'enyi divergence between any two quantum channels can be approximated up to an $\epsilon$ accuracy in time that is polynomial in $1/\epsilon$.  

\end{abstract}


\section{Introduction}

The optimal rates for many quantum information processing tasks of interest can be characterized in terms of a regularized divergence between quantum channels. For a divergence $\mathbf{D}$ defined on quantum states, the corresponding channel divergence is defined by maximizing the divergence between the channel outputs over the set of possible inputs. There are two natural variants: for quantum channels $\cN$ and $\cM$ the non-stabilized divergence is given by only allowing input states $\rho$ in the input space of $\cN$ and $\cM$
\begin{align*}
	\overline{\mathbf{D}}( \cN \| \cM) = \sup_{\rho} \mathbf{D}(\cN(\rho) \| \cM(\rho) ) \ ,
\end{align*}
whereas the stabilized version allows arbitrary input states 
\begin{align*}
	\mathbf{D}( \cN \| \cM) = \sup_{\rho} \mathbf{D}((\cI \otimes \cN)(\rho) \| (\cI \otimes \cM)(\rho) ) \ ,
\end{align*}
where $\cI$ is the identity channel. The most well-known example illustrating these two variants is when $\mathbf{D}$ is the trace distance, then $\overline{\mathbf{D}}(\cN \| \cM)$ is the superoperator trace norm and $\mathbf{D}(\cN \| \cM)$ is the diamond norm, and it is known that we can have $\overline{\mathbf{D}}(\cN \| \cM) \ll \mathbf{D}(\cN \| \cM)$~\cite{kitaev2002classical}.

When analyzing tasks in the independent and identically distributed limit, an important divergence $\mathbf{D}$ is the Umegaki divergence $\rD$ defined by $\rD(\rho \| \sigma) = \tr(\rho \log \rho) - \tr(\rho \log \sigma)$. For example, in asymmetric hypothesis testing between channels $\cN$ and $\cM$, the Stein exponent is characterized using $\rD$, namely in terms of the \emph{regularized} channel divergence $\rD^{\mathrm{reg}}(\cN \| \cM) := \sup_{n} \frac{1}{n} \rD(\cN^{\otimes n} \| \cM^{\otimes n})$~\cite{wilde2020amortized,wang2019resource,fang2020chain}. In fact, it turns out that the channel divergence $\rD$ is in general non-additive~\cite{fang2020chain}, in which case we have $\rD^{\mathrm{reg}}(\cN \| \cM) > \rD(\cN \| \cM)$.
\blfootnote{ A preliminary version of this paper was presented at IEEE International Symposium on Information Theory, 2021}

Another example is the Holevo information of a quantum channel $\cN$, which is given by 
$\chi(\cN) = \min_{\sigma} \overline{\rD}(\cN \| \cT_{\sigma})$
where $\cT_{\sigma}$ is the replacer channel that outputs $\sigma$ for any input density operator~\cite{ohya1997capacities}.
The Holevo-Schumacher-Westmorland theorem (see e.g.,~\cite{wilde_2013}) states that the classical capacity of $\cN$ is given by $\chi^{\mathrm{reg}}(\cN) := \sup_n \frac{1}{n} \chi(\cN^{\otimes n})$ and the regularization is needed for some channels, as shown by~\cite{hastings2009superadditivity}. 

The objective of this paper is to provide efficient ways of computing, or more specifically upper bounding, such regularized channel divergences. In order to achieve this, we use the recently introduced $\rD^{\#}$ R\'enyi divergences~\cite{Omar_20} which were shown to give a converging hierarchy of upper bounds on regularized channel divergences. We exploit the symmetries of the resulting hierarchy of optimization programs to obtain a concise representation and solve it efficiently. Specifically, for quantum channels $\cN, \cM$, we show in Theorem~\ref{thm: computing_channel_divergence_1} that the permutation symmetry of the optimization program defining $\newD_{\alpha}(\cN^{\otimes k}\| \cM^{\otimes k})$ can be used to transform it into a semidefinite program with $\poly(k)$ variables and constraints compared to the straightforward representation which is of size exponential in $k$. However, as we will see, a direct implementation of this transformation would require an exponential time computation. In Theorem~\ref{thm: computing_channel_divergence}, we provide an algorithm which performs this transformation in $\poly(k)$ time, for fixed input and output dimensions. As a first application, we consider the task of approximating the regularized sandwiched R\'enyi divergence between two channels. Note that the sandwiched R\'enyi divergence (see Section~\ref{sec:prelim} for the definition) is in general non-additive~\cite{fang2020chain}, and it is not known whether its regularization is efficiently computable. Ref.~\cite{Fang_19} shows that the regularized quantity can be approximated up to arbitrary accuracy by $\frac{1}{k} \newD_{\alpha}(\cN^{\otimes k} \| \cM^{\otimes k})$, for sufficiently large $k$. Our results imply that the regularized sandwiched R\'enyi divergence between two channels can be approximated up to an accuracy $\epsilon\in(0,1]$, in time that is polynomial in $1/\epsilon$ (for fixed input/output dimensions).   
Furthermore, when the channels admit additional group symmetries, we present a general approach to combine these symmetries with the intrinsic permutation invariance to further simplify the problem. As an example demonstrating the potential of this approach, in Section~\ref{sec:Zsymmetry}, we apply our method to generalized amplitude damping channels and we show how a very simple symmetry of these channels leads to considerable reductions in the size of the convex optimization program for computing the channel divergence (see Table~\ref{table:Zsym}). 

In Section~\ref{section:5}, we present a procedure for efficiently computing improved strong converse bounds on the classical capacity of quantum channels by considering the generalized Upsilon-information~\cite{XWang_19} induced by the $\rD^{\#}$ R\'enyi divergences. To illustrate our method, we apply it to the amplitude damping channel (see Table~\ref{table:upperbound_capacity} for a comparison with the best previously known bounds). Even though the improvements we obtain for the classical capacity are very small for this channel, the amplitude damping channel $\cA_{p}$ is one of the current challenges as far as the classical capacity is concerned. In particular, it remains open whether $\chi^{\textrm{reg}}(\cA_{p}) = \chi(\cA_{p})$.
Finally, in Section~\ref{section:6}, we use our method for computing improved upper bounds on the two-way assisted quantum capacity of channels by considering the generalized Theta-information~\cite{Fang_19} induced by the $\rD^{\#}$ divergences and apply it to the amplitude damping channel, as an example (see Figure~\ref{fig:quantum_capacity} for a comparison with the best previously known bounds). 


\section{Preliminaries}
\label{sec:prelim}
\subsection*{Basic notation}

Let $\cH$ be a finite dimensional complex Hilbert space; we denote by $\Lin(\cH)$ the set of linear operators on $\cH$, $\Pos(\cH)$ denotes the set of positive semidefinite operators on $\cH$, and $\D(\cH):= \{\rho \in \Pos(\cH): \tr(\rho) = 1 \}$ is the set of density operators on $\cH$. For any two Hermitian operators $\rho, \sigma \in \Lin(\cH)$, we write $\rho \leq \sigma$ if $\sigma - \rho \in \Pos(\cH)$. Given $\rho \in \Lin(\cH)$, the support of $\rho$, denoted $\supp(\rho)$, is the orthogonal complement of its kernel. For $\rho,\sigma \in \Lin(\cH)$, we write $\rho \ll \sigma$, if $\supp(\rho) \subseteq \supp(\sigma)$. Let $X,Y$ be finite dimensional complex Hilbert spaces. For $A \in \Pos(X\otimes Y)$, we often explicitly indicate the quantum systems as a subscript by writing $A_{XY}$. The marginal on the subsystem $X$ is denoted $A_X =\tr_{Y}(A_{XY})$. Let $\cbr{\ket{x}}_x$ and $\cbr{\ket{y}}_y$ be the standard bases for $X$ and $Y$, respectively. We will use a correspondence between linear operator in $\Lin(Y,X)$ and vectors in $X\otimes Y$, given by the linear map $\mathrm{vec}:\Lin(Y,X)\rightarrow X\otimes Y$, defined as $\mathrm{vec}\br{\ketbra{x}{y}}=\ket{x}\ket{y}$.
\par
We denote by $\CP(X:Y)$ the set of completely positive (CP) maps from $\Lin(X)$ to $\Lin(Y)$. A \emph{quantum channel} $\cN_{X \to Y}$ is a CP and trace-preserving linear map from $\Lin(X)$ to $\Lin(Y)$. A \emph{subchannel} $\cM_{X \to Y}$ is a CP and trace-nonincreasing linear map from $\Lin(X)$ to $\Lin(Y)$.
Let $X'$ be isomorphic to $X$ and $\ket{\Phi}_{XX'} = \sum_{x}\ket{x}_{X} \ket{x}_{X'}$ be the unnormalized maximally entangled state. For a linear map $\cN_{X' \to Y}$, we denote by $J_{XY}^{\cN} \in \Pos(X \otimes Y)$ the corresponding \emph{Choi matrix} defined as $J_{XY}^{\cN} = (\cI_X \otimes \cN)({\ketbra{\Phi}{\Phi}}_{XX'})$, where $\cI_X$ denotes the identity map on $\Lin(X)$.


\textbf{Polynomial on a vector space.} For a finite dimensional complex vector space $\cH$, the \emph{dual vector space} $\cH^{*}$ of $\cH$ is the vector space of all linear transformations $\varphi: \cH \to \CC$. The \emph{coordinate ring} of $\cH$, denoted $\mathcal{O}(\cH)$, is the algebra consisting of all $\CC$-linear combinations of products of elements from $\cH^{*}$. An element of $\mathcal{O}(\cH)$ is called a polynomial on $\cH$. A polynomial $p \in \mathcal{O}(\cH)$ is called \emph{homogeneous} if it is a $\CC$-linear combination of a product of $k$ non-constant elements of $\cH^{*}$ (for a fixed non-negative integer $k$). We denote by $\mathcal{O}_{k}(\cH)$ the set all homogeneous polynomials of degree $k$. 

\subsection*{Quantum divergences}

A functional $\mathbf{D}: \D(\cH) \times \Pos(\cH) \to \RR$ is a \emph{generalized quantum divergence}~\cite{Polyanskiy2010,Sharma2012} if it satisfies the data-processing inequality
\begin{align*}
\mathbf{D}(\cN(\rho)\| \cN(\sigma)) \leq \mathbf{D}(\rho \| \sigma). 
\end{align*}

Let  $\rho \in \D(\cH)$ and $\sigma \in \Pos(\cH)$ such that $\rho \ll \sigma$. The \emph{sandwiched R\'enyi divergence}~\cite{sandwiched_diver2013},~\cite{sandwiched_diver2014} of order $\alpha \in (1,\infty)$ is defined as
\begin{align*}
\widetilde{\rD}_{\alpha}(\rho \| \sigma):=\frac{1}{\alpha-1} \log \tr\left[\left(\sigma^{\frac{1-\alpha}{2 \alpha}} \rho \sigma^{\frac{1-\alpha}{2 \alpha}}\right)^{\alpha}\right].
\end{align*}
The \emph{geometric R\'enyi divergence}~\cite{Petz1998ContractionOG,geometric_diver2013,tomamichel2015quantum,hiai2017, Fang_19} of order $\alpha$ is defined as
\begin{align*}
\widehat{\rD}_{\alpha}(\rho \| \sigma):=\frac{1}{\alpha-1} \log \tr\left[\sigma^{1 / 2}\left(\sigma^{-1 / 2} \rho \sigma^{-1 / 2}\right)^{\alpha} \sigma^{1 / 2}\right].
\end{align*}
The \emph{max divergence} is defined as
\begin{align*}
\rD_{\max}(\rho\| \sigma): = \log \inf \{\lambda >0:\rho \leq \lambda\sigma \}.
\end{align*}

The inverses in these formulations are generalized inverses, i.e., the inverse on the support. When $\rho \ll \sigma$ does not hold, these quantities are set to $\infty$. Recently, in \cite{Omar_20}, the authors introduced an interesting quantum R\'enyi divergence called \emph{$\#$-R\'enyi divergence}. To define this divergence, we recall the geometric mean of two positive definite matrices.  

For $\alpha \in (0,1)$, the $\alpha$-geometric mean of two positive definite matrices $\rho$ and $\sigma$ is defined as
\begin{align*}
\rho \#_{\alpha} \sigma = \rho^{1/2}(\rho^{-1/2}\sigma\rho^{-1/2})^{\alpha}\rho^{1/2}. 
\end{align*}
The $\alpha$-geometric mean has the following properties (see Refs. \cite{Kubo1980MeansOP} and \cite{Omar_20}):
\begin{enumerate}
	\item Monotonicity: $A\leq C$ and $B \leq D$ implies $A\#_{\alpha}B \leq C\#_{\alpha}D$.
	\item{ Transformer inequality: $M(A\#_{\alpha}B)M^{*} \leq (MAM^{*})\#_{\alpha}(MBM^{*})$, with equality if $M$ is invertible. 
	}
	\item $(aA)\#_{\alpha}(bB) = a(b/a)^{\alpha}(A \#_{\alpha})B$, for any $a>0$ and $b\geq 0$.
	\item {Joint-concavity/sub-additivity: for any $A_i,B_i \geq 0$ we have
		\begin{align*}
		\sum_{i} A_{i} \#_{\alpha}B_i \leq \left(\sum_{i}A_i \right)\#_{\alpha}\left(\sum_{i}B_i \right)\enspace.
		\end{align*}
	}
	\item {Direct sum: for any $A_1,A_2,B_1,B_2 \geq 0$, we have
		\begin{align*}
		(A_1 \oplus A_2) \#_{\alpha} (B_1 \oplus B_2) = (A_1 \#_{\alpha} B_1)\oplus (A_2 \#_{\alpha} B_2)\enspace,
		\end{align*}
		where $A_1 \oplus A_2$ form a block diagonal matrix $\left[\begin{array}{cc}
		A_{1} & 0 \\
		0 & A_{2}
		\end{array}\right].$
	}
\end{enumerate}

The \emph{$\#$-R\'enyi divergence}~\cite{Omar_20} of order $\alpha$ between two positive semidefinite operators  is defined as
\begin{align*}
\newD_{\alpha}(\rho \| \sigma)&:= \frac{1}{\alpha -1 }\log \newQ_{\alpha}(\rho \| \sigma), \\
\newQ_{\alpha}(\rho \| \sigma)&:= \min_{A \geq 0} \tr(A) \, \, \text{ s.t. } \, \, \rho \leq \sigma \#_{\frac{1}{\alpha}} A.   
\end{align*}
We note that the above convex program may be expressed as a semidefinite program when $\alpha$ is a rational number~\cite{fawzi2017lieb's,SAGNOL}. The order between these divergences is summarized in the proposition below. 
\begin{proposition}
	\label{prop:compair_divergences} 
	For any $\rho,\sigma \in \Pos(\cH)$ and $\alpha \in (1,2]$, we have
	\begin{align*}
	\rD(\rho \| \sigma) \leq \widetilde{\rD}_{\alpha}(\rho \| \sigma) \leq \newD_{\alpha}(\rho \| \sigma) \leq \widehat{\rD}_{\alpha}(\rho \| \sigma) \leq \rD_{\max}(\rho \| \sigma). 
	\end{align*}
\end{proposition}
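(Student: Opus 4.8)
The plan is to establish the chain of inequalities in Proposition~\ref{prop:compair_divergences} by treating each link separately, since each comparison has its own natural proof technique. The two outermost inequalities $\rD(\rho\|\sigma)\le\widetilde{\rD}_\alpha(\rho\|\sigma)$ and $\widehat{\rD}_\alpha(\rho\|\sigma)\le\rD_{\max}(\rho\|\sigma)$ are already known in the literature (the first is the standard monotonicity of the sandwiched R\'enyi divergence in $\alpha$ together with the $\alpha\to1$ limit; the second is a well-known property of the geometric R\'enyi divergence), so I would simply cite~\cite{sandwiched_diver2013,tomamichel2015quantum,Fang_19} for those and focus the actual work on the two middle inequalities $\widetilde{\rD}_\alpha\le\newD_\alpha$ and $\newD_\alpha\le\widehat{\rD}_\alpha$. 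By homogeneity and the fact that $\log$ is monotone, it suffices to compare $\widetilde{\rQ}_\alpha(\rho\|\sigma):=\tr[(\sigma^{\frac{1-\alpha}{2\alpha}}\rho\,\sigma^{\frac{1-\alpha}{2\alpha}})^\alpha]$, $\newQ_\alpha(\rho\|\sigma)$, and $\widehat{\rQ}_\alpha(\rho\|\sigma):=\tr[\sigma^{1/2}(\sigma^{-1/2}\rho\,\sigma^{-1/2})^\alpha\sigma^{1/2}]$; note that since $\alpha\in(1,2]$ we have $\tfrac1\alpha\in[\tfrac12,1)$, so the geometric mean $\#_{1/\alpha}$ is well-defined. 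Also one may reduce to the case $\rho\ll\sigma$ with $\sigma$ invertible (restricting to the support of $\sigma$), since otherwise all quantities are $+\infty$ or the bounds are trivial.

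For the upper bound $\newD_\alpha(\rho\|\sigma)\le\widehat{\rD}_\alpha(\rho\|\sigma)$, I would exhibit an explicit feasible point for the minimization defining $\newQ_\alpha$. The natural candidate is $A:=\sigma^{1/2}(\sigma^{-1/2}\rho\,\sigma^{-1/2})^\alpha\sigma^{1/2}$, which is positive semidefinite and has $\tr(A)=\widehat{\rQ}_\alpha(\rho\|\sigma)$. It remains to check the constraint $\rho\le\sigma\#_{1/\alpha}A$. Writing $\beta=1/\alpha\in(\tfrac12,1)$ and using the definition of the geometric mean, $\sigma\#_\beta A=\sigma^{1/2}(\sigma^{-1/2}A\sigma^{-1/2})^\beta\sigma^{1/2}=\sigma^{1/2}\big((\sigma^{-1/2}\rho\,\sigma^{-1/2})^{\alpha\beta}\big)^{\text{?}}\sigma^{1/2}$ — here one must be careful because $\sigma^{-1/2}A\sigma^{-1/2}=(\sigma^{-1/2}\rho\,\sigma^{-1/2})^\alpha$ only up to the conjugation structure, but in fact with $A$ chosen as above we get exactly $\sigma^{-1/2}A\sigma^{-1/2}=(\sigma^{-1/2}\rho\,\sigma^{-1/2})^\alpha$, hence $(\sigma^{-1/2}A\sigma^{-1/2})^\beta=(\sigma^{-1/2}\rho\,\sigma^{-1/2})^{\alpha\beta}=\sigma^{-1/2}\rho\,\sigma^{-1/2}$, and conjugating back by $\sigma^{1/2}$ yields $\sigma\#_\beta A=\rho$. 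So the constraint holds with equality, proving $\newQ_\alpha(\rho\|\sigma)\le\widehat{\rQ}_\alpha(\rho\|\sigma)$ and hence the inequality.

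For the lower bound $\widetilde{\rD}_\alpha(\rho\|\sigma)\le\newD_\alpha(\rho\|\sigma)$, I would take an arbitrary feasible $A\ge0$ with $\rho\le\sigma\#_{1/\alpha}A$ and show $\widetilde{\rQ}_\alpha(\rho\|\sigma)\le\tr(A)$; minimizing over $A$ then gives the claim. The idea is to apply operator monotonicity: since $x\mapsto x^\alpha$ for $\alpha\in(1,2]$ is not operator monotone, I would instead work with the variational/integral characterization of the sandwiched R\'enyi quantity, or use the known identity $\widetilde{\rQ}_\alpha(\rho\|\sigma)=\sup\{\tr(\rho X):X\ge0,\ \tr[(\sigma^{1/2}X\sigma^{1/2})^{\alpha/(\alpha-1)}]\le1\}$-type duality — more cleanly, one can use the fact (from~\cite{Omar_20}) that $\newD_\alpha$ dominates $\widetilde{\rD}_\alpha$ via the transformer inequality: apply the substitution $\rho\le\sigma\#_{1/\alpha}A$, conjugate both sides by $\sigma^{(1-\alpha)/(2\alpha)}$, and use the transformer inequality $M(\sigma\#_\beta A)M^*\le(M\sigma M^*)\#_\beta(MAM^*)$ with $M=\sigma^{(1-\alpha)/(2\alpha)}$ to bound $\sigma^{\frac{1-\alpha}{2\alpha}}\rho\,\sigma^{\frac{1-\alpha}{2\alpha}}\le(\sigma^{1/\alpha})\#_\beta(\sigma^{\frac{1-\alpha}{2\alpha}}A\sigma^{\frac{1-\alpha}{2\alpha}})$, then take the $\alpha$-power (which is operator monotone after a further reduction since the first argument commutes appropriately) and trace. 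The main obstacle is precisely this last step: getting from the operator inequality to the trace inequality $\widetilde{\rQ}_\alpha\le\tr(A)$ requires handling the non-operator-monotonicity of $x\mapsto x^\alpha$ with $\alpha>1$, which I expect to resolve by exploiting that one of the two arguments in the relevant geometric mean is a power of $\sigma$ alone, so that the expression simplifies to $\tr[(\sigma^{\frac{1-\alpha}{2\alpha}}A\sigma^{\frac{1-\alpha}{2\alpha}})^\cdot\cdots]$ and a direct computation (or an appeal to the Araki--Lieb--Thirring inequality) closes the gap; alternatively, this comparison is essentially proved in~\cite{Omar_20} and could be cited, with the proof here included only for completeness.
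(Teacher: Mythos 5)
The paper itself gives no proof of this proposition: it is assembled from the literature, in particular Proposition~3.4 of~\cite{Omar_20} for the two middle inequalities and standard monotonicity facts for the outer ones, so your task was to reconstruct rather than match a written argument. Your reconstruction is correct and is essentially the argument in~\cite{Omar_20}. The step $\newD_\alpha\le\widehat{\rD}_\alpha$ is exactly right: $A=\sigma^{1/2}(\sigma^{-1/2}\rho\,\sigma^{-1/2})^\alpha\sigma^{1/2}$ saturates $\rho=\sigma\#_{1/\alpha}A$ with equality, and $\tr(A)$ equals the geometric R\'enyi trace functional, giving the bound immediately. For $\widetilde{\rD}_\alpha\le\newD_\alpha$, the route you finally settle on (conjugate the constraint by $\sigma^{(1-\alpha)/(2\alpha)}$, use the transformer equality, then Araki--Lieb--Thirring) does close the argument. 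Concretely, with $B:=\sigma^{-1/2}A\sigma^{-1/2}$ the constraint $\rho\le\sigma\#_{1/\alpha}A$ becomes $\sigma^{\frac{1-\alpha}{2\alpha}}\rho\,\sigma^{\frac{1-\alpha}{2\alpha}}\le\sigma^{\frac{1}{2\alpha}}B^{1/\alpha}\sigma^{\frac{1}{2\alpha}}$; trace monotonicity then gives $\tr[(\sigma^{\frac{1-\alpha}{2\alpha}}\rho\,\sigma^{\frac{1-\alpha}{2\alpha}})^\alpha]\le\tr[(\sigma^{\frac{1}{2\alpha}}B^{1/\alpha}\sigma^{\frac{1}{2\alpha}})^\alpha]$, and Araki--Lieb--Thirring with exponent $\alpha\ge1$ bounds the right-hand side by $\tr(\sigma^{1/2}B\,\sigma^{1/2})=\tr(A)$.

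The one place your write-up wobbles is the parenthetical ``take the $\alpha$-power (which is operator monotone after a further reduction\ldots)'': $x\mapsto x^\alpha$ is \emph{not} operator monotone for $\alpha>1$, and no reduction makes it so. What the computation actually uses is the weaker \emph{trace} monotonicity, $0\le X\le Y \Rightarrow \tr(X^\alpha)\le\tr(Y^\alpha)$, which follows from Weyl's eigenvalue inequalities; that, combined with ALT as above, is all that is needed. With that substitution the argument is airtight and matches the proof implicitly cited by the paper.
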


For a quantum channel $\cN_{X' \to Y}$, a subchannel $\cM_{X' \to Y}$ and a generalized quantum divergence $\rDbf$ the corresponding channel divergence~\cite{Leditzky2018} is defined as
\begin{align*}
\rDbf(\cN \| \cM):= \sup_{\rho_X \in \D(X)} \rDbf(\cN_{X' \to Y}(\phi_{XX'})\| \cM_{X' \to Y}(\phi_{XX'})) \enspace,
\end{align*}
where $\phi_{XX'}$ is a purification of $\rho_X$. For $\rDbf = \newD_{\alpha}$, the channel divergence can be expressed in terms of a convex optimization program~\cite{Omar_20} as follows.
\begin{align}
\newD_{\alpha}(\cN \| \cM) &\coloneqq \frac{1}{\alpha-1} \log \newQ_{\alpha}(\cN \| \cM), \label{eq:sdp Dsharp 1}\\
\newQ_{\alpha}(\cN \| \cM) 
&\coloneqq \min_{A_{XY} \geq 0} \| \tr_{Y}(A_{XY}) \|_{\infty}  \quad \textup{ s.t. } \quad  (J^{\cN}_{XY}) \leq (J^{\cM}_{XY}) \#_{1/\alpha} A_{XY} \ , \label{eq:sdp Dsharp 2}
\end{align}
where $\| . \|_{\infty}$ denotes the operator norm. 

The generalization of $\newD_{\alpha}$ to channels is subadditive under tensor products~\cite{Omar_20}: For any $\alpha \in (1,\infty)$, quantum channels $\cN_1,\cN_2$, and subchannels $\cM_1,\cM_2$, we have
\begin{align*}
\newD_\alpha (\cN_1 \otimes \cN_2\|\cM_1 \otimes \cM_2) \leq \newD_\alpha (\cN_1 \|\cM_1) + \newD_\alpha (\cN_2 \|\cM_2) \, .
\end{align*}


\section{Tools for efficiently representing structured convex programs} \label{sec:Tools}

In this section, we provide the necessary mathematical background on how symmetries in a convex optimization problem can be utilized to represent the program efficiently, we refer the interested reader to references such as~\cite{Bachoc2012} for more information.  

\subsection{Matrix $*$-algebra background}
A subset $\cA$ of the set of all $n\times n$ complex matrices is said to be a \emph{matrix $*$-algebra} over $\CC$, if it contains the identity operator and is closed under addition, scalar multiplication, matrix multiplication, and taking the conjugate transpose. For our applications, the structure in the optimization programs we consider will allow us to assume that the variables live in such an algebra. A map $\varphi:\cA \to \cB$ between two matrix $*$-algebras $\cA$ and $\cB$ is called a $*$-isomorphism if
\begin{itemize}
	\item $\varphi$ is a linear bijection,
	\item $\varphi(AB) = \varphi(A)\varphi(B)$ for all $A,B \in \cA$,
	\item $\varphi(A^{*}) = \varphi(A)^{*}$ for all $A \in \cA$.
\end{itemize}
The matrix algebras $\cA$ and $\cB$ are called \emph{isomorphic} and we write $\cA \cong \cB$. Note that, by the second property above, $*$-isomorphisms preserve positive semidefiniteness. From a standard  result in the theory of matrix $*$-algebra, we get the following structure theorem. 
\begin{theorem} [Theorem 1,\cite{Gijswijt_thesis}] \label{thm:*-isomorphism decomposition}
	Let $\cA\subseteq \CC^{n\times n}$ be a matrix $*$-algebra. There are numbers $t$, $m_1, \ldots, m_t$ such that there is a $*$-isomorphism $\phi$ between $\cA$ and a direct sum of complete matrix algebras
	\begin{align}
		\phi \; : \; \cA \rightarrow \bigoplus_{i=1}^{t} \CC^{m_i \times m_i}\enspace.
	\end{align}
\end{theorem}

In other words, under the mapping $\phi$, all the elements of $\cA$ have a common block-diagonal structure. Moreover, this is the finest such decomposition for a generic element of $\cA$. We remark that the $*$-isomorphism $\phi$ can be computed in polynomial time in the dimension of the matrix $*$-algebra $\cA$ (see e.g., Theorem~$2.7$ in Ref.~\cite{Bachoc2012} and the following discussion, or  Ref.~\cite{Gijswijt_thesis}).

\subsubsection*{Regular $*$-representation}
In general, computing the block-diagonal decomposition above and the corresponding mapping is a non-trivial procedure. In this section, we introduce a simpler $*$-isomorphism which embeds $\cA$ into $\CC^{m\times m}$, where $m=\dim \cA$.

Let $\cA$ be a matrix $*$-algebra of dimension $m$ and $\cC=\{C_1,\ldots,C_m\}$ be an orthonormal basis for $\cA$ with respect to the Hilbert-Schmidt inner product. Let $L$ be the linear map defined for every $A\in \cA$ by the left-multiplication by $A$. Consider the matrix representation of $L$ with respect to the orthonormal basis $\cC$. For every $A\in \cA$,  $L(A)$ is represented by an $m\times m$ complex matrix given by $L(A)_{ij}=\langle C_i,AC_j \rangle$, for every $i,j\in[m]$. The map $L \,:\, \cA \rightarrow \CC^{m\times m}$, is called the \emph{regular $*$-representation} of $\cA$ associated with the orthonormal basis $\cC$. 
Since $L$ is a linear map, it is completely specified by its image for the elements of the basis $\cC$. Let $(p_{rs}^{t})_{r,s,t\in [m]}$ be the \emph{multiplication parameters} of $\cA$ with respect to the basis $\cC$ defined by $C_rC_s = \sum_{t=1}^{m} p_{rs}^{t} C_t$. Then, $L(C_r)_{ij}=p_{rj}^i$, for every $r\in[m]$.
\begin{theorem}[\cite{Klerk_07}] \label{thm: regular_rep}
	Let $\cL$ be the matrix $*$-algebra generated by the matrices $L(C_1),\dots,L(C_m)$. Then the map $\psi$ defined as
	\begin{align}
		\psi: \cA \rightarrow \cL \;,\; \psi(C_r) = L(C_r) \;,\; r \in[m]\enspace, \label{eq:regular rep}
	\end{align}
	is a $*$-isomorphism.
\end{theorem}
Note that under the $*$-isomorphism $\psi$ of Theorem~\ref{thm: regular_rep}, for $\cA \subseteq \CC^{n\times n}$, the matrix dimensions are reduced from $n\times n$ to $m\times m$, whereas the $*$-isomorphism $\phi$ of Theorem~\ref{thm:*-isomorphism decomposition} provides a fine block-diagonal decomposition into $t$ blocks where the block matrix $i$ is of size $m_i\times m_i$, satisfying $m = m_1^2+\ldots+m_t^2$.

\subsection{Representation theory}

  We recall some basic facts in representation theory of finite groups. For further details, we refer the reader to Refs.~\cite{rep_1977} and~\cite{Fulton1991}. Let $\cH$ be a finite dimensional complex Hilbert space and $G$ be a finite group. A \emph{linear representation} of $G$ on $\cH$ is a group homomorphism $\varrho: G \to \GL(\cH)$, where $\GL(\cH)$ is the general linear group on $\cH$. The space $\cH$ is called a \emph{$G$-module}. For $v \in \cH$ and $g\in G$, we write $g\cdot v$ as shorthand for $\varrho(g)v$. For $X\in\Lin(\cH)$, the action of $g\in G$ on $X$ is given by $\varrho(g)X\varrho(g)^*$.
     
   A representation $\varrho: G\to \GL(\cH)$ of $G$ is called \emph{irreducible} if it contains no proper submodule $\cH'$ of $\cH$ such that $g\cH' \subseteq \cH'$. Let $\cH$ and $\cH'$ be $G$-modules, a linear map $\psi: \cH \to \cH'$  is called a \emph{$G$-equivariant map}  if $g \cdot \psi(v) = \psi(g \cdot v)$ for all $g \in G, v\in \cH$.  Two $G$-modules $\cH$ and $\cH'$ are called \emph{$G$-isomorphic}, write $\cH \cong \cH'$, if there is a bijective equivariant map from $\cH$ to $\cH'$. We denote by $\End^{G}(\cH)$, the set of all $G$-equivariant maps from $\cH$ to $\cH$, i.e.,  
\begin{align*} 
	\End^{G}(\cH) = \{T \in \Lin(\cH): T(g\cdot v) = g\cdot T(v), \forall v \in \cH, g \in G \}.  
\end{align*}

Let $G$ be a finite group acting on a finite dimensional complex vector space $\cH$. Then the space $\cH$ can be decomposed as $\cH = \cH_1 \oplus \dots \oplus \cH_t$ such that each $\cH_i$ is a direct sum $\cH_{i,1} \oplus \dots \oplus \cH_{i,m_i}$ of irreducible $G$-modules with the property that $\cH_{i,j} \cong \cH_{i',j'}$ if and only if $i=i'$. The $G$-modules $\cH_1,\dots,\cH_t$ are called the \emph{$G$-isotypical components} and $(m_1,\dots,m_t)$ are called the \emph{multiplicities} of the corresponding irreducible representations.
\par 
It is straightforward to see that $\End^{G}(\cH)$ corresponds to the subset of $G$-invariant matrices and has the structure of a matrix $*$-algebra. For $\cA=\End^{G}(\cH)$, the structural parameters of Theorem~\ref{thm:*-isomorphism decomposition} have a representation theoretic interpretation. In particular, the number of the direct summands $t$ corresponds to the number of isomorphism classes of irreducible $G$-submodules and $m_i$ is the multiplicity of the irreducible $G$-submodules in class $i$. 
\par  
For each $i \in [t]$ and $j \in [m_i]$, let $u_{i,j} \in \cH_{i,j}$ be a nonzero vector such that for each $i$ and all $j,j' \in [m_i]$, there is a bijective $G$-equivariant map from $\cH_{i,j}$ to $\cH_{i,j'}$ that maps $u_{i,j}$ to $u_{i,j'}$. For $i \in [t]$, we define a matrix $U_i$ as $[u_{i,1},\dots,u_{i,m_i}]$, with $u_{i,j}$ forming the $j$-th column of $U_i$. 
The matrix set $\{U_1,\dots,U_t\}$ obtained in this way is called a \emph{representative} for the action of $G$ on $\cH$. The columns of the matrices $U_i$ can be viewed as elements of the dual space $\cH^{*}$ (by taking the standard inner product). Then each $U_i$ is an ordered set of linear functions on $\cH$.

Since $\cH_{i,j}$ is the linear space spanned by $G \cdot u_{i,j}$ (for each $i,j$), we have
\begin{align*}
	\cH = \bigoplus_{i=1}^{t}\bigoplus_{j=1}^{m_i} \CC G \cdot u_{i,j} \, ,
\end{align*}
 where $\CC G=\cbr{\sum_{g\in G}\alpha_g g: \alpha_g\in \CC}$ denotes the complex group algebra of $G$. Moreover, note that
 \begin{align} \label{eq:dim_END vs multiplicites}
 	\dim \End^{G}(\cH) = \dim \End^{G}\left(  \bigoplus_{i=1}^{t}\bigoplus_{j=1}^{m_i} \cH_{i,j}  \right) = \sum_{i=1}^{t} m_{i}^2 \, .
 \end{align}

Note that with the action of the finite group $G$ on the space $\cH$, any inner product $\< \, , \>$ on $\cH$ gives rise to a $G$-invariant inner product $\< \, ,\>_{G}$ on $\cH$ via the rule $\<x,y\>_{G} \coloneqq \frac{1}{|G|}\sum_{g \in G}\<g \cdot x,g\cdot y\>$. Let $\<\, ,\>$ be a $G$-invariant inner product on $\cH$ and $\{U_1,\dots,U_t\}$ be a representative for the action of $G$ on $\cH$. Consider the linear map $\phi : \End^{G}(\cH) \to \bigoplus_{i=1}^{t}\CC^{m_i \times m_i}$ defined as
\begin{align}
	\label{eq:block_diagonal_matrix}
	\phi(A) \coloneqq \bigoplus_{i=1}^{t} \left( \<Au_{i,j'},u_{i,j}\>\right)_{j,j'=1}^{m_i} \;,\; \forall A\in \End^{G}(\cH) \enspace.
\end{align} 
For $i \in [t]$ and $A\in \End^{G}(\cH)$, we denote the matrix $\left( \<Au_{i,j'},u_{i,j}\>\right)_{j,j'=1}^{m_i}$ corresponding to the $i$-th block of $\phi(A)$ by $\llbracket \phi(A) \rrbracket_i$. 
\begin{lemma}[Proposition 2.4.4, \cite{Polak_thesis}]
	\label{lemma:psd_preserving}
	The linear map $\phi$ of Eq.~\eqref{eq:block_diagonal_matrix} is bijective and for every $A \in \End^{G}(\cH)$, we have $A \geq 0$ if and only if $\phi(A) \geq 0$. Moreover, there is a unitary matrix $U$ such that
	\begin{align*}
			U^{*}AU = \bigoplus_{i=1}^{t} \bigoplus_{j=1}^{d_i} \llbracket \phi(A) \rrbracket_i \;,\; \forall A\in \End^{G}(\cH) \enspace ,
	\end{align*}
	where $d_{i} = \dim(\cH_{i,1})$, for every $i \in [t]$.
\end{lemma}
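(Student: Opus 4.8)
The strategy is to deduce all three assertions from Schur's lemma over $\CC$ applied to the isotypical decomposition, handling the inner-product bookkeeping by choosing the decomposition orthogonally. Passing to a convenient representative if necessary, I would first assume that the decompositions $\cH=\bigoplus_{i}\cH_i$ and $\cH_i=\bigoplus_{j=1}^{m_i}\cH_{i,j}$ are orthogonal for the given $G$-invariant inner product $\<\,,\>$, that each $u_{i,j}$ is a unit vector, and that for each $i$ the $G$-equivariant bijection $\vartheta_i^{\,j\to j'}\colon\cH_{i,j}\to\cH_{i,j'}$ sending $u_{i,j}$ to $u_{i,j'}$ is an isometry with $\vartheta_i^{\,j'\to j}\circ\vartheta_i^{\,1\to j'}=\vartheta_i^{\,1\to j}$. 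All of this can be arranged: the orthogonal complement of a $G$-submodule inside a $G$-submodule is again a $G$-submodule (since $\<\,,\>$ is $G$-invariant), any $G$-equivariant isomorphism $\psi$ between irreducibles can be rescaled to an isometry because $\psi^*\psi\in\End^G(\cdot)=\CC\,\mathrm{id}$ is a positive scalar, and the composition identity among the $\vartheta_i^{\,\cdot\to\cdot}$ follows from Schur's lemma as both sides agree on $u_{i,1}$.

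First I would write $A\in\End^{G}(\cH)$ in block form with respect to $\cH=\bigoplus_{i,j}\cH_{i,j}$. The block $\cH_{i',j'}\to\cH_{i,j}$ is $G$-equivariant, hence vanishes when $i\neq i'$ (nonisomorphic irreducibles) and, when $i=i'$, equals $c^i_{j,j'}\,\vartheta_i^{\,j'\to j}$ for a scalar $c^i_{j,j'}$, because conjugating it by the reference isometries lands in $\End^{G}(\cH_{i,1})=\CC\,\mathrm{id}$. Thus $A\mapsto(C^i)_{i}$ with $C^i:=(c^i_{j,j'})_{j,j'=1}^{m_i}$ is linear and injective from $\End^{G}(\cH)$ into $\bigoplus_{i=1}^{t}\CC^{m_i\times m_i}$, hence bijective since both spaces have dimension $\sum_i m_i^2$ by~\eqref{eq:dim_END vs multiplicites}. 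To identify this with $\phi$ I would compute $Au_{i,j'}=\sum_j c^i_{j,j'}u_{i,j}$ using $\vartheta_i^{\,j'\to j}(u_{i,j'})=u_{i,j}$; orthonormality of $\{u_{i,\cdot}\}$ then gives $\<Au_{i,j'},u_{i,j}\>=c^i_{j,j'}$, so $\llbracket\phi(A)\rrbracket_i=C^i$ and $\phi$ is the bijection just described. This proves bijectivity.

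For the block-diagonalization, fix for each $i$ an orthonormal basis $e^i_1,\dots,e^i_{d_i}$ of $\cH_{i,1}$ (so $d_i=\dim\cH_{i,1}$) and set $e^i_{j,l}:=\vartheta_i^{\,1\to j}(e^i_l)$, an orthonormal basis of $\cH_{i,j}$. From $A|_{\cH_{i,j'}\to\cH_{i,j}}=c^i_{j,j'}\vartheta_i^{\,j'\to j}$ and $\vartheta_i^{\,j'\to j}(e^i_{j',l})=e^i_{j,l}$ one gets $Ae^i_{j',l}=\sum_j c^i_{j,j'}e^i_{j,l}$, so if the orthonormal basis of $\cH$ is ordered first by $i$, then by $l\in[d_i]$, then by $j\in[m_i]$, the matrix of $A$ in this basis is exactly $\bigoplus_{i=1}^{t}\bigoplus_{l=1}^{d_i}C^i=\bigoplus_{i=1}^{t}\bigoplus_{l=1}^{d_i}\llbracket\phi(A)\rrbracket_i$. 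Taking $U$ to be the (fixed, $A$-independent) unitary carrying the standard basis of $\CC^{\dim\cH}$ to this ordered orthonormal basis gives the displayed identity. The positivity equivalence is then immediate: $A\geq 0$ iff $U^{*}AU\geq 0$ iff each Hermitian block $\llbracket\phi(A)\rrbracket_i\geq 0$ iff $\phi(A)=\bigoplus_i\llbracket\phi(A)\rrbracket_i\geq 0$ (using that $\phi(A^{*})=\phi(A)^{*}$, so $A$ Hermitian iff $\phi(A)$ is).

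The mathematical content is entirely Schur's lemma; everything else is bookkeeping. The one step that requires genuine care is the normalization at the outset: the clean statement — in particular the existence of the unitary $U$ exactly as written — relies on the representative being ``standard'' (orthogonal decomposition, unit reference vectors, isometric reference isomorphisms), so one should either restrict to such representatives, which is what the applications use, or verify that a general representative differs from a standard one by an invertible transformation on each multiplicity block $\CC^{m_i\times m_i}$ under which all three assertions remain valid. A secondary, purely notational, pitfall is ordering the orthonormal basis so that each block appears as the $d_i$-fold direct sum $\bigoplus_{l=1}^{d_i}\llbracket\phi(A)\rrbracket_i$ (i.e.\ $I_{d_i}\otimes\llbracket\phi(A)\rrbracket_i$) rather than $\llbracket\phi(A)\rrbracket_i\otimes I_{d_i}$.
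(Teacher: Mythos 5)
The paper does not give its own proof of this lemma; it is cited verbatim as Proposition~2.4.4 of Polak's thesis, so there is nothing in the paper to compare against. Your argument is the standard Schur-lemma proof and it is correct: the orthogonal isotypical refinement and the rescaling of the equivariant reference isomorphisms to isometries are both legitimate (the orthogonal complement of a $G$-submodule is a $G$-submodule because $\<\,,\>$ is $G$-invariant, and $\psi^*\psi$ is a positive scalar by Schur), the composition identity $\vartheta_i^{j'\to j}\circ\vartheta_i^{1\to j'}=\vartheta_i^{1\to j}$ does follow from Schur plus agreement on the nonzero vector $u_{i,1}$, the block computation $Au_{i,j'}=\sum_j c^i_{j,j'}u_{i,j}$ correctly identifies your $C^i$ with $\llbracket\phi(A)\rrbracket_i$, and the dimension count via Eq.~\eqref{eq:dim_END vs multiplicites} gives bijectivity. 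The ordering of the orthonormal basis (first by $i$, then by $l\in[d_i]$, then by $j\in[m_i]$) is exactly what produces $\bigoplus_i\bigoplus_{l=1}^{d_i}\llbracket\phi(A)\rrbracket_i$.

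You are also right to single out the normalization hypothesis as the one genuine subtlety, but your proposed fallback is slightly too generous for the third assertion. If the $u_{i,j}$ are not unit vectors (or the $\vartheta_i^{j\to j'}$ not isometries), then $\phi$ is still a bijection and still preserves positive semidefiniteness — it differs from the ``standard'' $\phi$ by a $*$-congruence $C^i\mapsto T_i^*C^iT_i$ with $T_i$ invertible diagonal — but the matrix intertwining $A$ with the block form is then only invertible, not unitary, so the displayed identity $U^*AU=\bigoplus_i\bigoplus_j\llbracket\phi(A)\rrbracket_i$ with $U$ unitary genuinely fails (take $A=\id$ and $\|u_{i,1}\|\neq\|u_{i,2}\|$ to see $\llbracket\phi(\id)\rrbracket_i\neq\id$). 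So the clean statement really does presuppose the orthonormal, isometric normalization, which is what Polak's thesis uses; your proof is complete once you commit to that normalization rather than treating it as optional.
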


Lemma~\ref{lemma:psd_preserving} plays a very important role in our symmetry reductions. Note that $\dim (\End^{G}(\cH) ) = \sum_{i=1}^{t}m_{i}^{2}$ can be significantly smaller than $\dim \cH$. Moreover, by this lemma, for any $A\in \End^{G}(\cH)$, the task of checking whether $A$ is a positive semidefinite matrix can be reduced to checking if the smaller $m_i\times m_i$ matrices $\llbracket \phi(A) \rrbracket_i$ are positive semidefinite, for every $i\in[t]$. The mapping $\phi$ in Eq.~\eqref{eq:block_diagonal_matrix} is a special case of the $*$-isomorphism of Theorem~\ref{thm:*-isomorphism decomposition}, where $\cA$ is the matrix $*$-algebra $\End^{G}(\cH)$.

\subsection{Representation theory of the symmetric group}

Fix $k \in \NN$ and a finite-dimensional vector space $\cH$ with $\dim(\cH) = d$. We consider the natural action of the symmetric group $\mathfrak{S}_k$ on $\cH^{\otimes k}$ by permuting the indices, i.e.,
\begin{align*}
	\pi \cdot (h_1 \otimes \dots \otimes h_k) = h_{\pi^{-1}(1)}\otimes \dots \otimes h_{\pi^{-1}(k)} \, , h_i \in \cH \, , \forall \pi \in \mathfrak{S}_k \, .
\end{align*}

Based on classical representation theory of the symmetric group, we describe a representative set for the action of $\mathfrak{S}_k$ on $\cH^{\otimes k}$. The concepts and notation we introduce in this section will be used throughout this paper.

A \emph{partiton} $\lambda$ of $k$ is a sequence $(\lambda_1,\dots,\lambda_d)$ of natural numbers with $\lambda_1 \geq \dots \lambda_d>0$ and $\lambda_1+\dots+\lambda_d = k$. The number $d$ is called the \emph{height} of $\lambda$. We write $\lambda \vdash_{d}k$ if $\lambda$ is a partition of $k$ with height $d$. Let $\mathrm{Par}(d,k) \coloneqq \{ \lambda\,:\,\lambda \vdash_{d} k \}$. The \emph{Young shape} $Y(\lambda)$ of $\lambda$ is the set
\begin{align*}
	Y(\lambda)\coloneqq \{(i,j) \in \NN^2: 1 \leq j \leq d, 1 \leq i \leq \lambda_j \} \, .
\end{align*}

Following the French notation~\cite{procesi2007lie}, for an index $j_0 \in [d]$, the $j_0$-th \emph{row} of $Y(\lambda)$ is set of elements $(i,j_0)$ in $Y(\lambda)$. Similarly, fixing an element $i_0 \in [\lambda_1]$, the $i_0$-th \emph{column} of $Y(\lambda)$ is set of elements $(i_0,j)$ in $Y(\lambda)$. We label the elements in $Y(\lambda)$ from $1$ to $k$ according the lexicographic order on their positions. Then the \emph{row stabilizer} $R_{\lambda}$ of $\lambda$ is the group of permutations $\pi$ of $Y(\lambda)$ with $\pi(L) = L$ for each row $L$ of $Y(\lambda)$. Similarly, the \emph{column stabilizer} $C_{\lambda}$ of $\lambda$ is the group of permutations $\pi$ of $Y(\lambda)$ with $\pi(L) = L$ for each column $L$ of $Y(\lambda)$. 
\par 
For $\lambda \vdash_{d} k$, a \emph{$\lambda$-tableau} is a function $\tau: Y(\lambda) \to \NN$. A $\lambda$-tableau is \emph{semistandard} if the entries are non-decreasing in each row and strictly increasing in each column. Let $T_{\lambda,d}$ be the collection of semistandard $\lambda$-tableaux with entries in $[d]$. We write $\tau \sim \tau'$ for $\lambda$-tableaux $\tau,\tau'$ if $\tau' = \tau r$ for some $r \in R_{\lambda}$. Let $e_1,\dots,e_d$ be the standard basis of $\cH$. For any $\tau \in T_{\lambda,d}$, define $u_{\tau}\in \cH^{\otimes k}$ as
\begin{align}
	u_{\tau} \coloneqq \sum_{\tau' \sim \tau}\sum_{c \in C_{\lambda}} \mathrm{sgn}(c) \bigotimes_{y \in Y(\lambda)}e_{\tau'(c(y))} \, .
\end{align}
Here the Young shape $Y(\lambda)$ is ordered by concatenating its rows. Then the matrix set
\begin{align}
	\label{eq:symmetric_representative_set}
	\{U_{\lambda}: \lambda \vdash_{d} k \} \, \, \text{ with } U_{\lambda} = [u_{\tau}: \tau \in T_{\lambda,d}]
\end{align}

is a representative for the natural action of $\mathfrak{S}_k$ on $\cH^{\otimes k}$ \cite[Section 2.1]{litjens2017semidefinite}. Moreover, we have
\begin{align}
	\label{eq:number_partitions}
	|\mathrm{Par}(d,k)| \leq (k+1)^d \text{ and } |T_{\lambda,d}| \leq (k+1)^{d(d-1)/2} \, \, , \forall \lambda \in \mathrm{Par}(d,k) \enspace.
\end{align}
   

\section{Efficient approximation of the regularized divergence of channels}
\label{section:4}
For $\alpha \in (1,\infty)$, the regularized sandwiched $\alpha$-R\'enyi divergence between channels $\cN_{X \to Y}$ and $\cM_{X \to Y}$ is defined as
\begin{align}
\label{eq:def_reg_div}
\widetilde{\rD}_{\alpha}^{\reg}(\cN \| \cM) := \lim_{k \to \infty} \frac{1}{k} \widetilde{\rD}_{\alpha}(\cN^{\otimes k} \| \cM^{\otimes k}) \ .
\end{align}

The regularized sandwiched R\'enyi divergence between channels can be used to obtain improved characterization of many information processing tasks such as channel discrimination~\cite{Fang_19,Omar_20}. However, the sandwiched R\'enyi divergence between channels is non-additive in general \cite{fang2020chain} and it is unclear whether its regularization can be computed efficiently. Ref.~\cite{Omar_20} provides a converging hierarchy of upper bounds on the regularized divergence between channels:
\begin{theorem}[\cite{Omar_20}]
	\label{thm:comp_reg_sand}
	Let $\alpha \in (1, \infty)$ and $\cN, \cM$ be completely positive maps from $\Lin(X)$ to $\Lin(Y)$. Then for any $k \geq 1$,
	\begin{align*}
	\frac{1}{k} \newD_{\alpha}(\cN^{\otimes k} \| \cM^{\otimes k}) - \frac{1}{k} \frac{\alpha}{\alpha-1} (d^2+d) \log (k+d) \leq \: \widetilde{\rD}_{\alpha}^{\reg}(\cN \| \cM) \: \leq \frac{1}{k} \newD_{\alpha}(\cN^{\otimes k} \| \cM^{\otimes k}) \ ,
	\end{align*}
	where $d = \dim X \dim Y$.
\end{theorem}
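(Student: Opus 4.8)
The plan is to prove the two inequalities separately; the right-hand inequality is elementary, and essentially all the work is in the left-hand one.

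\emph{Upper bound.} By Proposition~\ref{prop:compair_divergences} we have $\widetilde{\rD}_{\alpha}(\rho\|\sigma)\le\newD_{\alpha}(\rho\|\sigma)$ for states $\rho$ and positive $\sigma$, and taking the supremum over (purifications of) input states gives $\widetilde{\rD}_{\alpha}(\cN'\|\cM')\le\newD_{\alpha}(\cN'\|\cM')$ for any pair of CP maps. Applying this with $\cN'=(\cN^{\otimes k})^{\otimes n}$, $\cM'=(\cM^{\otimes k})^{\otimes n}$ and then using subadditivity of $\newD_{\alpha}$ under tensor products (Section~\ref{sec:prelim}) yields $\widetilde{\rD}_{\alpha}(\cN^{\otimes kn}\|\cM^{\otimes kn})\le n\,\newD_{\alpha}(\cN^{\otimes k}\|\cM^{\otimes k})$. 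Dividing by $kn$ and letting $n\to\infty$ (the limit in~\eqref{eq:def_reg_div} restricted to the subsequence $\{kn\}_{n}$ still equals $\widetilde{\rD}_{\alpha}^{\reg}(\cN\|\cM)$) gives $\widetilde{\rD}_{\alpha}^{\reg}(\cN\|\cM)\le\tfrac1k\newD_{\alpha}(\cN^{\otimes k}\|\cM^{\otimes k})$.

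\emph{Reduction of the lower bound.} Using product inputs together with additivity of $\widetilde{\rD}_{\alpha}$ on tensor-product operators, the stabilized channel divergence $\widetilde{\rD}_{\alpha}(\cdot\|\cdot)$ is superadditive under tensor products, so by Fekete $\widetilde{\rD}_{\alpha}^{\reg}(\cN\|\cM)=\sup_{m}\tfrac1m\widetilde{\rD}_{\alpha}(\cN^{\otimes m}\|\cM^{\otimes m})$; in particular $\tfrac1k\widetilde{\rD}_{\alpha}(\cN^{\otimes k}\|\cM^{\otimes k})\le\widetilde{\rD}_{\alpha}^{\reg}(\cN\|\cM)$. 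Hence it suffices to establish the one-shot comparison
\[
\newD_{\alpha}(\cN^{\otimes k}\|\cM^{\otimes k})\;\le\;\widetilde{\rD}_{\alpha}(\cN^{\otimes k}\|\cM^{\otimes k})+\frac{\alpha}{\alpha-1}\,(d^{2}+d)\log(k+d),\qquad d=\dim X\dim Y,
\]
since dividing this by $k$ and combining with $\tfrac1k\widetilde{\rD}_{\alpha}(\cN^{\otimes k}\|\cM^{\otimes k})\le\widetilde{\rD}_{\alpha}^{\reg}(\cN\|\cM)$ gives the left-hand inequality of the theorem.

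\emph{Proof of the one-shot comparison, and the main obstacle.} Work in the convex program~\eqref{eq:sdp Dsharp 2} for $\newQ_{\alpha}(\cN^{\otimes k}\|\cM^{\otimes k})$, in which the Choi operators $J^{\cN^{\otimes k}}=(J^{\cN})^{\otimes k}$ and $J^{\cM^{\otimes k}}=(J^{\cM})^{\otimes k}$ on $(X\otimes Y)^{\otimes k}$ are invariant under the natural action of $\mathfrak{S}_{k}$ permuting the $k$ copies of $X\otimes Y$. Given a feasible $A$, the average $\bar A=\tfrac1{k!}\sum_{\pi\in\mathfrak S_{k}}\pi A\pi^{*}$ is again feasible --- one uses the transformer equality for the unitaries $\pi$, the scaling identity $(cX)\#_{1/\alpha}(cY)=c\,(X\#_{1/\alpha}Y)$, and joint subadditivity of $\#_{1/\alpha}$ --- and $\|\tr_{Y^{k}}\bar A\|_{\infty}\le\|\tr_{Y^{k}}A\|_{\infty}$ because the $\mathfrak S_{k}$-action on $X^{k}Y^{k}$ restricts under $\tr_{Y^{k}}$ to the permutation action on $X^{k}$ and $\|\cdot\|_{\infty}$ is unitarily invariant. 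Thus we may take $A\in\End^{\mathfrak S_{k}}((X\otimes Y)^{\otimes k})$; by the representation theory of $\mathfrak{S}_k$ (Section~\ref{sec:Tools}) $A$, together with $(J^{\cN})^{\otimes k}$ and $(J^{\cM})^{\otimes k}$, is then block-diagonal along the isotypic decomposition of $(X\otimes Y)^{\otimes k}$, which has at most $|\mathrm{Par}(d,k)|\le(k+1)^{d}$ blocks with multiplicity spaces of dimension $\le(k+1)^{d(d-1)/2}$ by~\eqref{eq:number_partitions}, and the constraint $(J^{\cN})^{\otimes k}\le(J^{\cM})^{\otimes k}\#_{1/\alpha}A$ decouples across blocks by the direct-sum property of $\#_{1/\alpha}$. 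One then pinches the constraint onto the joint spectral decomposition of $(J^{\cM})^{\otimes k}$ --- which, $J^{\cM}$ being $d$-dimensional, has at most $\binom{k+d-1}{d-1}$ distinct eigenvalues --- using that pinching is monotone and, when it fixes the first argument, satisfies $\cP(\sigma\#_{1/\alpha}A)\le\sigma\#_{1/\alpha}\cP(A)$ (transformer inequality plus joint subadditivity), and absorbs the resulting factor via the pinching inequality $Z\le|\spec((J^{\cM})^{\otimes k})|\cdot\cP(Z)$ and the scaling identity. After this reduction the operators entering the feasibility condition are simultaneously aligned, so $\#_{1/\alpha}$ reduces to a classical weighted geometric mean and the reduced problem is controlled by the program defining $\widetilde{Q}_{\alpha}(\cN^{\otimes k}\|\cM^{\otimes k})$ (equivalently, $\widetilde{\rD}_{\alpha}$ can only decrease under the induced pinching channel by data processing), while the accumulated multiplicative loss is $(k+d)^{O(d^{2}+d)}$; since a factor $c$ on the first argument costs $c^{\alpha}$ in $\newQ_{\alpha}$ and $\newD_{\alpha}=\tfrac1{\alpha-1}\log\newQ_{\alpha}$, this becomes the additive term $\tfrac{\alpha}{\alpha-1}(d^{2}+d)\log(k+d)$. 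The main obstacle is precisely this last step: partial trace over $Y^{k}$ commutes neither with the Schur--Weyl block decomposition of $(X\otimes Y)^{\otimes k}$ nor with the pinching onto the eigenspaces of $(J^{\cM})^{\otimes k}$, so one must argue carefully that the objective $\|\tr_{Y^{k}}(\cdot)\|_{\infty}$ degrades by only the advertised polynomial-in-$(k+d)$ factor while the constraint is driven down to the sandwiched program, and that the de Finetti-type/post-selection counting combines with the pinching to give the exponent $d^{2}+d$; the remaining ingredients --- the easy direction, superadditivity of $\widetilde{\rD}_{\alpha}$, and the Fekete/limit arguments --- are routine.
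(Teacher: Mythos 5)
A structural note first: this paper does not prove Theorem~\ref{thm:comp_reg_sand}; it cites it directly from~\cite{Omar_20}, so there is no in-paper proof to compare against, and your proposal must stand on its own.

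Your upper bound and your reduction of the lower bound are fine: $\widetilde{\rD}_{\alpha}\le\newD_{\alpha}$ together with subadditivity of the $\newD_{\alpha}$ channel divergence gives $\widetilde{\rD}_{\alpha}^{\reg}(\cN\|\cM)\le\tfrac1k\newD_{\alpha}(\cN^{\otimes k}\|\cM^{\otimes k})$ (one caveat: Proposition~\ref{prop:compair_divergences} in this paper is stated only for $\alpha\in(1,2]$, so for general $\alpha\in(1,\infty)$ you need to invoke the corresponding inequality $\widetilde{\rD}_{\alpha}\le\newD_{\alpha}$ from~\cite{Omar_20} directly), and superadditivity of $\widetilde{\rD}_{\alpha}(\cN^{\otimes m}\|\cM^{\otimes m})$ in $m$ plus Fekete correctly reduces the lower bound to the one-shot inequality $\newD_{\alpha}(\cN^{\otimes k}\|\cM^{\otimes k})\le\widetilde{\rD}_{\alpha}(\cN^{\otimes k}\|\cM^{\otimes k})+\tfrac{\alpha}{\alpha-1}(d^2+d)\log(k+d)$.

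But that one-shot inequality is the entire content of the theorem, and your treatment of it is not a proof. You sketch a strategy (symmetrize $A$, Schur--Weyl decompose, pinch onto $\spec((J^{\cM})^{\otimes k})$, reduce $\#_{1/\alpha}$ to a commuting geometric mean) and then explicitly name, as the ``main obstacle,'' the fact that $\tr_{Y^{\otimes k}}$ and the $\|\cdot\|_{\infty}$ objective commute with neither the Schur--Weyl block decomposition nor the pinching, and say that ``one must argue carefully'' that the loss is only the advertised factor. Deferring that argument is a gap, not a proof. Moreover, the counting you cite --- $|\mathrm{Par}(d,k)|\le(k+1)^{d}$ blocks, multiplicities $\le(k+1)^{d(d-1)/2}$, and $\le\binom{k+d-1}{d-1}$ distinct eigenvalues of $(J^{\cM})^{\otimes k}$ --- does not on its face combine to give the exponent $d^2+d$; the bookkeeping that would turn these counts into the stated constant (plausibly via $\dim\End^{\mathfrak{S}_k}((X\otimes Y)^{\otimes k})\le(k+1)^{d^2}$ rather than the per-block bound) is also missing. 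As written, the proposal establishes the easy direction and the reduction, but leaves the substantive inequality asserted rather than proved.
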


We note that $\frac{1}{k} \newD_{\alpha}(\cN^{\otimes k} \| \cM^{\otimes k})$ is decreasing in $k$ (since the $\newD$ channel divergence is subadditive). Moreover, $ \newD_{\alpha}(\cN^{\otimes k} \| \cM^{\otimes k})$ can be written in terms of a  convex program as (\cite{Omar_20})
\begin{align}
\label{sdp:regD_approximation}
 \frac{1}{\alpha-1} \log \min_{A_{X^{\otimes k}Y^{\otimes k}} \geq 0}   \| \tr_{Y^{\otimes k}}(A_{X^{\otimes k}Y^{\otimes k}}) \|_{\infty}  \quad \textup{ s.t. } \quad  (J^{\cN^{\otimes k}}) \leq (J^{\cM^{\otimes k}}) \#_{1/\alpha} A_{X^{\otimes k}Y^{\otimes k}} .
\end{align}
Therefore, Theorem~\ref{thm:comp_reg_sand} establishes that $\widetilde{\rD}_{\alpha}^{\reg}(\cN \| \cM)$ can be approximated by $\frac{1}{k}\newD_{\alpha}(\cN^{\otimes k}\|\cM^{\otimes k})$ with arbitrary accuracy for sufficiently large $k$ in finite time. Namely, if we take $k = \ceil{\frac{8\alpha d^3}{(\alpha-1)\epsilon}}$ then we have
\begin{align*}
	|\widetilde{\rD}_{\alpha}^{\reg}(\cN \| \cM)-\frac{1}{k} \newD_{\alpha}(\cN^{\otimes k} \| \cM^{\otimes k})| \leq \epsilon \, .
\end{align*} 
However, the size of Program~\eqref{sdp:regD_approximation} grows exponentially with $k$. 


\subsection{Exploiting symmetries to simplify the problem}

In this section, we will show how the symmetries of Program~\eqref{sdp:regD_approximation} can be used to simplify this optimization problem and solve it in time polynomial in $k$. We first focus on the natural symmetries arising due to invariance under permutation of physical systems. In Section~\ref{sec:Zsymmetry}, we show how additional symmetries can be utilized to further simplify the problem. Our approach can be summarized as follows: First, we show that program~\eqref{sdp:regD_approximation} is invariant with respect to the action of the symmetric group. Using this observation, we show that the program can be transformed into an equivalent program with polynomially many constraints, each of polynomial size in $k$. In order to show this, we use the block-diagonal decomposition given by Lemma~\ref{lemma:psd_preserving}. A naive implementation of this transformation, however, involves exponential time computations. We show that the simplified form of the program can be directly computed in $\poly(k)$ time. 

Recall that, for every $\pi \in \mathfrak{S}_k$, we consider the action of $\pi$ on $k$ copies of a finite dimensional Hilbert space $\cH$ as
\begin{align*}
\pi \cdot \left( h_1 \otimes \dots \otimes h_k \right) =  h_{\pi^{-1}(1)} \otimes \dots \otimes h_{\pi^{-1}(k)} \quad,\quad h_i \in \cH \,,\, \forall i\in [k] \enspace.
\end{align*}
 Let $P_X(\pi)$ and $P_Y(\pi)$ be the permutation matrices corresponding to the action of $\pi$ on $X^{\otimes k}$ and $Y^{\otimes k}$, respectively. Note that the action of $\pi$ on $(X\otimes Y)^{\otimes k}$ corresponds to the simultaneous permutation of the $X$ and $Y$ tensor factors and the corresponding permutation matrix, when the subsystems are reordered as $X^{\otimes k} \otimes Y^{\otimes k}$, is given by $P_{X\otimes Y}(\pi) = P_X(\pi) \otimes P_Y(\pi)$.

 The following lemma shows that the feasible region of the convex program \eqref{sdp:regD_approximation} may be restricted to the permutation invariant algebra of operators on $X^{\otimes{k}} \otimes Y^{\otimes{k}}$ without changing the optimal value.

 For a linear operator $X\in\Lin(\cH^{\otimes k})$, we define its \emph{group average} operator denoted $\overline{X}$ as 
 \begin{align*}
 	\overline{X} \coloneqq \frac{1}{|\mathfrak{S}_k|}\sum_{\pi \in \mathfrak{S}_k} P_\cH(\pi) X P_\cH(\pi)^{*} \enspace.
 \end{align*}
 
 \begin{lemma} \label{lem:inv_space_1}
 	The convex program of Eq.~\eqref{sdp:regD_approximation} has an optimal solution $A \in \End^{\mathfrak{S}_k} \br{ X^{\otimes k}\otimes Y^{\otimes k} }$.
 \end{lemma}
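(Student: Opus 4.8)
The statement to prove is Lemma~\ref{lem:inv_space_1}: the convex program~\eqref{sdp:regD_approximation} has an optimal solution that is permutation invariant, i.e.\ lies in $\End^{\mathfrak{S}_k}(X^{\otimes k}\otimes Y^{\otimes k})$. The standard strategy for such a statement is a symmetrization/averaging argument: starting from an arbitrary feasible (or near-optimal) $A$, I would show that its group average $\overline{A}$ is again feasible and has objective value no larger than that of $A$. Since $\overline{A}$ is by construction invariant under conjugation by every $P_{X\otimes Y}(\pi)=P_X(\pi)\otimes P_Y(\pi)$, it lies in $\End^{\mathfrak{S}_k}(X^{\otimes k}\otimes Y^{\otimes k})$, and this will give the claim (with a compactness remark if one wants an actual optimum rather than an infimizing sequence).

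\textbf{Key steps.} First I would record the symmetry of the data: because $\cN^{\otimes k}$ and $\cM^{\otimes k}$ are tensor powers, their Choi matrices satisfy $P_{X\otimes Y}(\pi)\,J^{\cN^{\otimes k}}\,P_{X\otimes Y}(\pi)^{*}=J^{\cN^{\otimes k}}$ and likewise for $\cM^{\otimes k}$ (the permutation of the $k$ tensor legs of the maximally entangled state intertwines the construction). Second, fix any feasible $A$ for~\eqref{sdp:regD_approximation}, i.e.\ $A\ge 0$ and $J^{\cN^{\otimes k}}\le J^{\cM^{\otimes k}}\#_{1/\alpha}A$. Conjugating the constraint by the unitary $P_{X\otimes Y}(\pi)$ and using the transformer equality for $\#_{1/\alpha}$ (property 2 of the $\alpha$-geometric mean, with equality since $M=P_{X\otimes Y}(\pi)$ is invertible) together with the invariance of the Choi matrices shows that $P_{X\otimes Y}(\pi)AP_{X\otimes Y}(\pi)^{*}$ is also feasible, for every $\pi$. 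Third, average over $\pi\in\mathfrak{S}_k$: by joint concavity/superadditivity of the geometric mean (property 4), $\frac1{|\mathfrak{S}_k|}\sum_\pi J^{\cM^{\otimes k}}\#_{1/\alpha}(P(\pi)AP(\pi)^{*}) \le J^{\cM^{\otimes k}}\#_{1/\alpha}\overline{A}$, so $\overline{A}$ satisfies $J^{\cN^{\otimes k}}\le J^{\cM^{\otimes k}}\#_{1/\alpha}\overline{A}$ and is feasible. Fourth, compare objectives: $\tr_{Y^{\otimes k}}$ commutes with the averaging in the sense that $\tr_{Y^{\otimes k}}(\overline{A})=\frac1{|\mathfrak{S}_k|}\sum_\pi P_X(\pi)\tr_{Y^{\otimes k}}(A)P_X(\pi)^{*}$, and since the operator norm is unitarily invariant and subadditive (convex), $\|\tr_{Y^{\otimes k}}(\overline{A})\|_\infty \le \|\tr_{Y^{\otimes k}}(A)\|_\infty$. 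Hence $\overline{A}$ is feasible with no larger objective value, and $\overline{A}\in\End^{\mathfrak{S}_k}(X^{\otimes k}\otimes Y^{\otimes k})$ by definition of the group average. Finally, since the feasible set is closed and the problem attains its optimum (one can also restrict to the bounded invariant region and invoke compactness), there is an optimal solution in the invariant algebra.

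\textbf{Main obstacle.} The only non-formal point is the interaction between the averaging and the geometric-mean constraint; everything hinges on having the \emph{right} direction in properties 2 and 4. The transformer \emph{equality} for the invertible unitaries $P(\pi)$ is what lets me move the constraint around without loss, and the joint superadditivity (property 4, $\sum_i A_i\#_\alpha B_i \le (\sum_i A_i)\#_\alpha(\sum_i B_i)$) applied with all $A_i=J^{\cM^{\otimes k}}$ equal is exactly what upgrades ``each conjugate is feasible'' to ``the average is feasible.'' I would double-check the invariance $P(\pi)J^{\cN^{\otimes k}}P(\pi)^{*}=J^{\cN^{\otimes k}}$ carefully, since it requires pairing the permutation on the reference systems $X^{\otimes k}$ with the one on $Y^{\otimes k}$ correctly; this is routine but is the place a sign/index error could creep in. The objective-value comparison and the conclusion that $\overline{A}$ is invariant are immediate.
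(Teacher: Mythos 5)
Your proposal is correct and follows essentially the same symmetrization argument as the paper: average an arbitrary feasible $A$ over $\mathfrak{S}_k$, use the permutation invariance of the Choi matrices together with the transformer equality and the joint-concavity/superadditivity of the weighted geometric mean to preserve feasibility, and use unitary invariance plus convexity of the operator norm to show the objective does not increase. The only cosmetic difference is that you first argue each conjugate $P(\pi)AP(\pi)^*$ is feasible before averaging, whereas the paper computes directly with $\overline{A}$, and you invoke compactness where the paper invokes Slater's condition for attainment; both are equivalent in substance.
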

 
 \begin{proof}
 	It is straightforward to check that by Slater's condition the optimal value is achieved by a feasible solution. We will prove that for every feasible solution $A$, the corresponding group-average operator $\overline{A}$ is a feasible solution with an objective value not greater than the original value.
 	
 	To simplify the notation, let $\Pi(\pi)\coloneqq P_{X\otimes Y}(\pi)$. We have
 	\begin{align}
 		\overline{A} \; \#_{1/\alpha} \; J^{\cM^{\otimes k}} &= 
 		\br{ \frac{1}{|\mathfrak{S}_k|} \sum_{\pi \in \mathfrak{S}_k} \Pi(\pi) A \Pi(\pi)^{*} } \#_{1/\alpha} \br{ \frac{1}{|\mathfrak{S}_k|} \sum_{\pi \in \mathfrak{S}_k} \Pi(\pi) J^{\cM^{\otimes k}} \Pi(\pi)^{*} } \label{eq:line_1}\\
 		&\geq \sum_{\pi \in \mathfrak{S}_k} \br{ \frac{1}{|\mathfrak{S}_k|} \Pi(\pi) A \Pi(\pi)^{*} } \#_{1/\alpha} \br{ \frac{1}{|\mathfrak{S}_k|} \Pi(\pi) J^{\cM^{\otimes k}} \Pi(\pi)^{*} } \label{eq:line_2}\\
 		&= \frac{1}{|\mathfrak{S}_k|} \sum_{\pi \in \mathfrak{S}_k} \Pi(\pi) \br{A\#_{1/\alpha} J^{\cM^{\otimes k}} } \Pi(\pi)^{*} \label{eq:line_3}\\
 		&\geq  \frac{1}{|\mathfrak{S}_k|} \sum_{\pi \in \mathfrak{S}_k} \Pi(\pi) \br{ J^{\cN^{\otimes k}} } \Pi(\pi)^{*} \label{eq:line_4}\\
 		& = J^{\cN^{\otimes k}} \label{eq:line_5}\enspace,     
 	\end{align}
 	where Eq.~\eqref{eq:line_1} holds since $J^{\cM^{\otimes k}} \in \End^{\mathfrak{S}_k} \br{ X^{\otimes k}\otimes Y^{\otimes k} }$, inequality~\eqref{eq:line_2} follows from the joint-concavity property of the geometric mean, Eq.~\eqref{eq:line_3} is a consequence of properties $2$ and $3$ of the geometric mean, inequality~\eqref{eq:line_4} holds by feasibility of $A$, and finally, Eq.~\eqref{eq:line_5} follows since  $J^{\cN^{\otimes k}} \in \End^{\mathfrak{S}_k} \br{ X^{\otimes k}\otimes Y^{\otimes k} }$.
 	
 	For the objective function, note that since $\Pi(\pi)=P_X(\pi)\otimes P_Y(\pi)$, we have 
 	\begin{align*}
 		\tr_{Y^{\otimes k}} \left(\Pi(\pi)A\Pi(\pi)^{T} \right) = P_X(\pi) \,\tr_{Y^{\otimes k}}(A)\, P_X(\pi)^{T} \enspace.
 	\end{align*}
 	Therefore, by the triangle inequality and the unitary invariance of the operator norm, we have
 	\begin{align*}
 		\| \tr_{Y^{\otimes k}} \left( \overline{A} \right) \|_{\infty} &=
 		\left\| \tr_{Y^{\otimes k}} \left(\frac{1}{|\mathfrak{S}_k|} \sum_{\pi \in \mathfrak{S}_k} \Pi(\pi)A\Pi(\pi)^{T} \right)\right\|_{\infty} \\  
 		&= \left\| \frac{1}{|\mathfrak{S}_k|}\sum_{\pi \in \mathfrak{S}_k}P(\pi_X)\tr_{Y^{\otimes k}}(A)P(\pi_X)^{T} \right\|_{\infty}\\
 		&\leq \frac{1}{|\mathfrak{S}_k|}\sum_{\pi \in \mathfrak{S}_k} \left\|P(\pi_X) \tr_{Y^{\otimes k}}(A)P(\pi_X)^{T} \right\|_{\infty}\\
 		&= \|\tr_{Y^{\otimes k}}(A)\|_{\infty}.
 	\end{align*}
 	This concludes the proof.
 \end{proof}
 
Recall that in the convex program~\eqref{sdp:regD_approximation}, the number of the variables and the size of the PSD constraints grow exponentially with $k$. Using the observation made in Lemma~\ref{lem:inv_space_1}, we show that this optimization problem can be transformed into a form having a number of variables and constraints that is polynomial in $k$. Before doing so, we introduce some notation.

Let $\cH \in \{X,Y,X\otimes Y\}$ and $d_\cH \coloneqq \dim \cH$. The algebra of $\mathfrak{S}_k$-invariant operators on $\cH^{\otimes k}$ is given by
\begin{align*}
	\End^{\mathfrak{S}_k} \left( \cH^{\otimes k} \right) &= \{A \in \Lin(\cH^{\otimes k}): P_\cH(\pi)\, A \, P_\cH(\pi)^* = A, \; \forall \pi \in \mathfrak{S}_k \} \, .
\end{align*}
Let $\phi_{\cH}$ denote the linear map defined in Eq.~\eqref{eq:block_diagonal_matrix} that maps the elements of $\End^{\mathfrak{S}_k} \left( \cH^{\otimes k} \right)$ into block-diagonal form:
\begin{align}
	\phi_{\cH}: \End^{\mathfrak{S}_k} \left( \cH^{\otimes k} \right) &\to \bigoplus_{\lambda \in \mathrm{Par}(d_{\cH},k)} \CC^{m_\lambda^\cH \times m_\lambda^\cH} \nonumber \\ 
	A &\mapsto \bigoplus_{\lambda\in \mathrm{Par}(d_{\cH},k)} \left(\<Au_{\gamma},u_{\tau}\>  \right)_{\tau,\gamma \in T_{\lambda,d_{\cH}}} \, .	\label{eq:phi_H}
\end{align} 
In this decomposition, the number of blocks and the size of the blocks are bounded by a polynomial in $k$. In particular, we have
\begin{align}
	t^\cH&\coloneqq |\mathrm{Par}(d_{\cH},k)| \leq (k+1)^{d_\cH} \enspace, \label{eq:partitions}\\
	m_\lambda^\cH  &\coloneqq |T_{\lambda,d_{\cH}}| \leq (k+1)^{d_\cH(d_\cH-1)/2} \;,\quad \forall \lambda \in \mathrm{Par}(d_{\cH},k) \; . \label{eq:dim_S_lambda}
\end{align} 
From Eqs. \eqref{eq:partitions} and \eqref{eq:dim_S_lambda}, we get 
\begin{equation}
	\label{eq:dim_Sk_inv}
    m^\cH\coloneqq \dim \left[ \End^{\mathfrak{S}_k}(\cH^{\otimes k}) \right] \leq (k+1)^{d_\cH^2}
\end{equation}

\begin{theorem}
  	\label{thm: computing_channel_divergence_1}
  	The channel divergence $\newD_{\alpha}(\cN^{\otimes k} \| \cM^{\otimes k})$ can be formulated as a convex program with $\mathrm{O}\!\br{k^{d^2}}$ variables and $\mathrm{O}\!\br{k^d}$ PSD constraints involving matrices of size at most $(k+1)^{d(d-1)/2}$, where $d=d_X d_Y$. 
\end{theorem}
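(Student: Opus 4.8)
The plan is to take the convex program~\eqref{sdp:regD_approximation}, restrict it to the permutation-invariant algebra using Lemma~\ref{lem:inv_space_1}, and then apply the block-diagonalization map $\phi_{X\otimes Y}$ of Eq.~\eqref{eq:phi_H} to the variable $A$ and the map $\phi_X$ to its marginal $\tr_{Y^{\otimes k}}(A)$. Concretely, by Lemma~\ref{lem:inv_space_1} we may assume $A \in \End^{\mathfrak{S}_k}(X^{\otimes k}\otimes Y^{\otimes k})$, so $A$ is described by the block-diagonal data $\phi_{X\otimes Y}(A) = \bigoplus_{\lambda\in\mathrm{Par}(d,k)} \llbracket\phi_{X\otimes Y}(A)\rrbracket_\lambda$, which amounts to $\sum_{\lambda} (m_\lambda^{X\otimes Y})^2$ real variables; by Eqs.~\eqref{eq:partitions}--\eqref{eq:dim_Sk_inv} this is $\mathrm{O}(k^{d^2})$ in total (bounding $(k+1)^{d^2}$). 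The constraint $A\geq 0$ becomes, by Lemma~\ref{lemma:psd_preserving}, the family of PSD constraints $\llbracket\phi_{X\otimes Y}(A)\rrbracket_\lambda \geq 0$, one for each $\lambda\in\mathrm{Par}(d,k)$: that is $\mathrm{O}(k^d)$ constraints on matrices of size $m_\lambda^{X\otimes Y}\leq(k+1)^{d(d-1)/2}$.

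Next I would handle the geometric-mean constraint $J^{\cN^{\otimes k}} \leq J^{\cM^{\otimes k}} \#_{1/\alpha} A$. The key observations are: (i) $J^{\cN^{\otimes k}}$ and $J^{\cM^{\otimes k}}$ are themselves $\mathfrak{S}_k$-invariant (they are tensor powers, hence permutation-invariant), so all three operators lie in the same algebra $\End^{\mathfrak{S}_k}(X^{\otimes k}\otimes Y^{\otimes k})$; (ii) by the direct-sum property of $\#_\alpha$ (property~5 in the preliminaries) together with Lemma~\ref{lemma:psd_preserving}, the geometric mean respects the block decomposition, i.e., $\llbracket\phi_{X\otimes Y}(B\#_{1/\alpha}A)\rrbracket_\lambda = \llbracket\phi_{X\otimes Y}(B)\rrbracket_\lambda \#_{1/\alpha} \llbracket\phi_{X\otimes Y}(A)\rrbracket_\lambda$; and (iii) the operator order $\leq$ is preserved and reflected by $\phi_{X\otimes Y}$. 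Hence the single constraint $J^{\cN^{\otimes k}} \leq J^{\cM^{\otimes k}} \#_{1/\alpha} A$ is equivalent to the $\mathrm{O}(k^d)$ constraints $\llbracket\phi_{X\otimes Y}(J^{\cN^{\otimes k}})\rrbracket_\lambda \leq \llbracket\phi_{X\otimes Y}(J^{\cM^{\otimes k}})\rrbracket_\lambda \#_{1/\alpha} \llbracket\phi_{X\otimes Y}(A)\rrbracket_\lambda$, each on matrices of size at most $(k+1)^{d(d-1)/2}$, and each expressible as an SDP constraint (for rational $\alpha$) of comparable size.

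For the objective $\|\tr_{Y^{\otimes k}}(A)\|_\infty$, note that the partial trace $\tr_{Y^{\otimes k}}$ maps $\End^{\mathfrak{S}_k}(X^{\otimes k}\otimes Y^{\otimes k})$ into $\End^{\mathfrak{S}_k}(X^{\otimes k})$, since it intertwines the permutation actions. So I would introduce the auxiliary block-diagonal variable $\phi_X(\tr_{Y^{\otimes k}}(A)) = \bigoplus_{\mu\in\mathrm{Par}(d_X,k)} \llbracket\phi_X(\tr_{Y^{\otimes k}}(A))\rrbracket_\mu$, add linear equality constraints expressing it as the (linear) image of $\phi_{X\otimes Y}(A)$ under the partial trace, and write $\|\tr_{Y^{\otimes k}}(A)\|_\infty = \max_\mu \|\llbracket\phi_X(\tr_{Y^{\otimes k}}(A))\rrbracket_\mu\|_\infty$ using Lemma~\ref{lemma:psd_preserving} (the eigenvalues of a block-diagonalizable operator are the union of the eigenvalues of its blocks). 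This is an SDP objective: minimize $t$ subject to $t\,I - \llbracket\phi_X(\tr_{Y^{\otimes k}}(A))\rrbracket_\mu \geq 0$ for all $\mu\in\mathrm{Par}(d_X,k)$, which adds only $\mathrm{O}(k^{d_X})$ further PSD constraints of size $\leq(k+1)^{d_X(d_X-1)/2}$, all dominated by the $X\otimes Y$ bounds. Collecting everything, the transformed program has $\mathrm{O}(k^{d^2})$ variables and $\mathrm{O}(k^d)$ PSD constraints on matrices of size at most $(k+1)^{d(d-1)/2}$.

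The main obstacle — and the only part that is not immediate from the representation-theoretic toolbox already set up — is item~(ii) above: verifying that the geometric mean $\#_{1/\alpha}$ commutes with the block-diagonalization $\phi_{X\otimes Y}$, i.e., that $\phi_{X\otimes Y}$ is not merely a $*$-isomorphism of algebras but also respects this non-linear operator-monotone operation. Since Lemma~\ref{lemma:psd_preserving} gives a unitary $U$ with $U^*AU = \bigoplus_i\bigoplus_{j=1}^{d_i}\llbracket\phi(A)\rrbracket_i$ simultaneously for all $A$ in the algebra, and since $\#_{1/\alpha}$ is defined by functional calculus (hence commutes with conjugation by a fixed unitary) and respects direct sums by property~5, one gets $U^*(B\#_{1/\alpha}A)U = (U^*BU)\#_{1/\alpha}(U^*AU) = \bigoplus_i\bigoplus_j \big(\llbracket\phi(B)\rrbracket_i \#_{1/\alpha} \llbracket\phi(A)\rrbracket_i\big)$, which is exactly the claim; one must check that $B\#_{1/\alpha}A$ indeed stays in the algebra (it does, being a limit of polynomials in $A,B$ when both are positive definite, and by continuity/the transformer equality in general). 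I would spell this out as a short lemma preceding the theorem. Everything else is bookkeeping with the bounds from Eqs.~\eqref{eq:number_partitions}--\eqref{eq:dim_Sk_inv}.
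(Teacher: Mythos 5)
Your proposal is correct and follows essentially the same route as the paper's proof: restrict to the $\mathfrak{S}_k$-invariant algebra via Lemma~\ref{lem:inv_space_1}, block-diagonalize $A$, $J^{\cN^{\otimes k}}$, $J^{\cM^{\otimes k}}$ and $\tr_{Y^{\otimes k}}(A)$ via $\phi_{X\otimes Y}$ and $\phi_X$, use Lemma~\ref{lemma:psd_preserving}, the transformer equality (Property~2) and the direct-sum property (Property~5) to decompose the geometric-mean constraint blockwise, and finally count blocks and sizes via Eqs.~\eqref{eq:partitions}--\eqref{eq:dim_Sk_inv}. The extra details you supply (closure of the algebra under $\#_{1/\alpha}$, and the auxiliary variable for the partial trace) are minor elaborations of steps the paper treats in one line, not a different argument; one small imprecision is attributing the unitary-invariance of $\#_{1/\alpha}$ to ``functional calculus'' rather than directly to the transformer equality, but you also cite the latter, so the logic is sound.
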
 

\begin{proof}
  	By Lemma~\ref{lem:inv_space_1} and Property $2$ of the $\alpha$-geometric mean, after a permutation of the $X$ and $Y$ tensor factors, the formulation \eqref{sdp:regD_approximation} for $\newD_{\alpha}(\cN^{\otimes k} \| \cM^{\otimes k})$ can be written as
  	\begin{IEEEeqnarray}{rCl}
  	\frac{1}{\alpha - 1}\,\log \quad \min_{A,y}& \quad  &y\\
  	\mathrm{ s.t. }& &\tr_{Y^{\otimes k}}(A) \leq y\,\id_{X^{\otimes k}} \enspace, \label{eq:constraint_obj}\\
  	& &\br{J^{\cN}}^{\otimes k} \leq \br{J^{\cM}}^{\otimes k} \#_{1/\alpha} A \enspace, \label{eq:constraint_geometric_mean}
  	\end{IEEEeqnarray}
  	where $A \in \Pos\br{\br{X\otimes Y}^{\otimes k}} \cap \End^{\mathfrak{S}_k} \br{\br{X\otimes Y}^{\otimes k}}$ and $y\in \mathbb{R}$. 
  	
  	For $\cH\in\{X,X\otimes Y\}$, let $\phi_\cH: \End^{\mathfrak{S}_k}(\cH^{\otimes k}) \to \bigoplus_{i=1}^{t^\cH} \CC^{m_i^\cH \times m_i^\cH}$ be the bijective linear map defined in Eq.~\eqref{eq:phi_H} which block-diagonalizes the corresponding invariant algebra, where to simplify the notation, the blocks are indexed by $i\in[t^\cH]$ instead of $\lambda\in \mathrm{Par}(d_\cH,k)$. For $Z\in \End^{\mathfrak{S}_k}(\cH^{\otimes k})$, we denote the $i$-th block of $\phi_\cH(Z)$ by $\llbracket \phi_\cH(Z) \rrbracket_i$.  Note that by Lemma~\ref{lemma:psd_preserving}, $\phi_\cH$ preserves positive semidefiniteness. Therefore, since $\tr_{Y^{\otimes k}}(A)$, $\id_{X^{\otimes k}} \in \End^{\mathfrak{S}_k}(X^{\otimes k})$, the constraint~\eqref{eq:constraint_obj} can be mapped by $\phi_X$ into the direct sum form. By Lemma~\ref{lemma:psd_preserving} and Property~$2$ of the $\alpha$-geometric mean, we have $\phi_{X\otimes Y}\br{ \br{J^{\cM}}^{\otimes k} \#_{1/\alpha} A }=\phi_{X\otimes Y}\br{ \br{J^{\cM}}^{\otimes k} } \#_{1/\alpha} \phi_{X\otimes Y}(A)$. Therefore, by Property~$5$ of the $\alpha$-geometric mean (direct sum property), the constraint~\eqref{eq:constraint_geometric_mean} can be decomposed into constraints involving the smaller diagonal blocks as well. The transformed convex program can be written as
  	\begin{IEEEeqnarray*}{rCl"r}
  	\frac{1}{\alpha - 1}\,\log \quad \min& \quad &y &\\
  	\mathrm{ s.t.}&  &\left\llbracket \br{\phi_X \circ \tr_{Y^{\otimes k}} \circ \phi_{X\otimes Y}^{-1}} \br{ \oplus_l A_l} \right\rrbracket_j \leq y\,\id_{m_j^X} \enspace, &j\in \Br{t^X}\\
  	& &\left\llbracket \phi_{X\otimes Y} \br{\br{J^{\cN}}^{\otimes k}} \right\rrbracket_i \leq \left\llbracket \phi_{X\otimes Y}\br{\br{J^{\cM}}^{\otimes k}} \right\rrbracket_i \#_{1/\alpha} A_i \enspace, &i \in \Br{t^{X\otimes Y}}\\
  	& &A_i \in \Pos\br{\CC^{m_i^{X\otimes Y}}}, \qquad &i \in \Br{t^{X\otimes Y}}
  	\end{IEEEeqnarray*}
  	
  	 The statement of the theorem follows since for $\cH\in\{X,X\otimes Y\}$, by Eq.~\eqref{eq:partitions}, we have $t^\cH \leq (k+1)^{d_\cH}$ and by Eq.~\eqref{eq:dim_S_lambda}, for every $i\in \Br{t^\cH}$, we have $m_i^\cH \leq (k+1)^{d_\cH(d_\cH-1)/2}$.
\end{proof}

Note that a direct implementation of the transformation mapping the convex program~\eqref{sdp:regD_approximation} into the polynomial-size form of Theorem~\ref{thm: computing_channel_divergence_1} involves exponential computations. Next, we show how to do this efficiently.  

\bigskip
\textbf{A basis for the invariant subspace}. The canonical basis of the matrix $*$-algebra $\End^{\mathfrak{S}_k}\left(\cH^{\otimes k} \right)$ consists of zero-one incidence matrices of orbits of the group action on pairs (see~\cite{Klerk_07, Bachoc2012} for more information). In particular, let the standard basis of $\cH^{\otimes k}$ be indexed by $i \in \left[(d_\cH)^k\right]$. Then the orbit of the pair $(i,j) \in \left[(d_\cH)^k\right]^2$ under the action of the group $\mathfrak{S}_k$ is given by
\begin{align*}
	O(i,j) = \{(\pi(i),\pi(j)): \pi \in \mathfrak{S}_k \},
\end{align*}   
where $\pi(i)$ is the index of the basis vector $P_\cH(\pi)\ket{i}$. With this notation, for every $A \in \End^{\mathfrak{S}_k}\left(\cH^{\otimes k} \right)$, and every $\pi \in \mathfrak{S}_k$, we have $A_{ij}  = A_{\pi^{-1}(ij)}= A_{\pi^{-1}(i),\pi^{-1}(j)}$. The set $\left[(d_\cH)^k\right]^2$ decompose into orbits $O_{1}^{\cH},\dots,O_{m^\cH}^{\cH}$ under the action of $\mathfrak{S}_k$.
 For each $r \in [m^\cH]$, we construct a zero-one matrix $C_{r}^{\cH}$ of size $(d_\cH)^k \times (d_\cH)^k$ given by
\begin{align}
	\label{eq:canonical_basis}
	(C_{r}^{\cH})_{ij} = 
	\begin{cases} 
		1 & \text{if }  (i,j) \in O_{r}^{\cH} \, , \\
		0 & \text{otherwise}.
	\end{cases}
\end{align}
The set $\cC^\cH=\{C_{1}^{\cH},\dots,C_{m^\cH}^{\cH}\}$ forms an orthogonal basis of $\End^{\mathfrak{S}_k}\left(\cH^{\otimes k} \right)$ with $m^\cH \leq (k+1)^{d_\cH^2}$.

\medskip 
\textbf{Enumerating all orbits}. For each $r= 1,\dots,m^{\cH}$, we need to compute a representative element of $O_{r}^{\cH}$. In order to do so, we define a matrix $E^{(i,j)} \in \mathbb{Z}_{\geq 0}^{d_\cH \times d_\cH}$
\begin{align}
	\label{eq:number_pair}
	(E^{(i,j)})_{a,b} \coloneqq \left|\{v \in [k]: i_v = a,j_v = b \}\right| \; , \; \forall a,b \in[d_{\cH}] \, .
\end{align}  
  
By the construction in Eq.~\eqref{eq:number_pair}, for two pairs $(i,j),(i',j') \in[d_\cH]^k \times [d_\cH]^k$, we have $(i',j') = (\pi(i),\pi(j))$, for some $\pi \in \mathfrak{S}_k$ if and only if $E^{(i,j)} = E^{(i',j')}$. Therefore, there is a one-to-one correspondence between the orbits $\cbr{O_{r}^{\cH}}_{r\in[m^{\cH}]}$ and $E \in \mathbb{Z}_{\geq 0}^{d_\cH \times d_\cH}$ such that $\sum_{a,b}E_{a,b} = k$. Therefore, we can determine a representative element for every $O_{r}^{\cH}$ in $\poly(k)$ time by listing  all non-negative integer solutions of the equation $\sum_{a,b \in [d_\cH]}E_{a,b} = k$. 

Any matrix in $\End^{\mathfrak{S}_k}(\cH^{\otimes k})$ can be written in the basis $\cC^\cH$ as
\begin{align}
	M(z) \coloneqq \sum_{r=1}^{m^{\cH}}z_{r} C_{r}^{\cH} \, , \text{ for some } z \in \CC^{[m^{\cH}]} \, .
\end{align}

Using the representative matrix for the action of $\mathfrak{S}_k$ on the space $\cH^{\otimes k}$ in Eq.~\eqref{eq:symmetric_representative_set}, we get
\begin{align}
	\label{eq:block_Phi}
	\phi_{\cH}(M(z)) =  \sum_{r=1}^{m^{\cH}}z_{r} \phi_{\cH}\br{C_{r}^{\cH}} =  \sum_{r=1}^{m^{\cH}} z_{r} \bigoplus_{\lambda \vdash_{d_{\cH}}k} U_{\lambda}^{T}C_{r}^{\cH}U_{\lambda} \, .
\end{align}
Note that $U_{\lambda}$ is real matrix for all $\lambda \in \mathrm{Par}(d_{\cH},k)$.

\suppress{
______________________________________________________________
\begin{theorem}
	\label{thm: computing_channel_divergence_1}
	The channel divergence $\newD_{\alpha}(\cN^{\otimes k} \| \cM^{\otimes k})$ can be formulated as a convex program with $\mathrm{O}\!\br{k^{d^2}}$ variables and $\mathrm{O}\!\br{k^d}$ PSD constraints involving matrices of size at most $(k+1)^{d(d-1)/2}$, where $d=d_X d_Y$. 
\end{theorem}  

\begin{proof}
		By Lemma~\ref{lem:inv_space_1}, the formulation \eqref{sdp:regD_approximation} for $\newD_{\alpha}(\cN^{\otimes k} \| \cM^{\otimes k})$ can be written as
	\begin{IEEEeqnarray}{rCl}
		\frac{1}{\alpha - 1}\,\log \quad \min_{A,y}& \quad  &y\\
		\mathrm{ s.t. }& &\tr_{Y^{\otimes k}}(A) \leq y\,\id_{X^{\otimes k}} \enspace, \label{eq:constraint_obj}\\
		& &(J^{\cN})^{\otimes k} \leq (J^{\cM})^{\otimes k} \#_{1/\alpha} A \enspace, \label{eq:constraint_geometric_mean}
	\end{IEEEeqnarray}
	where $A \in \Pos\br{ X^{\otimes k}\otimes Y^{\otimes k} } \cap \End^{\mathfrak{S}_k} \br{ X^{\otimes k}\otimes Y^{\otimes k} }$ and $y\geq 0$. 
	
	For $\cH \in \{X\otimes Y, X\}$ and $Z \in \End^{\mathfrak{S}_k}(\cH^{\otimes k})$, we denote the block labeled by $\lambda \in \mathrm{Par}(d_{\cH},k)$ of $\phi_{\cH}(Z)$ is $\llbracket \phi_\cH(Z) \rrbracket_{\lambda}$. We denote $\{C_{1}^{\cH},\dots,C_{m^\cH}^{\cH}\}$, $\{O_{1}^{\cH},\dots,O_{m^\cH}^{\cH}\}$ are canonical basis of $\End^{\mathfrak{S}_k}(\cH^{\otimes k})$ and orbits follow the construction in~\eqref{eq:canonical_basis}. Let $Q$ be a permutation matrix which is constructed from natural identification between $[d_Xd_Y]^k \times [d_Xd_Y]^k$ and $([d_X]^k[d_Y]^k) \times ([d_X]^k[d_Y]^k)$. Since $X^{\otimes k} \otimes Y^{\otimes k} \cong (X \otimes Y)^{\otimes k}$, one has $Qh \in X^{\otimes k} \otimes Y^{\otimes k}$ for all $h \in (X\otimes Y)^{\otimes k}$. Let $\{K_1,\dots,K_{m^{X \otimes Y}}\}$ with $K_r \coloneqq QC_{r}^{X \otimes Y}Q^{T}$, then $\{K_1,\dots, K_{m^{X\otimes Y}}\}$ is a canonical basis of $\End(X^{\otimes k} \otimes Y^{\otimes k})$. Since $A \in \End^{\mathfrak{S}_k} \br{ X^{\otimes k}\otimes Y^{\otimes k}}$, we can write $A = \sum_{r=1}^{m^{X \otimes Y}}x_rK_{r}$. This implies the constraint $ A \geq 0 \Leftrightarrow \sum_{r=1}^{m^{X \otimes Y}}x_rK_{r} \geq 0 \Leftrightarrow \sum_{r=1}^{m^{X \otimes Y}}x_rC_{r}^{X \otimes Y} \geq 0$.
	
	For $r=1,\dots,m^{X \otimes Y}$, let $D_r \coloneqq \tr_{Y^{\otimes k}}(K_r)$, noting that $D_r \in \End^{\mathfrak{S}_k}(X^{\otimes k})$ and $(J^{\cM})^{\otimes k},(J^{\cN})^{\otimes k} \in \End^{\mathfrak{S}_k} \br{ X^{\otimes k}\otimes Y^{\otimes k} }$. Since $\tr_{Y^{\otimes k}}(A)$, $\id_{X^{\otimes k}} \in \End^{\mathfrak{S}_k}(X^{\otimes k})$, the constraint~\eqref{eq:constraint_obj} can be mapped by $\phi_X$ into the direct sum form. By Property~$2$ (transformer equality property) of the $\alpha$-geometric mean, one has $\phi_{X\otimes Y}\br{ Q^{T}\br{J^{\cM}}^{\otimes k}Q \#_{1/\alpha} Q^{T}AQ }=\phi_{X\otimes Y}\br{ Q^{T} \br{J^{\cM}}^{\otimes k}Q } \#_{1/\alpha} \phi_{X\otimes Y}(Q^{T}AQ)$. Therefore, by Property~$5$ of the $\alpha$-geometric mean (direct sum property), the constraint~\eqref{eq:constraint_geometric_mean} can be decomposed into constraints involving the smaller diagonal blocks as well. The transformed convex program can be written as
	\begin{IEEEeqnarray*}{rCl"r}
		& \, &\frac{1}{\alpha - 1}\,\log \quad \min \quad y &\\
		\mathrm{ s.t.}&  &\left\llbracket \sum_{r=1}^{m^{X \otimes Y}}x_r \phi_X(D_r) \right\rrbracket_{\lambda} \leq y\,\id_{m_{\lambda}^X} \, ,  \lambda \in \mathrm{Par}(d_X,k) \\
		& &\left\llbracket \phi_{X\otimes Y} \br{Q^{T}(J^{\cN})^{\otimes k}Q} \right\rrbracket_{\lambda} \leq \left\llbracket \phi_{X\otimes Y}\br{Q^{T}(J^{\cM})^{\otimes k}Q} \right\rrbracket_{\lambda} \#_{1/\alpha} \left\llbracket \sum_{r=1}^{m^{X \otimes Y}}x_r \phi_{X\otimes Y}(C_r^{X\otimes Y}) \right\rrbracket_{\lambda},  \lambda \in \mathrm{Par}(d,k) \\
		& & x_1,\dots,x_{m^{X\otimes Y}}, y \in \RR \, \, .
	\end{IEEEeqnarray*}

    	The statement of the theorem follows since for $\cH\in\{X,X\otimes Y\}$, by Eq.~\eqref{eq:partitions}, we have $|\mathrm{Par}(d_{\cH},k)| \leq (k+1)^{d_\cH}$ and by Eq.~\eqref{eq:dim_S_lambda}, for every $\lambda \in \mathrm{Par}(d_{\cH},k)$, we have $m_{\lambda}^\cH \leq (k+1)^{d_\cH(d_\cH-1)/2}$.
\end{proof}
_____________________________________________________________________
}
We show that, for every $r\in[m^\cH]$, $\phi_{\cH}(C_{r}^{\cH})$ can be computed in $\poly(k)$ time. In order to do so, we show how to efficiently compute each block $U_{\lambda}^{T}C_{r}^{\cH}U_{\lambda}$ indexed by $\lambda \in \mathrm{Par}(d_\cH,k)$. This in fact boils down to efficiently computing $u_{\tau}^{T}C_{r}^{\cH} u_{\gamma}$, for every $\tau,\gamma \in T_{\lambda, d_{\cH}}$. We note that $u_{\tau}$, $u_{\gamma}$, and $C_{r}^{\cH}$ all have exponential size in $k$.
\par
For $\cH \in \{X,Y,X\otimes Y \}$, let $W_{\cH} \coloneqq \cH \otimes \cH$. For every $p=(i,j)\in [d_\cH]^2$, define
\begin{align*}
	a_{p} \coloneqq e_i \otimes e_j \in W_{\cH} \, , 
\end{align*}
where $\cbr{e_i}_{i\in[d_\cH]}$ is the standard basis of $\cH$. Then the set $\mathcal{B}\coloneqq \cbr{a_{p}: p \in [d_{\cH}]^2 }$ is a basis of $W_{\cH}$. Let $\mathcal{B^*}\coloneqq \cbr{a^*_{p}: p \in [d_{\cH}]^2 }$ be the corresponding dual basis for $W^*_\cH$. 

Using the natural identification of $\left[(d_{\cH})^k \right]^2$ and $(\left[d_{\cH}\right]^2)^k$, for every $r \in [m^{\cH}]$, we map $O^\cH_r\subseteq \left[(d_{\cH})^k \right]^2$ to $\mathsf{O}^\cH_r \subseteq ([d_{\cH}]^2)^k$. Then corresponding to each operator $C_{r}^{\cH}$, we define
\begin{align}
	\mathsf{C}_{r}^{\cH} \coloneqq \sum_{(p_1,\dots,p_k) \in \mathsf{O}^\cH_{r}} a_{p_1} \otimes \dots \otimes a_{p_k} \in W_{\cH}^{\otimes k} \, .
\end{align}
Note that $\mathsf{C}_{r}^{\cH}$ can be obtained from $\mathrm{vec}\br{C_{r}^{\cH}}$ by applying the permutation operator which maps $\br{\cH^{\otimes k}}^{\otimes 2}$ to $\br{\cH^{\otimes 2}}^{\otimes k}$.
For every $(p_1,\dots,p_k)\in \left[(d_{\cH})^2 \right]^k$, let 
\begin{equation}
    m(p_1,\dots,p_k)\coloneqq a_{p_1}^{*}\cdots a_{p_k}^{*} \in \mathcal{O}_{k}(W_{\cH})\enspace
\end{equation} 
be a degree $k$ monomial expressed in the basis $\mathcal{B}^*$. Note that, for a fixed $r\in[m^\cH]$, $m(p_1,\dots,p_k)$ is the same monomial, for every $(p_1,\dots,p_k)\in \mathsf{O}^\cH_r$. We denote this monomial by $m\br{\mathsf{O}^\cH_r}$. Moreover, $\cbr{\mathsf{O}^\cH_r}_{r\in [m^\cH]}$ partitions $\left[(d_{\cH})^2 \right]^k$ into disjoint subsets. Therefore, there is a bijection between $\cbr{\mathsf{O}_{i}^{\cH}}_{i\in [m^\cH]}$ and the set of degree $k$ monomials expressed in the basis $\mathcal{B}^*$.

Let $\zeta:(W_{\cH}^{*})^{\otimes k} \to \mathcal{O}_{k}(W_{\cH})$ be the linear map defined as
\begin{align*}
	\zeta(w_1^{*}\otimes \dots \otimes w_{k}^{*}) \coloneqq w_1^{*} \cdots w_{k}^{*}\;,\; \forall w_1^{*},\dots,w_{k}^{*} \in W_\cH^{*} \, .
\end{align*}
To simplify the notation we write $\overline{w} = \zeta(w)$, for every $w \in (W_{\cH}^{*})^{\otimes k}$. 

For every $\lambda\in \mathrm{Par}(d_\cH,k)$ and $\tau,\gamma \in T_{\lambda,d_{\cH}}$, define the polynomial $f_{\tau,\gamma} \in \CC[x_{i,j}: i,j\in [d_{\cH}]]$ by 
\begin{align}
	\label{eq:polynomial}
	f_{\tau,\gamma}(X) \coloneqq \sum_{\substack{\tau' \sim \tau \\ \gamma' \sim \gamma }} \sum_{c,c' \in C_{\lambda}} \mathrm{sgn}(cc') \prod_{y \in Y(\lambda)} x_{\tau'c(y),\gamma'c'(y)} \, ,
\end{align} 
for $X = (x_{i,j})_{i,j = 1}^{d_{\cH}} \in \CC^{d_{\cH} \times d_{\cH}}$. Refs.~\cite[Proposition 3]{litjens2017semidefinite} and \cite[Theorem 7]{gijswijt2009block} show that the polynomial in Eq.~\eqref{eq:polynomial} can be computed (i.e., expressed as a linear combination of monomials in variables $x_{i,j}$) in polynomial time. 
\begin{lemma}
	\label{lem:polynomial}
	For every $\lambda \in \mathrm{Par}(d_\cH,k)$ and every $\tau,\gamma \in T_{\lambda,d_{\cH}}$, expressing the polynomial $f_{\tau,\gamma}(X)$ as a linear combination of monomials can be done in $\poly(k)$ time, for fixed $d_{\cH}$. 
\end{lemma}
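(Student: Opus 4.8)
The plan is to exploit that, although the double sum defining $f_{\tau,\gamma}$ has up to $\big(\prod_{j}\lambda_{j}!\big)^{2}\,|C_{\lambda}|^{2}$ terms — superexponential in $k$ — the polynomial $f_{\tau,\gamma}(X)$ is homogeneous of degree $k$ in the $d_{\cH}^{2}$ variables $x_{i,j}$, hence has at most $\binom{k+d_{\cH}^{2}-1}{d_{\cH}^{2}-1}=\poly(k)$ monomials for fixed $d_{\cH}$. So the coefficients will be computed by a column-by-column dynamic program over $Y(\lambda)$ that uses the factorisation of $R_{\lambda}$ and $C_{\lambda}$ into products over rows and columns. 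This is the strategy behind \cite[Proposition~3]{litjens2017semidefinite} and \cite[Theorem~7]{gijswijt2009block}, which I now sketch.

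\emph{Step 1 (collapsing the column sum).} Substituting $z=c(y)$ in $\prod_{y\in Y(\lambda)}x_{\tau'(c(y)),\gamma'(c'(y))}$ and setting $\sigma=c'c^{-1}$, one checks $\mathrm{sgn}(cc')=\mathrm{sgn}(\sigma)$, so summing over $c$ for each fixed $\sigma$ contributes a factor $|C_{\lambda}|$. Since $C_{\lambda}=\prod_{i_{0}=1}^{\lambda_{1}}\mathrm{Sym}(\text{column }i_{0})$ and $Y(\lambda)$ is the disjoint union of its columns, the remaining sum over $\sigma$ factors over columns, and each factor is a Leibniz expansion of a determinant whose size equals the length of that column. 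Writing $J_{i_{0}}=\{j:\lambda_{j}\ge i_{0}\}$ for the rows meeting column $i_{0}$, this gives
\begin{align*}
	f_{\tau,\gamma}(X)\;=\;|C_{\lambda}|\sum_{\tau'\sim\tau}\ \sum_{\gamma'\sim\gamma}\ \prod_{i_{0}=1}^{\lambda_{1}}\det\!\big(x_{\tau'(i_{0},j),\,\gamma'(i_{0},j')}\big)_{j,j'\in J_{i_{0}}}\,.
\end{align*}
Since $\tau$ is a semistandard tableau with entries in $[d_{\cH}]$, every column of $Y(\lambda)$ has at most $d_{\cH}$ cells, so each determinant is of a matrix of size $|J_{i_{0}}|\le d_{\cH}$; hence it is a polynomial with at most $d_{\cH}!$ monomials, each of degree $\le d_{\cH}$, computable in constant time for fixed $d_{\cH}$. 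The scalar $|C_{\lambda}|=\prod_{i_{0}}|J_{i_{0}}|!$ and the number of columns $\lambda_{1}\le k$ are read off directly.

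\emph{Step 2 (the row sum by a column sweep).} A tableau $\tau'\sim\tau$ is precisely a filling of $Y(\lambda)$ whose $j$-th row realises the multiset $M_{j}$ of entries of the $j$-th row of $\tau$, and similarly $\gamma'\sim\gamma$ with multisets $M_{j}'$. Sweep the columns $i_{0}=1,\dots,\lambda_{1}$ from left to right. The state after column $i_{0}$ records, for each still-incomplete row $j$, the sub-multiset of $M_{j}$ and the sub-multiset of $M_{j}'$ already placed; the number of states is at most $\big(\prod_{j}(\lambda_{j}+1)^{d_{\cH}}\big)^{2}\le(k+1)^{2d_{\cH}^{2}}=\poly(k)$. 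With state $s$ we keep the polynomial $P_{i_{0}}(s)\in\CC[x_{i,j}:i,j\in[d_{\cH}]]$ equal to the partial sum, over all fillings of columns $1,\dots,i_{0}$ consistent with $s$, of $\prod_{i'\le i_{0}}\det(\cdots)$; it is homogeneous of degree $\le k$, hence has $\poly(k)$ monomials and $\poly(k)$-bit integer coefficients. To go from column $i_{0}$ to $i_{0}+1$: for each state $s$, enumerate the at most $d_{\cH}^{2|J_{i_{0}+1}|}\le d_{\cH}^{2d_{\cH}}$ ways to assign a still-available value to each cell of column $i_{0}+1$ on the $\tau'$- and $\gamma'$-sides, form the corresponding $\le d_{\cH}\times d_{\cH}$ determinant, multiply it into $P_{i_{0}}(s)$ — a $\poly(k)$-time polynomial product with bounded-degree output — and add the result to $P_{i_{0}+1}(s')$ for the updated state $s'$. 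After column $\lambda_{1}$ only the fully-placed state survives; multiplying its polynomial by $|C_{\lambda}|$ yields $f_{\tau,\gamma}(X)$ as an explicit linear combination of monomials. (If $\sum_{\tau'\sim\tau}$ is read as a sum over $R_{\lambda}$ rather than over distinct tableaux, one additionally divides by the readily computed orders of the $R_{\lambda}$-stabilisers of $\tau$ and of $\gamma$, both products of factorials of entry-multiplicities.)

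\emph{Complexity and main obstacle.} Each of the at most $k$ sweep steps visits at most $(k+1)^{2d_{\cH}^{2}}$ states and performs $\mathrm{O}(1)$ polynomial multiplications per state on operands with $\poly(k)$ monomials and $\poly(k)$-bit coefficients, so the running time is $\poly(k)$ for fixed $d_{\cH}$, as claimed. The delicate part is not the identity of Step~1 (a one-line substitution) but the accounting in Step~2: one must verify uniformly that the state space, the monomial counts and the coefficient bit-sizes all stay polynomial in $k$ with exponent depending only on $d_{\cH}$, and one must get the combinatorial multiplicities exactly right — in particular handling rows of $\tau$ or $\gamma$ with repeated entries — so that the sweep outputs $f_{\tau,\gamma}$ itself and not a scalar multiple of it.
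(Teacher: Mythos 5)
The paper does not actually prove this lemma: it is justified solely by citing \cite[Proposition~3]{litjens2017semidefinite} and \cite[Theorem~7]{gijswijt2009block}, so there is no in-paper argument to compare against. Your proposal is a correct self-contained reconstruction, and it follows the same two-step strategy as those references. Step~1 (the substitution $\sigma=c'c^{-1}$, the factor $|C_\lambda|$, and the factorization of the column sum into determinants of size at most $d_\cH$) is exactly the identity used there and is verified correctly, including the sign computation $\mathrm{sgn}(cc')=\mathrm{sgn}(\sigma)$. For Step~2, the cited works organize the remaining sum over $\tau'\sim\tau$, $\gamma'\sim\gamma$ slightly differently — essentially as a coefficient extraction from a column-by-column product of bounded-degree polynomials with auxiliary variables recording the row contents — but this is computationally the same as your explicit left-to-right sweep whose state records the sub-multisets already placed in each row; your bound $(k+1)^{2d_\cH^2}$ on the number of states and the $\binom{k+d_\cH^2-1}{d_\cH^2-1}$ bound on the number of monomials are what make the whole computation $\poly(k)$ for fixed $d_\cH$. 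Two points you rightly flag but should nail down if this were written out in full: first, the paper's $\sum_{\tau'\sim\tau}$ ranges over \emph{distinct} tableaux in the row-equivalence class, so the sweep must enumerate distinct values (not multiset elements) per cell in order to count each tableau exactly once — your parenthetical correction handles the alternative reading; second, the $\poly(k)$ bound on coefficient bit-sizes is asserted rather than derived, though it follows easily since every coefficient is a signed sum of at most $|C_\lambda|\,|R_\lambda|^2\le (k!)^{O(d_\cH)}$ terms, giving bit-size $O(d_\cH\,k\log k)$.
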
   

We use this to prove the following lemma:

\begin{lemma}[Lemma 2, \cite{litjens2017semidefinite}]
	\label{lem:computing_entries_block}
	Let $\lambda \in \mathrm{Par}(d_\cH,k)$, $\tau,\gamma \in T_{\lambda,d_{\cH}}$, and $r\in [m^\cH]$. Then $u_{\tau}^{T}C_{r}^{\cH}u_{\gamma}$ can be computed in polynomial time in $k$, for fixed $d_{\cH}$.   
\end{lemma}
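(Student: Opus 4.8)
The plan is to reduce the computation of the scalar $u_\tau^T C_r^{\cH} u_\gamma$ to an evaluation of the polynomial $f_{\tau,\gamma}$ from Eq.~\eqref{eq:polynomial} at a specific integer matrix, which by Lemma~\ref{lem:polynomial} is computable in $\poly(k)$ time once that polynomial has been expanded into monomials. First I would unfold the definition of $u_\tau$: we have
\begin{align*}
	u_\tau^T C_r^{\cH} u_\gamma = \sum_{\substack{\tau'\sim\tau\\ \gamma'\sim\gamma}}\sum_{c,c'\in C_\lambda}\mathrm{sgn}(cc')\, \Big(\bigotimes_{y\in Y(\lambda)} e_{\tau'(c(y))}\Big)^T C_r^{\cH} \Big(\bigotimes_{y\in Y(\lambda)} e_{\gamma'(c'(y))}\Big)\enspace.
\end{align*}
Writing the basis vector $\bigotimes_{y} e_{\tau'(c(y))}$ of $\cH^{\otimes k}$ as the standard basis vector $\ket{i}$ where $i\in[d_\cH]^k$ has coordinates indexed by the $k$ cells $y\in Y(\lambda)$ (in the fixed order), and similarly $\bigotimes_y e_{\gamma'(c'(y))} = \ket{j}$, the matrix entry $\ket{i}^T C_r^{\cH}\ket{j} = (C_r^{\cH})_{ij}$ equals $1$ if $(i,j)\in O_r^{\cH}$ and $0$ otherwise. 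By the orbit characterization via the count matrix $E^{(i,j)}$ of Eq.~\eqref{eq:number_pair}, this indicator is exactly $\prod_{a,b\in[d_\cH]} [\text{number of cells } y \text{ with } \tau'(c(y))=a,\ \gamma'(c'(y))=b \text{ equals } (E_r)_{ab}]$, where $E_r$ is the representative count matrix associated to the orbit $O_r^{\cH}$, which we have already precomputed in $\poly(k)$ time by enumerating nonnegative integer matrices summing to $k$.

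The key observation is that this indicator is picked out by a single monomial in the expansion of $\prod_{y\in Y(\lambda)} x_{\tau'(c(y)),\gamma'(c'(y))}$: namely the coefficient of $\prod_{a,b} x_{ab}^{(E_r)_{ab}}$ in that product is $1$ precisely when $(i,j)\in O_r^{\cH}$ and $0$ otherwise. Summing over $\tau'\sim\tau$, $\gamma'\sim\gamma$ and $c,c'\in C_\lambda$ with the sign $\mathrm{sgn}(cc')$, we conclude that $u_\tau^T C_r^{\cH} u_\gamma$ equals the coefficient of the monomial $\prod_{a,b\in[d_\cH]} x_{ab}^{(E_r)_{ab}}$ in $f_{\tau,\gamma}(X)$. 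By Lemma~\ref{lem:polynomial}, $f_{\tau,\gamma}$ can be written as an explicit linear combination of monomials in $\poly(k)$ time (for fixed $d_\cH$), and since the monomials are indexed by nonnegative integer matrices $E$ with $\sum_{a,b}E_{ab}=k$ — of which there are at most $(k+1)^{d_\cH^2}$ — reading off the coefficient attached to $E_r$ is immediate. This gives $u_\tau^T C_r^{\cH} u_\gamma$ in $\poly(k)$ time, proving the lemma.

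A couple of points I would be careful about: the cells of $Y(\lambda)$ must be ordered consistently in the definitions of $u_\tau$, of the vectorization $\ket{i}$, and of the monomial product, so that "coordinate indexed by cell $y$" is unambiguous — this is just bookkeeping but needs to be fixed once. I would also note that the polynomial $f_{\tau,\gamma}$ depends only on $\lambda,\tau,\gamma$ and not on $r$, so once it is expanded one obtains all the entries $u_\tau^T C_r^{\cH} u_\gamma$ for $r\in[m^\cH]$ at once, simply by reading off each coefficient; this is the efficient route and matches how Eq.~\eqref{eq:block_Phi} assembles $\phi_\cH(M(z))$. The main obstacle is really just recognizing the identification between orbit indicators and monomial coefficients cleanly; everything downstream, including the $\poly(k)$-time expansion of $f_{\tau,\gamma}$, is supplied by Lemma~\ref{lem:polynomial} (equivalently \cite[Proposition 3]{litjens2017semidefinite} and \cite[Theorem 7]{gijswijt2009block}), so there is no serious analytic difficulty beyond this combinatorial translation.
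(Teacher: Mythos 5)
Your proposal is correct and takes essentially the same route as the paper: both proofs establish that $u_\tau^T C_r^{\cH} u_\gamma$ equals the coefficient of the monomial indexed by the orbit's count matrix $E_r$ in the polynomial $f_{\tau,\gamma}$, and then invoke Lemma~\ref{lem:polynomial} to expand that polynomial in $\poly(k)$ time. The paper phrases the identity via the coordinate ring $\mathcal{O}_k(W_\cH)$, the vectors $\mathsf{C}_r^\cH$, and the symmetrization map $\zeta$ (so that $\overline{w} = f_{\tau,\gamma}(A)$ with $A$ filled by dual-basis elements), whereas you work directly with the indicator $(C_r^\cH)_{ij}$ and the monomial $\prod_{a,b} x_{ab}^{(E_r)_{ab}}$ — these are two phrasings of the same combinatorial identification, and your version is arguably more concrete.
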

\begin{proof}
The proof can be found in~\cite{litjens2017semidefinite}, but we include a concise proof for the reader's convenience. 
For every $r \in [m^{\cH}]$, it is straightforward to see that $u_{\tau}^{T}C_{r}^{\cH}u_{\gamma} = (u_{\tau} \otimes u_{\gamma})^T\mathrm{vec}\br{C_{r}^{\cH}}$. Therefore, by a permutation of the tensor factors, we get $u_{\tau}^{T}C_{r}^{\cH}u_{\gamma} = w\, \mathsf{C}_{r}^{\cH}$, for $w\in \br{W^*_\cH}^{\otimes k}$ given by

\begin{align*}
	w  =  \sum_{\substack{\tau' \sim \tau \\ \gamma' \sim \gamma}}\sum_{c,c' \in C_{\lambda}} \mathrm{sgn}(cc') \bigotimes_{y \in Y(\lambda)} (A)_{\tau'\br{c(y)},\gamma'\br{c'(y)}} \, , 
\end{align*} 
where $A \in (W^{*})^{d_{\cH} \times d_{\cH}}$ with $(A)_{x,y} = a_{(x,y)}^{*}$. Then
\begin{align*}
	\sum_{r\in [m^{\cH}]} \br{u_{\tau}^{T}C_{r}^{\cH}u_{\gamma} } m(\mathsf{O}_{r}^{\cH}) &= \sum_{r\in [m^{\cH}]} \br{ w\, \mathsf{C}_{r}^{\cH} } m(\mathsf{O}_{r}^{\cH}) \\
	&= \sum_{(p_1,\dots,p_k) \in ([d_{\cH}]^2)^k} \br{w\, (a_{p_1} \otimes \dots \otimes a_{p_k})}\, a_{p_1}^{*}\cdots a_{p_{k}}^{*} \\
	&= \overline{w} = \sum_{\substack{\tau' \sim \tau \\ \gamma' \sim \gamma}} \sum_{c,c' \in C_{\lambda}} \mathrm{sgn}(cc') \prod_{y \in Y(\lambda)} (A)_{\tau'\br{c(y)},\gamma'\br{c'(y)}}  \\
	&= f_{\tau,\gamma}(A) \, .
\end{align*}

Therefore, $u_{\tau}^{T}C_{r}^{\cH}u_{\gamma}$ is exactly the coefficient of the monomial $m(\mathsf{O}_{r}^{\cH})$ in $f_{\tau,\gamma}(A)$, which by Lemma~\ref{lem:polynomial} can be computed in $\poly(k)$ time.
\end{proof}

\begin{theorem}
		\label{thm: computing_channel_divergence}
	There exists an algorithm which given as input $J^\cM$, $J^\cN$, and $k\in \NN$, outputs in $\poly(k)$ time (for fixed $\dim(X \otimes Y)$) the description of a convex program of size described in Theorem~\ref{thm: computing_channel_divergence_1} for computing $\newD_{\alpha}(\cN^{\otimes k} \| \cM^{\otimes k})$.  
\end{theorem}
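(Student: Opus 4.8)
The plan is to revisit the symmetry-reduced convex program constructed in the proof of Theorem~\ref{thm: computing_channel_divergence_1} and verify that every piece of its description --- the index sets, the block-diagonalized data matrices $\phi_{X\otimes Y}\br{(J^{\cN})^{\otimes k}}$ and $\phi_{X\otimes Y}\br{(J^{\cM})^{\otimes k}}$, and the block-diagonalized partial-trace map $\phi_X\circ\tr_{Y^{\otimes k}}\circ\phi_{X\otimes Y}^{-1}$ appearing in the objective constraint --- can be written down in $\poly(k)$ time, for fixed $\dim(X\otimes Y)$. The first task is to enumerate the combinatorial data. For $\cH\in\{X,X\otimes Y\}$, by Eq.~\eqref{eq:number_pair} the orbits $\{O_r^\cH\}_{r\in[m^\cH]}$ of $\mathfrak{S}_k$ on pairs of standard basis indices of $\cH^{\otimes k}$ are in bijection with the matrices $E\in\mathbb{Z}_{\geq 0}^{d_\cH\times d_\cH}$ satisfying $\sum_{a,b}E_{a,b}=k$; there are $\mathrm{O}\!\br{k^{d_\cH^2}}$ of these, listable in $\poly(k)$ time (stars and bars), and from each one extracts in $\mathrm{O}(k)$ time a representative pair $(i,j)$. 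This produces the canonical bases $\cC^X$ and $\cC^{X\otimes Y}$ of the invariant algebras. Likewise, the partitions $\mathrm{Par}(d_\cH,k)$ and, for each $\lambda$, the semistandard tableaux $T_{\lambda,d_\cH}$ are of polynomial size by Eq.~\eqref{eq:number_partitions} and are enumerable in $\poly(k)$ time.

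Next I would compute the block-diagonalizations. For each $\cH\in\{X,X\otimes Y\}$, each $r\in[m^\cH]$, and each $\lambda\in\mathrm{Par}(d_\cH,k)$, Lemma~\ref{lem:computing_entries_block} yields every entry $u_\tau^{T}C_r^\cH u_\gamma$ with $\tau,\gamma\in T_{\lambda,d_\cH}$ in $\poly(k)$ time, hence the whole block $U_\lambda^{T}C_r^\cH U_\lambda$ and, via Eq.~\eqref{eq:block_Phi}, the block-diagonal matrix $\phi_\cH(C_r^\cH)$. By linearity this lets us compute $\phi_\cH$ of any $\mathfrak{S}_k$-invariant operator once its coordinates in $\cC^\cH$ are known. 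In particular, $(J^{\cN})^{\otimes k}$ is $\mathfrak{S}_k$-invariant, and since the $C_r^{X\otimes Y}$ are $0$--$1$ matrices supported on the orbits, its coordinate on $C_r^{X\otimes Y}$ is simply the matrix entry of $(J^{\cN})^{\otimes k}$ at the representative pair of $O_r^{X\otimes Y}$, i.e. a length-$k$ product of entries of $J^{\cN}$ (after the reordering of the $X$ and $Y$ tensor factors), computed in $\poly(k)$ time; the same holds for $(J^{\cM})^{\otimes k}$. Applying the previous computation then gives $\phi_{X\otimes Y}\br{(J^{\cN})^{\otimes k}}$ and $\phi_{X\otimes Y}\br{(J^{\cM})^{\otimes k}}$ block by block.

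It remains to handle the objective constraint, which involves $\phi_X\circ\tr_{Y^{\otimes k}}\circ\phi_{X\otimes Y}^{-1}$. Parametrizing the variable $A\in\End^{\mathfrak{S}_k}\br{(X\otimes Y)^{\otimes k}}$ by its coordinates $x=(x_r)_{r\in[m^{X\otimes Y}]}$ in $\cC^{X\otimes Y}$, it suffices to expand each $D_r:=\tr_{Y^{\otimes k}}(C_r^{X\otimes Y})\in\End^{\mathfrak{S}_k}(X^{\otimes k})$ in the orthogonal basis $\cC^X$: since $D_r$ is again $\mathfrak{S}_k$-invariant, its coordinate on $C_s^X$ equals the entry $(D_r)_{a,b}$ at any representative $(a,b)$ of $O_s^X$, which is the number of $c\in[d_Y]^k$ such that inserting $c$ into the $Y$-slots of the pair $(a,b)$ produces the pattern matrix $E$ associated with $O_r^{X\otimes Y}$ --- a product of multinomial coefficients determined by $E$, $a$, $b$, evaluated in $\poly(k)$ time (it vanishes unless $E$ is diagonal in the $Y$-indices). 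Computing $\phi_X(D_r)$ as before, the output program reads: minimize $y$ over $(x,y)$ subject to $\llbracket\sum_r x_r\,\phi_X(D_r)\rrbracket_j\le y\,\id_{m_j^X}$ for $j\in[t^X]$, the geometric-mean constraints $\llbracket\phi_{X\otimes Y}((J^{\cN})^{\otimes k})\rrbracket_i\le\llbracket\phi_{X\otimes Y}((J^{\cM})^{\otimes k})\rrbracket_i\;\#_{1/\alpha}\;\llbracket\sum_r x_r\phi_{X\otimes Y}(C_r^{X\otimes Y})\rrbracket_i$ for $i\in[t^{X\otimes Y}]$, and $\llbracket\sum_r x_r\phi_{X\otimes Y}(C_r^{X\otimes Y})\rrbracket_i\ge 0$, with optimal value $\frac{1}{\alpha-1}\log y$. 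All coefficient matrices have been produced in $\poly(k)$ time, so printing this description takes $\poly(k)$ time, and its size (number of variables, number and dimension of PSD constraints) is exactly that stated in Theorem~\ref{thm: computing_channel_divergence_1}.

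The main obstacle is the step of block-diagonalizing the canonical basis: a priori the vectors $u_\tau,u_\gamma$ and the matrices $C_r^\cH$ all have dimension exponential in $k$, so the inner products $u_\tau^{T}C_r^\cH u_\gamma$ cannot be formed directly, and their efficient evaluation is precisely the content of Lemma~\ref{lem:computing_entries_block} (which in turn rests on expanding the polynomial $f_{\tau,\gamma}$ via Lemma~\ref{lem:polynomial} and reading off a single monomial coefficient). The remaining steps only glue these outputs together with the elementary combinatorial counts used to enumerate orbits and to evaluate the partial-trace map.
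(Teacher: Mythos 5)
Your proposal follows the same route as the paper's proof: enumerate the orbits $\cbr{O_r^\cH}_r$ via the pattern matrices of Eq.~\eqref{eq:number_pair}, compute each block of $\phi_\cH\br{C_r^\cH}$ via Lemma~\ref{lem:computing_entries_block} (reading off a monomial coefficient of $f_{\tau,\gamma}$ through Lemma~\ref{lem:polynomial}), expand $\br{J^\cN}^{\otimes k}$ and $\br{J^\cM}^{\otimes k}$ in $\cC^{X\otimes Y}$ by evaluating a length-$k$ product of Choi entries at a representative pair, and expand $D_r=\tr_{Y^{\otimes k}}\br{C_r^{X\otimes Y}}$ in $\cC^X$; so the theorem follows. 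The one place you genuinely diverge is the expansion of $D_r$, and your version is the more careful one. You compute the coefficient of $C_s^X$ as the matrix entry $(D_r)_{a,b}$ at a representative $(a,b)$ of $O_s^X$, namely the number of $Y$-strings $c\in[d_Y]^k$ for which $\br{(a,c),(b,c)}$ realizes the pattern $E$ of $O_r^{X\otimes Y}$, which is a product of multinomial coefficients and equals the fiber size $|\{\pi:\pi(i^X)=i^X,\pi(j^X)=j^X\}|\big/|\{\pi:\pi(i)=i,\pi(j)=j\}|$. The paper instead writes $D_r=\alpha\,C_t^X$ with $\alpha=|\{\pi\in\mathfrak{S}_k:\pi(i^X)=i^X,\ \pi(j^X)=j^X\}|$, omitting the division by the stabilizer of $(i,j)$: e.g.\ for $k=2$, $d_X=d_Y=2$, the singleton orbit of $i=j=\br{(1,1),(1,1)}$ has $C_r^{X\otimes Y}=\ketbra{11,11}{11,11}$ and $D_r=\ketbra{1,1}{1,1}=C_t^X$, so the coefficient is $1$, not the paper's $\alpha=2$. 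Your multinomial formula is evaluable in $\poly(k)$ time, so the proof goes through as you wrote it; the discrepancy is a small, easily fixed normalization slip in the paper's formula for $\alpha$, not a gap in yours.
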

  
\begin{proof}
    For $\cH \in \{X\otimes Y, X\}$, let $\cbr{O_{r}^{\cH}}_{r\in[m^\cH]}$ denote the set of orbits of pairs and $\cbr{C_{r}^{\cH}}_{r\in[m^\cH]}$ denote the canonical basis of $\End^{\mathfrak{S}_k}\left(\cH^{\otimes k} \right)$ defined in Eq.~\eqref{eq:canonical_basis}. For every $r\in [m^{X \otimes Y}]$, we define $D_r\coloneqq \tr_{Y^{\otimes k}}\br{C_r^{X \otimes Y}}$. Note that $D_r\in \End^{\mathfrak{S}_k} \br{X^{\otimes k}}$. Then by Theorem~\ref{thm: computing_channel_divergence_1}, $\newD_{\alpha}(\cN^{\otimes k} \| \cM^{\otimes k})$ can be formulated as the following convex program:
    
  	\begin{IEEEeqnarray*}{rCl"r}
  	\frac{1}{\alpha - 1}&\, &\log \; \min \quad y &\\
  	\mathrm{s.t.}&  &\sum_{r=1}^{m^{X \otimes Y}} z_r \left\llbracket \phi_X(D_r) \right\rrbracket_j \leq y\,\id_{m_j^X} \enspace, &j\in \Br{t^X}\\
  	& &\left\llbracket \phi_{X\otimes Y} \br{\br{J^{\cN}}^{\otimes k}} \right\rrbracket_i \leq \left\llbracket \phi_{X\otimes Y}\br{\br{J^{\cM}}^{\otimes k}} \right\rrbracket_i \#_{1/\alpha} \sum_{r=1}^{m^{X \otimes Y}} z_r \left\llbracket \phi_{X \otimes Y}(C_r^{X \otimes Y}) \right\rrbracket_i \enspace, &i \in \Br{t^{X\otimes Y}}\\
  	& &\sum_{r=1}^{m^{X \otimes Y}} z_r \left\llbracket \phi_{X \otimes Y}(C_r^{X \otimes Y}) \right\rrbracket_i \geq 0\enspace, &i \in \Br{t^{X\otimes Y}} \\
  	& &y,z_r\in \RR \enspace, \qquad &r \in \Br{m^{X\otimes Y}}
  	\end{IEEEeqnarray*}
  	
    Here, we use the notation introduced in Theorem~\ref{thm: computing_channel_divergence_1}. 
    By Lemma~\ref{lem:computing_entries_block}, the block diagonal matrices $\phi_{X\otimes Y}(C^\cH_r)$ can be computed in $\poly(k)$ time, for every $r\in[m^\cH]$. Therefore, to complete the proof it suffices to show how to expand $\br{J^{\cN}}^{\otimes k}, \br{J^{\cM}}^{\otimes k}$ in the basis $\cbr{C_r^{X \otimes Y}}_{r\in [m^{X \otimes Y}]}$ and $D_r$ in the basis $\cbr{C_r^X}_{r\in [m^X]}$, for every $r\in [m^{X \otimes Y}]$. 
  	  
    For $\br{J^{\cM}}^{\otimes k} \in \End^{\mathfrak{S}_k} \left( (X \otimes Y)^{\otimes k}\right)$, if we take an arbitrary representative element $(p_1,\ldots, p_k) $ of $\mathsf{O}_{r}^{X \otimes Y}$, for every $r \in \Br{m^{X \otimes Y}}$, and define
  	  \begin{align*}
  	  	z_{r} \coloneqq \prod_{t=1}^{k}(J^{\cM})_{p_t} \, ,
  	  \end{align*} 
  	  then we have $\br{J^{\cM}}^{\otimes k} = \sum_{r=1}^{m^{X \otimes Y}}z_{r}C_{r}^{X \otimes Y}$. The same method can be used for $\br{J^{\cN}}^{\otimes k}$.
  	  
  	  Recall that, for every $r \in \Br{m^{X \otimes Y}}$, we have
  	  \begin{align*}
  	  	C_r^{X\otimes Y}   = \sum_{(i,j) \in O_r^{X \otimes Y}} \ketbra{i}{j} \, ,
  	  \end{align*}
  	    where $i= \br{i_{1}^{X}i_{1}^{Y}\cdots i_{k}^{X}i_{k}^{Y}}$ and $j= \br{j_{1}^{X}j_{1}^{Y}\cdots j_{k}^{X}j_{k}^{Y}}$. For any representative element $(i,j)$  of $O_{r}^{X \otimes Y}$ if $i^Y=(i_{1}^Y \cdots i_{k}^Y) \neq j^Y=j_{1}^Y \cdots j_{k}^Y$ then $\tr_{Y^{\otimes k}}\br{\ketbra{i}{j}}= 0$. Therefore,   
  	    \begin{equation*}
  	    	D_r = \sum_{\substack{(i,j) \in O_{r}^{X \otimes Y}\\ i^Y=j^Y}} \ketbra{i^X}{j^X}.
  	    \end{equation*}
       Moreover, for any representative element $(i,j)$ of $O_{r}^{X \otimes Y}$, we can determine the orbit $O_{t}^{X}$ that contains $(i^X,j^X)$ in $\poly(k)$ time. So if we define $\alpha \coloneqq |\{ \pi \in \mathfrak{S}_k: \pi (i^{X}) = i^{X}, \pi (j^X) = j^X \}|$, then
       \begin{align*}
       	D_{r} = \alpha \,C_{t}^{X} \, .
       \end{align*} 
      Furthermore, we have $\alpha = \prod_{a,b \in [d_X]}[(E^{(i^X,j^X)})_{a,b}]!$ with $E^{(i^X,j^X)} \in \mathbb{Z}_{\geq 0}^{d_X \times d_X}$ defined in Eq.~\eqref{eq:number_pair}. This concludes the proof.
\end{proof}


Alternatively, the regular $*$-representation approach can be used to show that the convex program~\eqref{sdp:regD_approximation} can be computed in $\poly(k)$ time. For $\cH\in\{X,X\otimes Y\}$, let $\psi_\cH$ be the regular $*$-representation of $\End^{\mathfrak{S}_k} \br{ \cH^{\otimes k}}$, defined explicitly in Theorem~\ref{thm: regular_rep}. We denote by $\{O_{r}^{\cH}\}_{r\in[m^\cH]}$ and $\{C_r^{\cH}\}_{r\in[m^\cH]}$, the orbits of pairs and the canonical basis of $\End^{\mathfrak{S}_k}(\cH^{\otimes k})$, following the construction in Eq.~\eqref{eq:canonical_basis}. The convex program can be reformulated as
\begin{IEEEeqnarray*}{rCl}
	\frac{1}{\alpha - 1}\,\log \quad \min& \quad  &y\\
	\mathrm{ s.t. }& &\textstyle\sum_{r=1}^{m^{X\otimes Y}} x_r \, \psi_X \! \br{{D_r}} \leq y \id_{m^X} \enspace,\\
	& &\psi_{X \otimes Y}\!\br{\br{J^{\cN}}^{\otimes k}} \leq \psi_{X \otimes Y}\!\br{\br{J^{\cM}}^{\otimes k}} \#_{1/\alpha} \textstyle\sum_{r=1}^{m^{X\otimes Y}} x_r \, \psi_{X \otimes Y}\!\br{C_r^{X \otimes Y}} \enspace,\\
	& &\textstyle\sum_r x_r \, \psi_{X \otimes Y}\!\br{C_r^{X \otimes Y}} \geq 0 \enspace,\\
	& & x_1,\ldots,x_{m^{X \otimes Y}},y\in \RR \enspace.
\end{IEEEeqnarray*}
Recall that $\psi_\cH\br{\End^{\mathfrak{S}_k} \br{ \cH^{\otimes k}}} \subseteq \CC^{m^\cH \times m^\cH}$, where $m^\cH \leq (k+1)^{d_\cH^2}$. 

Note that $\|C_r^{\cH}\|\coloneqq \sqrt{\langle C_r^{\cH},C_r^{\cH} \rangle}$ equals the size of the orbit $O_r^{\cH}$. Using the structure of the orbits, we can compute the multiplication parameters of $\End^{\mathfrak{S}_k} \left( \cH^{\otimes k} \right)$ with respect to the orthogonal basis $\{C_1^{\cH},\dots,C_{m^{\cH}}^{\cH}\}$ as
\begin{align*}
	p_{rs}^{t} = \left| \left\{ l \in \left[(d_\cH)^k\right] \;:\; (i,l) \in O_r^{\cH} \;,\; (l,j) \in O_s^{\cH} \right\} \right|,
\end{align*}
where $(i,j) \in O_{t}^{\cH}$. Here, $p_{rs}^t$ does not depend on the choice of $i$ and $j$. Let $E^s,E^r,E^t$ be the matrices defined in Eq.~\eqref{eq:number_pair} for orbits $O_{s}^{\cH}, O_{r}^{\cH}, O_{t}^{\cH}$, respectively. The following proposition implies that $p_{rs}^t$ can be computed in $\poly(k)$ time.

\begin{proposition}[\cite{gijswijt2009block}]
	The numbers $p_{rs}^{t}$ are given by
	\begin{align*}
		p_{rs}^{t}  = \sum_{B}\prod_{x,y=1}^{d_{\cH}} \binom{(E^t)_{x,y}}{B_{x,1,y},\dots,B_{x,d_{\cH},y}} \, ,
	\end{align*}
where the sum runs over all $B \in \mathbb{Z}_{\geq 0}^{d_{\cH} \times d_{\cH} \times d_{\cH}}$ that satisfy $\sum_{z}B_{x,y,z} = (E^r)_{x,y}$, $\sum_{x}B_{x,y,z} = (E^s)_{y,z}$, $\sum_{y}B_{x,y,z} = (E^{t})_{x,z}$ for all $x,y,z \in [d_{\cH}]$ and $\sum_{x,y,z \in [d_{\cH}]}B_{x,y,z} = k$.  
\end{proposition}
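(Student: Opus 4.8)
The plan is to unwind the definition of the multiplication parameters $p_{rs}^t$ into an explicit combinatorial count and then stratify that count according to the ``triple type'' of the summation variable. Recall from the definition of the multiplication parameters that $C_r^\cH C_s^\cH = \sum_{t} p_{rs}^t\, C_t^\cH$; since a product of $\mathfrak{S}_k$-invariant matrices is again $\mathfrak{S}_k$-invariant, for any representative $(i,j) \in O_t^\cH$ we may read off the single entry
\begin{align*}
	p_{rs}^t \;=\; \br{C_r^\cH C_s^\cH}_{ij} \;=\; \sum_{l \in [d_\cH]^k} \br{C_r^\cH}_{il}\br{C_s^\cH}_{lj} \;=\; \left| \left\{\, l \in [d_\cH]^k \,:\, (i,l) \in O_r^\cH,\ (l,j)\in O_s^\cH \,\right\} \right| \enspace,
\end{align*}
and this is independent of the chosen representative. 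Invoking the characterization stated around Eq.~\eqref{eq:number_pair} --- that two pairs lie in a common $\mathfrak{S}_k$-orbit exactly when they have equal $E$-matrices --- this count equals $\left|\left\{\, l \,:\, E^{(i,l)} = E^r,\ E^{(l,j)} = E^s \,\right\}\right|$, where by hypothesis $E^{(i,j)} = E^t$.

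Next I would attach to every candidate word $l$ the tensor $B(l) \in \mathbb{Z}_{\geq 0}^{d_\cH\times d_\cH\times d_\cH}$ defined by $B(l)_{x,y,z} \coloneqq \left|\left\{ v\in[k] \,:\, i_v = x,\ l_v = y,\ j_v = z \right\}\right|$. Summing $B(l)$ over its third, first, and second index respectively just recounts positions, so
\begin{align*}
	\textstyle\sum_z B(l)_{x,y,z} = E^{(i,l)}_{x,y} \,,\quad \sum_x B(l)_{x,y,z} = E^{(l,j)}_{y,z} \,,\quad \sum_y B(l)_{x,y,z} = E^{(i,j)}_{x,z} \,,\quad \sum_{x,y,z} B(l)_{x,y,z} = k \,.
\end{align*}
Hence a word $l$ is counted above if and only if $B \coloneqq B(l)$ satisfies all four marginal constraints appearing in the statement; the two constraints involving $E^t$ and $k$ hold automatically (because $i,j$ are fixed with $E^{(i,j)} = E^t$), while the remaining two, $\sum_z B_{x,y,z} = (E^r)_{x,y}$ and $\sum_x B_{x,y,z} = (E^s)_{y,z}$, are precisely the orbit-membership conditions $E^{(i,l)} = E^r$ and $E^{(l,j)} = E^s$.

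Finally I would count, for each admissible $B$, the number of words $l$ with $B(l) = B$. Partition $[k]$ into the classes $P_{x,z} \coloneqq \left\{ v \,:\, i_v = x,\ j_v = z \right\}$, which have sizes $|P_{x,z}| = (E^t)_{x,z}$; building such an $l$ amounts to choosing, independently for each pair $(x,z)$, a function $P_{x,z} \to [d_\cH]$ that sends exactly $B_{x,y,z}$ elements to each value $y$. This is well-posed because $\sum_y B_{x,y,z} = (E^t)_{x,z}$, and the number of such functions is the multinomial coefficient $\binom{(E^t)_{x,z}}{B_{x,1,z},\dots,B_{x,d_\cH,z}}$; multiplying over all classes and summing over admissible $B$ gives $p_{rs}^t = \sum_B \prod_{x,z} \binom{(E^t)_{x,z}}{B_{x,1,z},\dots,B_{x,d_\cH,z}}$, which is the claimed formula after renaming the summation index $z$ as $y$. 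I expect no analytic difficulty here: the argument is a purely bijective count, and the only place that demands care is the bookkeeping dictionary between the three roles of the letters ($i$-letter, $l$-letter, $j$-letter), the three subscripts of $B$, and the matrices $E^r,E^s,E^t$ --- once that is pinned down, every step above is immediate.
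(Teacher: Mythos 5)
The paper itself supplies no proof of this proposition---it cites Gijswijt (2009) and, in the paragraph immediately preceding, already records the intermediate identity $p_{rs}^t = \lvert\{ l : (i,l) \in O_r^{\cH},\ (l,j) \in O_s^{\cH}\}\rvert$ that constitutes your first display. Your argument from there is correct and is the standard combinatorial derivation: introducing the triple-contingency tensor $B(l)$, observing that its three two-dimensional marginals recover $E^{(i,l)}, E^{(l,j)}, E^{(i,j)}$, and then counting, for each admissible $B$, the words $l$ with $B(l)=B$ by partitioning positions into the classes $P_{x,z}$ of size $(E^t)_{x,z}$ and distributing middle letters with a multinomial coefficient. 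The bookkeeping (which marginal corresponds to which of $E^r,E^s,E^t$, and the relabeling of the dummy index $z \mapsto y$ in the final product) all checks out, and the two automatic constraints are correctly identified as following from $E^{(i,j)}=E^t$ and $\sum B = k$. No gap.
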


Table \ref{table:reduce_dimension} compares the reduction in the size of the matrices for different values of $k$, using both methods of regular $*$-representation and block-diagonal decomposition. The first column contains $\dim\Br{\Lin(X^{\otimes k}\otimes Y^{\otimes k})}$, for $X = Y =\CC^2$ and different values of $k$. The numbers in the second column correspond to the reduced matrix sizes using regular $*$-representation and the third column contains the block sizes in the block-diagonal decomposition. As illustrated by these examples, the size of the variables and the constraint matrices can be significantly reduced by using block-diagonalization. While the reduction obtained by using the regular $*$-representation is not as strong, it has the advantage that it is easy to compute using the explicit formula given in Eq.~\eqref{eq:regular rep}. 
  
\begin{table}[ht]
  	\begin{center}
  			\begin{tabular}{ |c|c|c|c|c| } 
  			\hline
  			$k$ & $\dim \Lin(X^{\otimes k}\otimes Y^{\otimes k}) $&$\dim \End^{\mathfrak{S}_k}\left(X^{\otimes k}\otimes Y^{\otimes k} \right) $& Block sizes \\
  			\hline 
  			$2$ &$256$ &$136$ & $10,6$ \\
  			\hline 
  			$3$ & $4096$&$816$ &  $20,20,4$ \\ 
  			\hline
  			$4$ & $65536$ & $3876$ & $45,35,20,15,1$ \\
  			\hline
  		\end{tabular}
  		\caption{\label{table:reduce_dimension} Dimensions of $\Lin(X^{\otimes k} \otimes Y^{\otimes k})$, $\End^{\mathfrak{S}_k}(X^{\otimes k} \otimes Y^{\otimes k})$, and the block sizes in the block-diagonal form with $X = Y =\CC^2$.}
  	\end{center}
\end{table}

\subsection{Beyond permutation invariance} \label{sec:Zsymmetry}

So far, we have only focused on the permutation symmetries of convex optimization problem~\eqref{sdp:regD_approximation} arising from considering multiple copies of quantum channels. In this section, we discuss how the group symmetries of the underlying channels may be used to further simplify these convex programs. In particular, we show how the symmetries of the channels can be combined with the permutation symmetry and expressed as invariance under the action of a single group. Theorem~\ref{thm:*-isomorphism decomposition} is then used to simplify the programs. 

Let $G$ be a finite group, and denote by $G^k$, the $k$-fold direct product of $G$. Consider the group $H: = G^k \rtimes_\gamma \mathfrak{S}_k$, an outer semi-direct product of $G^k$ and $\mathfrak{S}_k$, defined as follows:
\begin{itemize}
	\item The underlying set is the Cartesian product of the sets $G^k$ and $\mathfrak{S}_k$, i.e., the set of ordered pairs $(g,\pi)$, where $g=\br{g_1,g_2,\ldots,g_k}\in G^k$ and $\pi\in \mathfrak{S}_k$.
	\item $\gamma: \mathfrak{S}_k \rightarrow \mathrm{Aut}\br{G^k}$ is a group homomorphism given by \begin{equation*}
		\gamma(\pi)\br{g_1,g_2,\ldots,g_k}=\br{g_{\pi(1)},g_{\pi(2)},\ldots,g_{\pi(k)}}\enspace,
	\end{equation*}
	for every $\pi\in \mathfrak{S}_k$ and $g=\br{g_1,g_2,\ldots,g_k}\in G^k$.
	\item The group operation $*$ is defined for any pair $(g,\pi),(g',\pi')\in H$ as 
	\begin{equation*}
		(g',\pi')*(g,\pi)=(g' \gamma(\pi')(g),\pi'\pi).
	\end{equation*}
\end{itemize}

Consider an arbitrary action of $G$ on a finite dimensional Hilbert space $\cH$, and the natural action of $\mathfrak{S}_k$ on $\cH^{\otimes k}$ defined for every $\pi \in \mathfrak{S}_k$ as
\begin{align}\label{eq:sk_action}
	\pi \cdot \left( h_1 \otimes \dots \otimes h_k \right) =  h_{\pi^{-1}(1)} \otimes \dots \otimes h_{\pi^{-1}(k)} \quad,\quad h_i \in \cH \,,\, \forall i\in [k] \enspace.
\end{align}  
Then it is easy to check that the following defines an action of $H=G^k \rtimes_\gamma \mathfrak{S}_k$ on $\cH^{\otimes k}$:
\begin{align}\label{eq:H_action}
	(g,\pi) \cdot (h_1 \otimes \dots \otimes h_k )  = g_1\cdot h_{\pi^{-1}(1)} \otimes \dots \otimes g_k \cdot h_{\pi^{-1}(k)} \quad,\quad h_i \in \cH \,,\, \forall i\in [k] \enspace, 
\end{align}
for all $\pi \in \mathfrak{S}_k$ and $g \in G^k$. In particular, we have
\begin{align*}
	(g',\pi') \cdot \br{(g,\pi) \cdot (h_1 \otimes \dots \otimes h_k )}  = \br{(g',\pi') * \br{(g,\pi)} \cdot (h_1 \otimes \dots \otimes h_k )} \enspace,
\end{align*}
for every $(g,\pi),(g',\pi')\in H$.

For $\cH\in\{X,Y\}$, let $\rho_\cH:G\rightarrow \GL(\cH)$ be the representation of $G$ defined by its action on $\cH$ and $\rho_{X\otimes Y} \coloneqq \rho_X\otimes \rho_Y$. Let $\sigma_\cH$ denote the representation of $G^k$ on $\cH^{\otimes k}$ given by $\sigma_\cH(g)\coloneqq \rho_\cH(g_1)\otimes \ldots \otimes \rho_\cH(g_k)$, for every $g\in G^k$. As before, denote by $P_\cH$ the representation of $\mathfrak{S}_k$ on $\cH^{\otimes k}$ defined by the action~\eqref{eq:sk_action}. Then the representation of $H=G^k \rtimes_\gamma \mathfrak{S}_k$ defined above on $\cH^{\otimes k}$ is given by $\sigma_\cH(g) P_\cH(\pi)$, for every $(g,\pi)\in H$. Note that in \eqref{eq:H_action}, for $\cH=X\otimes Y$, the action of $(g, \pi)$ on $(X\otimes Y)^{\otimes k}$ corresponds to the simultaneous permutation of the $X$ and $Y$ tensor factors followed by applying $\rho_{X}(g_i)\otimes \rho_{Y}(g_i)$ on $i$-th $X\otimes Y$ tensor factor. When the subsystems are reordered as $X^{\otimes k} \otimes Y^{\otimes k}$, this action is simply given by $\sigma_X(g)P_X(\pi) \otimes \sigma_Y(g)P_Y(\pi)$.
With the above notation, we are now ready to state the following proposition:

\begin{proposition}\label{prop:H_invariant_SDP}
	Let $\cN_{X \rightarrow Y}$ and $\cM_{X \rightarrow Y}$ be a quantum channels with Choi operators $J^\cN,J^\cM\in \End^G(X\otimes Y)$, for some finite group G. Then the convex program \eqref{sdp:regD_approximation} has an optimal solution $A \in \End^{H}(X^{\otimes k} \otimes Y^{\otimes k})$, where $H = G^k \rtimes_\gamma \mathfrak{S}_k$.
\end{proposition}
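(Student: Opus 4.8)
The plan is to mimic the proof of Lemma~\ref{lem:inv_space_1}, but now averaging over the larger group $H = G^k \rtimes_\gamma \mathfrak{S}_k$ instead of just $\mathfrak{S}_k$. First I would observe that, by Slater's condition, the convex program \eqref{sdp:regD_approximation} attains its optimum at a feasible $A$. Writing the action of $(g,\pi)\in H$ on $(X\otimes Y)^{\otimes k}$ as the unitary $V(g,\pi) \coloneqq \sigma_{X\otimes Y}(g)P_{X\otimes Y}(\pi)$ (which equals $\bigl(\sigma_X(g)P_X(\pi)\bigr)\otimes\bigl(\sigma_Y(g)P_Y(\pi)\bigr)$ after reordering the subsystems as $X^{\otimes k}\otimes Y^{\otimes k}$), I would define the $H$-average
\begin{align*}
	\widehat{A} \coloneqq \frac{1}{|H|}\sum_{(g,\pi)\in H} V(g,\pi)\,A\,V(g,\pi)^{*}\,,
\end{align*}
and show that $\widehat{A}\in\End^{H}(X^{\otimes k}\otimes Y^{\otimes k})$ is feasible with objective value no larger than that of $A$.

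The feasibility argument is essentially identical to the one in Lemma~\ref{lem:inv_space_1}: using joint-concavity/sub-additivity (Property~4) of the $\alpha$-geometric mean, then Properties~2 and~3 (the transformer equality for the unitary $V(g,\pi)$ and the scaling property) to pull the conjugation outside the geometric mean, then feasibility of $A$, one gets
\begin{align*}
	\widehat{A}\,\#_{1/\alpha}\,J^{\cM^{\otimes k}} \geq \frac{1}{|H|}\sum_{(g,\pi)\in H} V(g,\pi)\bigl(A\,\#_{1/\alpha}\,J^{\cM^{\otimes k}}\bigr)V(g,\pi)^{*} \geq \frac{1}{|H|}\sum_{(g,\pi)\in H} V(g,\pi)\,J^{\cN^{\otimes k}}\,V(g,\pi)^{*} = J^{\cN^{\otimes k}}\,.
\end{align*}
The key input here is that $J^{\cM^{\otimes k}}$ and $J^{\cN^{\otimes k}}$ are $H$-invariant: since $J^\cN, J^\cM \in \End^G(X\otimes Y)$, the tensor power $J^{\cN^{\otimes k}}$ is invariant under $\sigma_{X\otimes Y}(g)$ for every $g\in G^k$ (it is a tensor product of $G$-invariant operators) and is manifestly invariant under the permutations $P_{X\otimes Y}(\pi)$ (it is a tensor power), hence invariant under all of $V(g,\pi)$. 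For the objective function I would use that $V(g,\pi)=\bigl(\sigma_X(g)P_X(\pi)\bigr)\otimes\bigl(\sigma_Y(g)P_Y(\pi)\bigr)$ factorizes across $X^{\otimes k}$ and $Y^{\otimes k}$, so that
\begin{align*}
	\tr_{Y^{\otimes k}}\bigl(V(g,\pi)\,A\,V(g,\pi)^{*}\bigr) = \bigl(\sigma_X(g)P_X(\pi)\bigr)\,\tr_{Y^{\otimes k}}(A)\,\bigl(\sigma_X(g)P_X(\pi)\bigr)^{*}\,,
\end{align*}
and then the triangle inequality together with unitary invariance of the operator norm gives $\|\tr_{Y^{\otimes k}}(\widehat{A})\|_\infty \leq \|\tr_{Y^{\otimes k}}(A)\|_\infty$, exactly as before. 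Finally I would note that $\widehat{A}\geq 0$ since it is an average of conjugations of the PSD operator $A$, and $\widehat{A}$ lies in $\End^{H}(X^{\otimes k}\otimes Y^{\otimes k})$ by construction, using the group law $(g',\pi')*(g,\pi)$ of $H$ to verify that the averaging is a genuine projection onto the $H$-invariant subalgebra.

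I do not expect any serious obstacle: the only point that needs a little care is checking that $V(g,\pi)$ really is a unitary representation of $H$ with the prescribed semidirect-product multiplication (so that $\{V(g,\pi)^{*}(\cdot)V(g,\pi)\}$ genuinely averages over group orbits and $\widehat{A}$ is $H$-invariant), and that the trace-out $\tr_{Y^{\otimes k}}$ interacts correctly with the factorized form of $V(g,\pi)$ — both of which were essentially set up in the paragraph preceding the proposition. One should also recall that $J^{\cN^{\otimes k}}$ being a tensor power of an operator in $\End^G(X\otimes Y)$ is $\sigma_{X\otimes Y}(g)$-invariant for arbitrary $g=(g_1,\dots,g_k)\in G^k$, not merely for $g$ of the form $(g_0,\dots,g_0)$; this is immediate since $\sigma_{X\otimes Y}(g)=\bigotimes_i \rho_{X\otimes Y}(g_i)$ acts factor-wise and each factor fixes the corresponding copy of $J^{\cN}$.
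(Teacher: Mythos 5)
Your proposal is correct and matches the paper's own argument exactly: the paper's proof is a one-line reference to following the steps of Lemma~\ref{lem:inv_space_1} with the group average taken over $H = G^k \rtimes_\gamma \mathfrak{S}_k$ rather than $\mathfrak{S}_k$, which is precisely what you carry out (including the key verification that $J^{\cN^{\otimes k}}$ and $J^{\cM^{\otimes k}}$ are $H$-invariant when $J^{\cN},J^{\cM}\in\End^G(X\otimes Y)$). No gaps.
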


\begin{proof}
	The proof is based on convexity and exactly follows the steps of the proof of Lemma~\ref{lem:inv_space_1}, except the group average operator $\bar{A}$ is now obtained with respect to the group $H$.
\end{proof}


Next, we discuss the irreducible representations of $G^k \rtimes_\gamma \mathfrak{S}_k$ and the corresponding multiplicities for the action of $H$ on $\cH^{\otimes k}$, defined in Equation~\eqref{eq:H_action}. First, we need to introduce some notations.

Suppose that $G$ has $t$ irreducible representations and let $m_i$ denote the multiplicity of the $i$-th irreducible representation in the representation $\rho_\cH$ of $G$ on $\cH$. Let $\mathcal{T}(k)$ be the collection of all $t$-tuples $(k_1,\dots,k_t)$  of non-negative integers such that $\sum_{i=1}^{t}k_i = k$. For $(k_1,\dots,k_t) \in \mathcal{T}(k)$ and $(\lambda^1,\dots,\lambda^t)$ satisfying $\lambda^i \vdash_{m_i} k_i$, for every $i\in[t]$, we write $(\lambda^1,\dots,\lambda^t) \vdash_{\bm{m}} (k_1,\dots,k_t)$, where $\bm{m}=\br{m_1,\ldots,m_t}$. We then use a result from~\cite{Polak_thesis}.

\begin{proposition}[Proposition 3.1.1, \cite{Polak_thesis}]
	The irreducible representations of $H = G^k \rtimes_\gamma \mathfrak{S}_k$ are labeled by 
	\begin{align*}
		\{(k_1,\dots,k_t),(\lambda^1,\dots,\lambda^t): (k_1,\dots,k_t) \in \mathcal{T}(k), (\lambda^1,\dots,\lambda^t) \vdash_{\bm{m}} (k_1,\dots,k_t)  \}.
	\end{align*}    
	and the corresponding multiplicities are $\prod_{i=1}^{t}|T_{\lambda^{i},m_i}|$.
\end{proposition}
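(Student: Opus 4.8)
The plan is to reduce the statement to the representation theory of the two factors appearing in the semidirect product, namely $G^k$ and $\mathfrak{S}_k$, and then invoke the standard description of irreducible representations of such a semidirect product when $\mathfrak{S}_k$ acts on $G^k$ by permuting coordinates (a ``wreath-product-type'' situation). First I would fix the decomposition of the $G$-module $\cH$ into its isotypical components, so that $\cH \cong \bigoplus_{i=1}^t \CC^{m_i}\otimes V_i$, where $V_1,\dots,V_t$ are the distinct irreducible $G$-modules and $m_i$ is the multiplicity of $V_i$ in $\rho_\cH$. Tensoring, $\cH^{\otimes k} \cong \bigoplus_{(k_1,\dots,k_t)} \binom{k}{k_1,\dots,k_t}$-many copies of $\bigotimes_i (\CC^{m_i}\otimes V_i)^{\otimes k_i}$, suitably symmetrized; the bookkeeping of which copies get identified under the $\mathfrak{S}_k$-action is exactly what produces the index set $\mathcal{T}(k)$ together with the partitions $(\lambda^1,\dots,\lambda^t)$.

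The key steps, in order, are: (1) recall that for $H = G^k \rtimes_\gamma \mathfrak{S}_k$ with $\mathfrak{S}_k$ permuting the factors of $G^k$, the irreducible representations of $H$ are obtained via Clifford theory / the ``little groups'' method: one takes an irreducible representation of $G^k$, computes its $\mathfrak{S}_k$-stabilizer, and induces up a representation built from the stabilizer together with an irreducible of the stabilizer subgroup of $\mathfrak{S}_k$. For the coordinate action, an irreducible of $G^k$ is a tuple $(W_1,\dots,W_k)$ of irreducibles of $G$, its $\mathfrak{S}_k$-orbit is determined by how many coordinates carry each isomorphism type, and its stabilizer in $\mathfrak{S}_k$ is a Young subgroup $\mathfrak{S}_{k_1}\times\dots\times\mathfrak{S}_{k_t}$; the irreducibles of this Young subgroup are tuples of Specht modules indexed by partitions $\mu^i \vdash k_i$. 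This yields the labeling set in the statement once we observe that only those $i$ with $m_i>0$ (equivalently, those $V_i$ occurring in $\rho_\cH$) contribute when we later compute multiplicities, and the relevant partitions have height at most $m_i$. (2) Compute the multiplicity of each such irreducible in the specific $H$-module $\cH^{\otimes k}$. Here I would use that $\cH^{\otimes k}$, as a $G^k$-module, is $\rho_\cH^{\otimes k}$, whose $(W_1,\dots,W_k)$-isotypical component has multiplicity $\prod_{v=1}^k m_{i(v)}$ where $W_v \cong V_{i(v)}$; restricting attention to an orbit of type $(k_1,\dots,k_t)$ and then applying Schur–Weyl-type reasoning for each block (the multiplicity space $\CC^{m_i}$ tensored $k_i$ times carries the $\mathfrak{S}_{k_i}$-action, decomposing with multiplicity $|T_{\lambda^i,m_i}|$ for the partition $\lambda^i$, by the dimension formula for the $\GL_{m_i}$-irreducible, cf. the discussion around Eq.~\eqref{eq:number_partitions}) gives the product $\prod_{i=1}^t |T_{\lambda^i,m_i}|$. (3) Check consistency via dimension count: $\sum$ over the labeling set of (multiplicity)$\times$(dimension of the $H$-irreducible) should equal $(\dim\cH)^k$, which pins down that no label has been double-counted or omitted.

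The main obstacle I expect is step (2): carefully tracking how the permutation action of $\mathfrak{S}_k$ interacts with the multiplicity spaces versus the irreducible-type spaces, so that the $\mathfrak{S}_{k_i}$-action lands precisely on $(\CC^{m_i})^{\otimes k_i}$ and contributes $|T_{\lambda^i,m_i}|$ rather than, say, a Kostka number or a dimension of a Specht module (which would be the wrong count). The cleanest route around this is to cite Proposition 3.1.1 of~\cite{Polak_thesis} directly, as the statement does, and restrict the exposition to verifying that our action~\eqref{eq:H_action} is literally an instance of the setup there — i.e. that $H$ as defined is the same group, that $\cH^{\otimes k}$ with the action~\eqref{eq:H_action} is the module to which that proposition applies, and that the multiplicities $m_i$ of $\rho_\cH$ coincide with the parameters $\bm{m}$ used to index the partitions. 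With that identification in hand, the proposition follows immediately; the remaining work is purely the translation of notation, plus the observation (needed to make the bounds in the next section meaningful) that $|T_{\lambda^i,m_i}| \le (k_i+1)^{m_i(m_i-1)/2}$ by Eq.~\eqref{eq:number_partitions}.
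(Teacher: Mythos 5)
The paper itself gives no proof of this proposition; it is stated purely as a citation to Proposition 3.1.1 of Polak's thesis, and your proposal, after sketching the argument, ultimately recommends exactly that citation, so in that sense the approaches coincide. Your sketch of the underlying argument is nonetheless correct and matches the argument in the cited source: you correctly identify the little-groups/Clifford-theory mechanism (irreducibles of $G^k$ grouped by $\mathfrak{S}_k$-orbit type $(k_1,\dots,k_t)\in\mathcal{T}(k)$, with Young-subgroup stabilizer $\prod_i \mathfrak{S}_{k_i}$ whose irreducibles are tuples of Specht modules $S^{\lambda^i}$), correctly place the $\mathfrak{S}_{k_i}$-action on the multiplicity space $(\CC^{m_i})^{\otimes k_i}$ rather than on the irreducible-type space, and correctly invoke Schur--Weyl duality to read off the multiplicity $\dim\Hom_{\mathfrak{S}_{k_i}}(S^{\lambda^i},(\CC^{m_i})^{\otimes k_i})=|T_{\lambda^i,m_i}|$; the only imprecision is that the set in the statement labels those irreducibles of $H$ occurring in $\cH^{\otimes k}$ (the height constraint $\vdash_{\bm{m}}$ and the dependence on $\bm{m}$ already betray this), not all irreducibles of $H$, but you flag this yourself.
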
 
Note that $|\mathcal{T}(k)|=\binom{k+t-1}{t-1}$, where $t$, the number of irreducible representations of $G$, is a property of $G$ and independent of $k$. Since $G$ is a finite group, we have $t\leq |G|$. Moreover, for a fixed tuple $(k_1,\dots,k_t) \in \mathcal{T}(k)$, by Inequality~\eqref{eq:partitions}, we have the size of the set $\{(\lambda^1,\dots,\lambda^t): (\lambda^1,\dots,\lambda^t) \vdash (k_1,\dots,k_t) \}$ is at most $\prod_{i=1}^{t}(k_i+1)^{m_i}$. Since $m_i\leq \dim(\cH)$, for every $i\in[t]$, the number of irreducible representations of $H$ is polynomial in $k$. 
Since $|T_{\lambda^{i},m_i}| \leq (k_i+1)^{m_i(m_i-1)/2}$, the multiplicity of the corresponding irreducible representation of $H$ is at most $\prod_{i=1}^{t}|T_{\lambda^{i},m_i}|
\leq \prod_{i=1}^{t}(k_i+1)^{m_i(m_i-1)/2}$.

\subsubsection{Application to the generalized amplitude damping channel}

As an application, we consider the generalized amplitude damping (GAD) channel defined as
\begin{align} \label{eq:GAD_channel}
	\cA_{p,q} (\rho) = \sum_{i=1}^{4}A_i\rho A_{i}^{*}, \quad p,q \in [0,1]
\end{align}  
with the Kraus operators
\begin{equation}
	\begin{aligned}
		A_1 &= \sqrt{1-q}(\ketbra{0}{0}+\sqrt{1-p}\ketbra{1}{1}), &&A_2= \sqrt{p(1-q)}(\ketbra{0}{1}),\\
		A_3 &= \sqrt{q}(\sqrt{1-p}\ketbra{0}{0}+\ketbra{1}{1}),  &&A_4= \sqrt{pq}\ketbra{1}{0}.  
	\end{aligned}
\end{equation}

The GAD channel reduces to the conventional amplitude damping (AD) channel, when $q=0$. In this case we have $X = Y = \CC^2$. Let $N_{p,q}$ be the Choi matrix of $\mathcal{A}_{p,q}$. 
Note that for the Pauli $\mathrm{Z}$ operator given by
\begin{equation*}
	\mathrm{Z} = 
	\begin{pmatrix}
		1 & 0 \\
		0 & -1
	\end{pmatrix},
\end{equation*} 
we have, $(\mathrm{Z}\otimes \mathrm{Z})N_{p,q}(\mathrm{Z} \otimes \mathrm{Z}) = N_{p,q}$ for all $p,q \in [0,1]$. Let $G = \mathbb{Z}_2$ be the cyclic group of order 2 and define the group representation $\rho: G \ra \GL(\CC^2)$ given by $\rho(1) = \mathrm{Z}$. Then for the representation $\rho_{X\otimes Y}$ defined for every $g\in\mathbb{Z}_2$ as $\rho_{X\otimes Y}(g)=\rho(g)\otimes \rho(g)$, we have $N_{p,q}\in \End^G(X\otimes Y)$. The representation $\rho_{X\otimes Y}$ has two irreducible representations, which are both $1$-dimensional (since $G$ is an Abelian group). In this representation, the multiplicities are $(m_1,m_2) = (2,2)$. Therefore, the multiplicities in the representation of $H=G^k \rtimes_\gamma \mathfrak{S}_k$ on $X^{\otimes k} \otimes Y^{\otimes k}$ are at most $(k_1+1)(k_2+1) \leq (k_1+k_2+2)^{2}/4 = (k+2)^2/4$. 
Furthermore, since $t=2$, we have $|\mathcal{T}(k)|=k$, and for any $(k_1,k_2) \in \mathcal{T}(k) $, the size of the set $\{(\lambda^1,\lambda^2): (\lambda^1,\lambda^2) \vdash (k_1,k_2) \}$ is at most $(k_1+1)^2(k_2+1)^2 \leq (k+2)^4/16$. Therefore the number of irreducible representations of $H$ is at most $k(k+2)^{4}/16$. Since the dimension of the invariant subspace is equal to the sum of squares of the multiplicities of the irreducible representations, we have $\dim \End^{H}(X^{\otimes k} \otimes Y^{\otimes k}) \leq \left( (k+2)^2/4\right)^2 k(k+2)^4/16 = k(k+2)^8/256$. 

Therefore, in this example, by considering the additional $\mathrm{Z}$ symmetry discussed above, we can reduce the dimension of the invariant subspace from $\mathrm{O}\br{k^{16}}$ for the permutation action (see Eq.~\eqref{eq:dim_Sk_inv}) to $\mathrm{O}\br{k^9}$, when we combine the two symmetries. Moreover, the maximum block size is reduced from $\mathrm{O}\br{k^6}$ (see Eq.~\eqref{eq:dim_S_lambda}) to $\mathrm{O}\br{k^2}$. This shows the potential of the approach introduced above for channels with stronger symmetries.

In the following table, we compare the dimensions of the $\mathfrak{S}_k$-invariant and $H$-invariant  subspace of operators for $X=Y=\CC^2$ and different values of $k$.  We also list the number of irreducible representations and the maximum block size of the invariant operators in the block-diagonal form.

\begin{table}[h!]
	\centering
	\small
	\begin{tabular}{|c|c|c|c|c|c|c|}
		\hline
		\multirow{3}{*}{$k$} & \multicolumn{3}{c|}{$\mathfrak{S}_k$} & %
		\multicolumn{3}{c|}{$G^k \rtimes \mathfrak{S}_k$}\\
		
		\cline{2-7}
		& $\dim \End^{\mathfrak{S}_k}(\cH^{\otimes k})$ & max.block size &\#-irreps & $\dim \End^{G^k \rtimes \mathfrak{S}_k}(\cH^{\otimes k})$ & max.block size & \#-irreps  \\
		\hline
		$2$&136 &10 & 2&36 &4 &5 \\
		\hline
		$3$& 816& 20& 3&120 &6 &8 \\
		\hline
		$4$& 3876& 45& 5&330 &9 &14 \\
		\hline
		$5$& 15504& 84& 6&792 &12 & 20\\
		\hline
		$6$& 54264& 140&9 &1716 &16&30 \\
		\hline
		$7$& 170544&224 &11 &3432 &20&40 \\
		\hline
		$8$& 490314& 360& 15&6435&25& 55\\
		\hline
		$9$& 1307504&540 & 18&11440&30&70 \\
		\hline
		$10$& 3268760&770 &23 & 19448 &36&91 \\
		\hline
	\end{tabular}
	\label{table:Zsym}
	\caption{The comparison of the reductions obtained by considering invariance under the action of $\mathfrak{S}_k$ and $G^{k}\rtimes \mathfrak{S}_k$ on $\cH^{\otimes k}$, where $\cH = \CC^2 \otimes \CC^2$.}
\end{table}

 We use our method for efficient computation of the $\#$-R\'enyi divergence between multiple copies of channels to provide improved upper bounds on the regularized Umegaki divergence between the AD channel $\cA_{0.3,0}$ and the GAD channel $\cA_{p,0.9}$, over the range $p\in[0.4,0.8]$. Note that the Umegaki divergence between these channels is known to be non-additive~\cite{fang2020chain}, i.e., $\rD^{\mathrm{reg}}(\cA_{0.3,0} \| \cA_{p,0.9}) > \rD(\cA_{0.3,0} \| \cA_{p,0.9})$. Figure \ref{fig:divergence} illustrates the improvement obtained using $\newD_{\alpha}$ on $k=1$ and $k=6$ copies compared to $\widehat{D}_{\alpha}$, for $\alpha=2$. The convex programs are implemented in MATLAB using the CVX package~\cite{cvx} and the CVXQUAD package~\cite{cvxquad}, via the MOSEK solver~\cite{mosek}. Computations in this paper were done using Intel(R) Core i5-6300U with 16GB of RAM memory.  The running time for $k=6$ copies on our program is less than $45$ minutes while the program without using symmetry cannot be carried out due to insufficient memory. We note that without using the symmetry reduction the matrices are of size $4096 \times 4096$.    
\begin{figure}[ht!]
    \centering
%
%
\definecolor{mycolor1}{rgb}{1,0,0}%
\definecolor{mycolor2}{rgb}{0.85000,0.32500,0.09800}%
\definecolor{mycolor3}{rgb}{0.92900,0.69400,0.12500}%
\definecolor{mycolor4}{rgb}{0,0,1}%
\begin{tikzpicture}

\begin{axis}[%
width=3.5in,
height=3in,
at={(0.47in,1.25in)},
scale only axis,
xmin=0.4,
xmax=0.8,
ymin=0.5,
ymax=3.00,
xlabel={$p$},
ylabel style={rotate=-90},
axis background/.style={fill=white},
legend style={at={(axis cs:0.41,1.11)},anchor=south west, font=\small}
]
\addplot [color=mycolor1, line width=1.5pt]
  table[row sep=crcr]{%
0.4	    2.9574\\
0.42	2.9065\\
0.44	2.8612\\
0.46	2.8209\\
0.48	2.7853\\
0.50	2.754\\
0.52	2.7267\\
0.54	2.7031\\
0.56	2.6829\\
0.58	2.6661\\
0.60	2.6524\\
0.62	2.6417\\
0.64	2.6339\\
0.66 	2.6289\\
0.68 	2.6267\\
0.70 	2.6273\\
0.72 	2.6307\\
0.74 	2.637\\
0.76 	2.6463\\
0.78 	2.6586\\
0.80 	2.6743\\
};
\addlegendentry{$\widehat{\rD}_{2}$ and $\rD_{\max}$}

\addplot [color=mycolor2, line width=1.5pt, dashed]
  table[row sep=crcr]{%
0.4	    2.0992\\
0.42	2.089\\
0.44	2.0834\\
0.46	2.0818\\
0.48	2.0839\\
0.50	2.0896\\
0.52	2.0983\\
0.54	2.1101\\
0.56	2.1246\\
0.58	2.1418\\
0.60	2.1614\\
0.62	2.1834\\
0.64	2.2076\\
0.66 	2.2342\\
0.68 	2.2629\\
0.70 	2.2938\\
0.72 	2.3269\\
0.74 	2.3623\\
0.76 	2.4001\\
0.78 	2.4403\\
0.80 	2.4831\\
};
\addlegendentry{$\newD_{2}$ with $k=1$}

\addplot [color=mycolor3, line width=1.5pt, dotted]
  table[row sep=crcr]{%
0.4  1.9017\\
0.42  1.8875\\
0.44  1.8795\\
0.46  1.8801\\
0.48  1.8738\\
0.5  1.8713\\
0.52  1.87\\
0.54  1.8702\\
0.56  1.8739\\
0.58  1.8804\\
0.6  1.8906\\
0.62  1.9036\\
0.64  1.9203\\
0.66  1.9405\\
0.68  1.9655\\
0.7  1.9955\\
0.72  2.0298\\
0.74  2.0671\\
0.76  2.1101\\
0.78  2.1549\\
0.8  2.2029\\
};
\addlegendentry{$\newD_{2}$ with $k=6$}

\addplot [color=mycolor4, line width=1.5pt]
table[row sep=crcr]{%
	0.4	    0.75246\\
	0.42	0.77795\\
	0.44	0.80767\\
	0.46	0.8411\\
	0.48	0.87786\\
	0.50	0.91763\\
	0.52	0.96018\\
	0.54	1.0054\\
	0.56	1.0531\\
	0.58	1.1034\\
	0.60	1.1561\\
	0.62	1.2115\\
	0.64	1.2696\\
	0.66 	1.3305\\
	0.68 	1.3944\\
	0.70    1.4615\\
	0.72 	1.5321\\
	0.74 	1.6065\\
	0.76 	1.6851\\
	0.78 	1.7682\\
	0.80 	1.8565\\
};
\addlegendentry{$\rD$}

\end{axis}
\end{tikzpicture}%
  	\caption{Comparison between different bounds on $\rD^{\mathrm{reg}}(\cA_{0.3,0} \| \cA_{p,0.9})$ over the range $p \in [0.4,0.8]$.} 
  	\label{fig:divergence}
\end{figure}

	

\section{Efficient bounds on classical capacity of quantum channels}
\label{section:5}
\suppress{In this section, we provide another example of how symmetries can be utilized in computing a regularized information theoretic quantity.}
The \emph{(unassisted) classical capacity} of a quantum channel is defined as the maximum rate at which classical information can be transmitted over the quantum channel in the asymptotic limit of many channel uses.
For a quantum channel $\cN$, the classical capacity is characterized by the regularized Holevo information~\cite{Holevo_98,Schumacher_97} as
\begin{align*}
    \CCap(\cN) = \lim_{k \to \infty}\frac{1}{k} \chi(\cN^{\otimes k}) \enspace,
\end{align*}
where $\chi(\cN)$ is the Holevo capacity of the channel $\cN$ defined as \begin{align*}
    \chi(\cN)\coloneqq \max_{\cE=\{p_i,\rho_i\}_i} \mathrm{H}\left( \sum_i p_i \rho_i \right)-\sum_i p_i \mathrm{H}\left( \cN(\rho_i) \right) \enspace,
\end{align*} 
where the maximization is over all quantum ensembles $\cE=\{p_i,\rho_i\}_i$. Here, $\mathrm{H}$ denotes the von Neumann entropy, defined as $\mathrm{H}(\sigma)\coloneqq -\tr\br{\rho \log \rho}$, for every positive semidefinite operator $\rho$. Note that the Holevo information is in general non-additive~\cite{hastings2009superadditivity}.

We denote by $\cV_\mathrm{cb}(X,Y)$ the set of constant bounded subchannels from $\Lin(X)$ to $\Lin(Y)$ defined as 
\begin{align*}
    \cV_\mathrm{cb}(X,Y) \coloneqq \left\{ \cM \in \CP(X:Y): \exists\, \sigma \in \D(Y) \text{ s.t. } \cM_{X\to Y}(\rho)\leq \sigma, \forall \rho \in \D(X)\right\} \enspace.
\end{align*}
Let $\cV(X,Y) \coloneqq \{\cM \in \CP(X:Y): \beta(J_{XY}^{\cM})\leq 1\}$, with $\beta(J_{XY}^{\cM})$ defined in terms of the following SDP
\begin{align*}
    \beta(J_{XY}^{\cM}) \coloneqq \min_{R_{XY},S_Y} \tr(S_Y) \text{\quad s.t. \quad} R_{XY} \pm (J_{XY}^{\cM})^{T_Y} \geq 0 \;,\; \id_{X} \otimes S_Y \pm R_{XY}^{T_Y} \geq 0 \enspace,
\end{align*}
where $(\cdot)^{T_Y}$ denotes the partial transpose on system $Y$. Note that the set $\cV(X,Y)$ is a convex subset of $\cV_\mathrm{cb}(X,Y)$ containing all the constant channels~\cite{XWang_19}.

Let $\textbf{D}$ be a generalized quantum divergence. For any quantum channel $\cN_{X \to Y}$, define
\begin{align*}
\Upsilon(\textbf{D},k)(\cN): = \min_{\cM \in \cV(X^{\otimes k},Y^{\otimes k})} \textbf{D}(\cN^{\otimes k}\| \cM).
\end{align*}

The following proposition provides upper bounds on the classical capacity of a quantum channel.  

\begin{proposition} [\cite{XWang_19}] \label{prop:upperbound_capacity}
	Let $\rDbf$ be a generalized quantum divergence. If $\rDbf$ is bounded below by the Umegaki relative entropy on quantum states and the corresponding channel divergence is subadditive under tensor product of channels, then, for any $k \geq 1$,
	\begin{align*}
	\CCap(\cN) \leq \frac{1}{k} \Upsilon(\rDbf,k)(\cN) \, .
	\end{align*}
\end{proposition}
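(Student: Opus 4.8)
The plan is to first prove a single-letter inequality, $\chi(\cL) \le \Upsilon(\mathbf{D},1)(\cL)$ for an arbitrary quantum channel $\cL$, and then obtain the $k$-th level bound by regularizing, using the hypothesized subadditivity of the channel divergence together with the fact that $\cV$ is closed under tensor products. For the single-letter bound, recall the variational formula $\chi(\cL) = \min_{\sigma \in \D(Y)} \overline{\rD}(\cL \| \cT_\sigma)$. Fix any $\cM \in \cV(X,Y)$; since $\cV(X,Y) \subseteq \cV_{\mathrm{cb}}(X,Y)$, there is a state $\sigma$ with $\cM(\rho) \le \sigma$ for every $\rho \in \D(X)$. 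For each input $\rho$, antitonicity of the Umegaki divergence in its second argument (operator monotonicity of $\log$) gives $\rD(\cL(\rho) \| \cM(\rho)) \ge \rD(\cL(\rho) \| \sigma) = \rD(\cL(\rho) \| \cT_\sigma(\rho))$; taking the supremum over $\rho$ yields $\overline{\rD}(\cL \| \cM) \ge \overline{\rD}(\cL \| \cT_\sigma) \ge \chi(\cL)$. Lifting to the stabilized divergence, applying the assumption $\mathbf{D} \ge \rD$ on states pointwise to the (sub)states produced by each purified input shows $\mathbf{D}(\cL \| \cM) \ge \rD(\cL \| \cM)$, and the stabilized channel divergence dominates the non-stabilized one, $\rD(\cL \| \cM) \ge \overline{\rD}(\cL \| \cM)$. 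Hence $\mathbf{D}(\cL \| \cM) \ge \chi(\cL)$ for every $\cM \in \cV(X,Y)$, i.e.\ $\chi(\cL) \le \Upsilon(\mathbf{D},1)(\cL)$.

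For the regularization, apply the single-letter bound to $\cL = \cN^{\otimes n}$ and use the definitional identity $\Upsilon(\mathbf{D},1)(\cN^{\otimes n}) = \Upsilon(\mathbf{D},n)(\cN)$ to get $\chi(\cN^{\otimes n}) \le \Upsilon(\mathbf{D},n)(\cN)$ for all $n$. To reach a fixed $k$, I would establish subadditivity of $\Upsilon$ along multiples of $k$: let $\cM^{\star} \in \cV(X^{\otimes k}, Y^{\otimes k})$ attain $\Upsilon(\mathbf{D},k)(\cN)$; then $(\cM^{\star})^{\otimes m} \in \cV(X^{\otimes mk}, Y^{\otimes mk})$ because $\cV$ is closed under tensor products (submultiplicativity of $\beta$, \cite{XWang_19}), and the hypothesized subadditivity of $\mathbf{D}(\cdot \| \cdot)$ under tensor products gives $\Upsilon(\mathbf{D},mk)(\cN) \le \mathbf{D}((\cN^{\otimes k})^{\otimes m} \| (\cM^{\star})^{\otimes m}) \le m \, \mathbf{D}(\cN^{\otimes k} \| \cM^{\star}) = m \, \Upsilon(\mathbf{D},k)(\cN)$. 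Combining, $\frac{1}{mk} \chi(\cN^{\otimes mk}) \le \frac{1}{mk} \Upsilon(\mathbf{D},mk)(\cN) \le \frac{1}{k} \Upsilon(\mathbf{D},k)(\cN)$ for every $m \ge 1$; letting $m \to \infty$ and using $\CCap(\cN) = \lim_{m \to \infty} \frac{1}{mk} \chi(\cN^{\otimes mk})$ (the HSW theorem, along the subsequence of multiples of $k$) yields $\CCap(\cN) \le \frac{1}{k} \Upsilon(\mathbf{D},k)(\cN)$.

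The chain of divergence comparisons and the subsequence limit are routine. The two points carrying the argument are: (i) the single-letter step, whose crux is that every $\cM \in \cV$ is dominated by a replacer channel, so that antitonicity of Umegaki in the second slot makes $\chi(\cL)$ reappear as a lower bound once one passes through $\mathbf{D} \ge \rD \ge \overline{\rD}$; and (ii) the closure of $\cV$ under tensor products, which is the only non-elementary external input and is exactly what upgrades the weak bound $\CCap(\cN) \le \sup_n \frac{1}{n} \Upsilon(\mathbf{D},n)(\cN)$ to the fixed-$k$ statement of the proposition.
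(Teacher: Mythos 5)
Your proof is correct and follows essentially the same argument as the paper: the divergence-radius characterization of $\chi$, the inclusion $\cV\subseteq\cV_{\mathrm{cb}}$, closure of $\cV$ under tensor products, the ordering $\rDbf\ge\rD$, and subadditivity of the channel divergence. You merely reorganize the bookkeeping, isolating the single-letter bound $\chi(\cL)\le\Upsilon(\rDbf,1)(\cL)$ and the subadditivity $\Upsilon(\rDbf,mk)\le m\,\Upsilon(\rDbf,k)$ as explicit intermediate claims, while the paper works with $\rD$ throughout and switches to $\rDbf$ only in the final chain of inequalities.
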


\begin{proof}
    The proof can be found in \cite{XWang_19}, but we include a concise proof for the reader's convenience. 
    As shown in~\cite{ohya1997capacities} the Holevo information can be written as a divergence radius:
 	\begin{align*}
 		\chi(\cN) &= \min_{\sigma \in \D(Y)} \max_{\rho \in  \D(X)} \rD(\cN(\rho)\|\sigma) \\
 		&= \min_{\cM \in \cV_{\mathrm{cb}}(X,Y)} \max_{\rho \in \D(X)} \rD(\cN(\rho)\|\cM(\rho)) \\
 		&\leq \min_{\cM \in \cV(X,Y)} \max_{\rho \in \D(X)} \rD(\cN(\rho)\| \cM(\rho)) \\
 		&\leq \min_{\cM \in \cV(X,Y)} \rD(\cN \| \cM)
 		\end{align*}
 		where we used the fact that if $\sigma \leq \sigma'$ then $\rD(\rho \| \sigma) \geq \rD(\rho \| \sigma')$ and the fact that  $\cV(X,Y)\subseteq \cV_{\mathrm{cb}}(X,Y)$. So, for $n,k \in \mathbb{N}$, we have
 		\begin{align*}
 		\chi(\cN^{\otimes nk}) &\leq \min_{\cM \in \cV(X^{\otimes nk}, Y^{\otimes nk})} \rD(\cN^{\otimes nk}\| \cM) \\
 		&\leq \min_{\cM \in \cV(X^{\otimes k}, Y^{\otimes k})} \rD(\cN^{\otimes nk}\| \cM^{\otimes n}),
 	\end{align*}
 	where we used the fact that if $\cM \in \cV(X^{\otimes k}, Y^{\otimes k})$, then $\cM^{\otimes n} \in \cV(X^{\otimes nk},Y^{\otimes nk})$. Since $\rDbf$ is bounded below by $\rD$ and subadditive under tensor product of channels, we have
 	\begin{align*}
 		\frac{1}{nk}\chi(\cN^{\otimes nk}) \leq \min_{\cM \in \cV(X^{\otimes k}, Y^{\otimes k})} \frac{1}{k} \; \rDbf(\cN^{\otimes k} \| \cM) = \frac{1}{k}\Upsilon(\rDbf,k)(\cN). 
 	\end{align*}
 	Taking the limit as $n \to \infty$, we get the desired result.
\end{proof}
 
Note that by Proposition~\ref{prop:compair_divergences}, for $\alpha\in (1,2]$, we have
\begin{align*}
    \Upsilon(\widetilde{\rD}_{\alpha},k)(\cN) \leq \Upsilon(\newD_{\alpha},k)(\cN) \leq \Upsilon(\widehat{\rD}_{\alpha},k)(\cN) \leq \Upsilon(\rD_{\max},k)(\cN).
\end{align*}
\begin{remark}
	If in addition the generalized quantum divergence $\mathbf{D}$ satisfies $\widetilde{\rD}_{\alpha} \leq \mathbf{D}$, for some $\alpha \in (1,\infty)$, then $\frac{1}{k}\Upsilon(\textbf{D},k)(\cN)$ is a strong converse bound, i.e., above this communication
	rate, the error probability goes to 1.  
\end{remark}

Both $\rD_{\max}$ and $\widehat{\rD}_{\alpha}$ have the desired properties and were used in~\cite{XWang_19} and~\cite{Fang_19} to obtain bounds on the classical capacity. On the other hand, $\widetilde{\rD}_{\alpha}$ is not always additive~\cite{fang2020chain} so it cannot be used in general.  The best-known general strong converse bound is given by $\frac{1}{k}\Upsilon(\widehat{\rD}_{\alpha},k)$, and it is SDP computable~\cite{Fang_19}. 
For $\textbf{D} = \newD_{\alpha}$, using the formulation of the channel divergence given in Eqs.~\eqref{eq:sdp Dsharp 1} and~\eqref{eq:sdp Dsharp 2}, the converse bound of Proposition~\ref{prop:upperbound_capacity} can be written in terms of a convex program. For every $k\geq 1$, we have
\begin{equation} \label{sdp:Upsilon}
    \begin{aligned}
        \Upsilon(\newD_{\alpha},k)(\cN) = \frac{1}{\alpha-1} \log \quad \min \quad &    \| \tr_{Y^{\otimes k}}(A) \|_{\infty}\\
        \textrm{s.t.} \quad & J^{\cN^{\otimes k}}  \leq J^{\cM} \#_{1/\alpha} A \enspace,\\
        &R \pm (J^{\cM})^{T_{Y^{\otimes k}}} \geq 0 \enspace,\\
        &(I_{X^{\otimes k}}\otimes S) \pm R^{T_{Y^{\otimes k}}} \geq 0 \enspace,\\
        &\tr(S) \leq 1 \enspace,\\
        &A,J^{\cM},R \in \Pos(X^{\otimes k}\otimes Y^{\otimes k}) \,,\, S\in \Pos(Y^{\otimes k})\enspace. 
    \end{aligned}
\end{equation}
     
Note that the optimization problem in Eq.~\eqref{sdp:Upsilon} does not scale well with $k$ since the sizes of the constraint matrices grow exponentially fast. This bottleneck will be addressed in the next section.


\subsection{Exploiting symmetries to simplify the problem}

Using a similar argument as in Lemma~\ref{lem:inv_space_1}, one may restrict the feasible region of the convex program~\eqref{sdp:Upsilon} to the $\mathfrak{S}_k$-invariant subspace of operators.

\begin{lemma} \label{lem:inv_space_2}
	For every $\alpha \in (1,\infty)$, the convex program \eqref{sdp:Upsilon} has an optimal solution $(A,R,J^\cM,S)$, with $A,R,J^{\cM} \in  \End^{\mathfrak{S}_k}\left(X^{\otimes k}\otimes Y^{\otimes k} \right)$ and $S \in \End^{\mathfrak{S}_k}(Y^{\otimes k})$. 
\end{lemma}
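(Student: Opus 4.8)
The plan is to mimic the proof of Lemma~\ref{lem:inv_space_1} verbatim, the only new feature being that the program~\eqref{sdp:Upsilon} now also contains a PPT-type block (the variables $R$ and $S$ with partial transposes on $Y^{\otimes k}$) and that $J^\cM$ is itself an optimization variable rather than a fixed invariant operator. First I would invoke Slater's condition to guarantee that the infimum in~\eqref{sdp:Upsilon} is attained, and fix an optimal tuple $(A,R,J^\cM,S)$. Writing $\Pi(\pi)\coloneqq P_X(\pi)\otimes P_Y(\pi)$ for the permutation representation of $\pi\in\mathfrak{S}_k$ on $X^{\otimes k}\otimes Y^{\otimes k}$ and $P_Y(\pi)$ for the one on $Y^{\otimes k}$, I would form the group averages $\overline{A},\overline{R},\overline{J^\cM}$ (conjugating by $\Pi(\pi)$) and $\overline{S}$ (conjugating by $P_Y(\pi)$). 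By construction these lie in $\End^{\mathfrak{S}_k}(X^{\otimes k}\otimes Y^{\otimes k})$ and $\End^{\mathfrak{S}_k}(Y^{\otimes k})$ respectively, and they are positive semidefinite as averages of conjugates of positive semidefinite operators; in particular $\overline{\cM}\in\CP$ so it is still a legitimate candidate for $\cV(X^{\otimes k},Y^{\otimes k})$.

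Next I would verify feasibility of $(\overline{A},\overline{R},\overline{J^\cM},\overline{S})$ constraint by constraint. For $J^{\cN^{\otimes k}}\leq J^\cM\#_{1/\alpha}A$, nothing changes: since $J^{\cN^{\otimes k}}\in\End^{\mathfrak{S}_k}(X^{\otimes k}\otimes Y^{\otimes k})$, the chain~\eqref{eq:line_1}--\eqref{eq:line_5} applies word for word, using joint concavity (property~4) to pull the average outside the geometric mean and the transformer equality and scaling (properties~2 and~3) to move $\Pi(\pi)$ through it, giving $\overline{J^\cM}\,\#_{1/\alpha}\,\overline{A}\geq\tfrac{1}{|\mathfrak{S}_k|}\sum_\pi\Pi(\pi)\br{J^\cM\#_{1/\alpha}A}\Pi(\pi)^*\geq\tfrac{1}{|\mathfrak{S}_k|}\sum_\pi\Pi(\pi)J^{\cN^{\otimes k}}\Pi(\pi)^*=J^{\cN^{\otimes k}}$. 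The constraint $\tr(S)\leq1$ is preserved since $\tr(\overline{S})=\tr(S)$ by cyclicity. For the two PPT-type inequalities $R\pm(J^\cM)^{T_{Y^{\otimes k}}}\geq0$ and $(\id_{X^{\otimes k}}\otimes S)\pm R^{T_{Y^{\otimes k}}}\geq0$, the point to check is that conjugation by $\Pi(\pi)$ commutes with the partial transpose on $Y^{\otimes k}$: because $P_Y(\pi)$ is a real permutation matrix with $P_Y(\pi)^T=P_Y(\pi)^{-1}=P_Y(\pi^{-1})$, one has $\br{\Pi(\pi)M\Pi(\pi)^*}^{T_{Y^{\otimes k}}}=\Pi(\pi)\,M^{T_{Y^{\otimes k}}}\,\Pi(\pi)^*$, and also $\Pi(\pi)\br{\id_{X^{\otimes k}}\otimes S}\Pi(\pi)^*=\id_{X^{\otimes k}}\otimes\br{P_Y(\pi)SP_Y(\pi)^*}$. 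Granting this, conjugating the two inequalities by $\Pi(\pi)$ preserves positivity, and averaging over $\pi$ (which commutes with $(\cdot)^{T_{Y^{\otimes k}}}$ by linearity) yields $\overline{R}\pm(\overline{J^\cM})^{T_{Y^{\otimes k}}}\geq0$ and $(\id_{X^{\otimes k}}\otimes\overline{S})\pm\overline{R}^{T_{Y^{\otimes k}}}\geq0$.

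Finally, the objective does not increase: from $\tr_{Y^{\otimes k}}\br{\Pi(\pi)A\Pi(\pi)^*}=P_X(\pi)\tr_{Y^{\otimes k}}(A)P_X(\pi)^*$, the triangle inequality and unitary invariance of the operator norm give $\|\tr_{Y^{\otimes k}}(\overline{A})\|_\infty\leq\|\tr_{Y^{\otimes k}}(A)\|_\infty$, exactly as in Lemma~\ref{lem:inv_space_1}. Hence $(\overline{A},\overline{R},\overline{J^\cM},\overline{S})$ is an optimal solution lying in the claimed invariant algebras. The only genuinely new step compared to Lemma~\ref{lem:inv_space_1} is the commutation of the $\mathfrak{S}_k$-action with the partial transpose on $Y^{\otimes k}$; I expect this to be the main (though still routine) obstacle, everything else being a direct transcription of the earlier argument.
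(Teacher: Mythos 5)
Your proposal is correct and follows essentially the same route as the paper: group-average the optimal (or any feasible) tuple, reuse the chain of inequalities from Lemma~\ref{lem:inv_space_1} for the geometric-mean constraint and for the objective, and observe that conjugation by $\Pi(\pi)=P_X(\pi)\otimes P_Y(\pi)$ commutes with $(\cdot)^{T_{Y^{\otimes k}}}$ because $P_Y(\pi)$ is a real permutation matrix. The paper's proof verifies exactly these points, so your identification of the partial-transpose commutation as the only genuinely new ingredient is accurate.
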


\begin{proof}
    It is straightforward to check that by Slater's condition the optimal value is achieved by a feasible solution. For an arbitrary feasible solution $(A,J^\cM,R,S)$, we will prove that the corresponding group-average operators $(\overline{A},\overline{J^\cM},\overline{R},\overline{S})$ are feasible with an objective value not greater than the original value. 
  
    For brevity of notation, we write $\Pi(\pi)\coloneqq P_{X\otimes Y}(\pi)$. The first constraint, $J^{\cN^{\otimes k}} \leq \overline{J^\cM}\; \#_{1/\alpha} \;\overline{A}$, follows from a similar argument as in Lemma~\ref{lem:inv_space_1}. For the second constraint note that, for every $\pi \in \mathfrak{S}_k$, $\Pi(\pi)^*=\Pi(\pi)^T$, and we have 
    \begin{align}
        \left(\Pi(\pi) J^{\cM} \Pi(\pi)^{*} \right)^{T_{Y^{\otimes k}}} = \left(\Pi(\pi) J^{\cM} \Pi(\pi)^{T} \right)^{T_{Y^{\otimes k}}} = \Pi(\pi)(J^{\cM})^{T_{Y^{\otimes k}}} \Pi(\pi)^{T} \enspace.
    \end{align}
    Therefore,
    \begin{align}
        \left( \overline{J^\cM} \right)^{T_{Y^{\otimes k}}} &= \left( \frac{1}{|\mathfrak{S}_k|}\sum_{\pi \in \mathfrak{S}_k}\Pi(\pi)J^{\cM}\Pi(\pi)^{*} \right)^{T_{Y^{\otimes k}}} = \frac{1}{|\mathfrak{S}_k|}\sum_{\pi \in \mathfrak{S}_k}\Pi(\pi)(J^{\cM})^{T_{Y^{\otimes k}}}\Pi(\pi)^{*} \enspace,
    \end{align}
    and the feasibility of $J^{\cM}$ and $R$ implies $-\overline{R} \leq \left( \overline{J^\cM} \right)^{T_{Y^{\otimes k}}} \leq \overline{R}$. Similarly, we get
    \begin{align}
        \left( \overline{R} \right)^{T_{Y^{\otimes k}}} &= \left( \frac{1}{|\mathfrak{S}_k|}\sum_{\pi \in \mathfrak{S}_k}\Pi(\pi) \,R\,\Pi(\pi)^{*} \right)^{T_{Y^{\otimes k}}} =  \frac{1}{|\mathfrak{S}_k|}\sum_{\pi \in \mathfrak{S}_k}\Pi(\pi)(R)^{T_{Y^{\otimes k}}}\Pi(\pi)^{*} \enspace,
    \end{align}
    and the feasibility of $S$ and $R$ implies $-\id_{X^{\otimes k}} \otimes \overline{S} \leq \left( \overline{R} \right)^{T_{Y^{\otimes k}}} \leq \id_{X^{\otimes k}} \otimes \overline{S}$. Finally, the forth constraint holds since $\tr(\overline{S})=\tr(S)\leq 1$.

    For the objective function, using the same argument as in Lemma~\ref{lem:inv_space_1}, we get $\| \tr_{Y^{\otimes k}} \left( \overline{A} \right) \|_{\infty} \leq \|\tr_{Y^{\otimes k}}(A)\|_{\infty}$. This concludes the proof.
\end{proof}

Next, we show that the convex program~\eqref{sdp:Upsilon} may be reformulated so that it scales only polynomially with $k$.

\begin{theorem} \label{thm:computing_Upsilon_1}
    Let $\cN_{X \rightarrow Y}$ be a quantum channel. For every $k\geq 1$, the strong converse bound $\frac{1}{k}\Upsilon(\newD_{\alpha},k)(\cN)$ of Proposition~\ref{prop:upperbound_capacity} can be formulated as a convex program with only $\mathrm{O}\!\br{ k^{d^2} }$ variables and $\mathrm{O}\!\br{k^d}$ PSD constraints involving matrices of size at most $(k+1)^{d(d-1)/2}$, where $d=d_X d_Y$.
\end{theorem}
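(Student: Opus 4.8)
The plan is to follow the strategy behind Theorem~\ref{thm: computing_channel_divergence_1}, now applied to the larger program~\eqref{sdp:Upsilon}. By Lemma~\ref{lem:inv_space_2}, program~\eqref{sdp:Upsilon} attains its optimum at a tuple with $A,R,J^{\cM}\in\End^{\mathfrak{S}_k}\br{(X\otimes Y)^{\otimes k}}$ and $S\in\End^{\mathfrak{S}_k}(Y^{\otimes k})$, so we may add these algebra-membership constraints without changing the value. Reordering the tensor factors as $X^{\otimes k}\otimes Y^{\otimes k}$ and, exactly as in Theorem~\ref{thm: computing_channel_divergence_1}, replacing the objective $\|\tr_{Y^{\otimes k}}(A)\|_{\infty}$ by a scalar $y\in\RR$ with the extra constraint $\tr_{Y^{\otimes k}}(A)\le y\,\id_{X^{\otimes k}}$ (valid since $\tr_{Y^{\otimes k}}(A)\ge 0$), every operator appearing in the program lies in one of $\End^{\mathfrak{S}_k}(X^{\otimes k})$, $\End^{\mathfrak{S}_k}(Y^{\otimes k})$ or $\End^{\mathfrak{S}_k}\br{(X\otimes Y)^{\otimes k}}$. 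First I would push each constraint through the corresponding block-diagonalizing map $\phi_{\cH}$ of Eq.~\eqref{eq:phi_H} for $\cH\in\{X,Y,X\otimes Y\}$, which by Lemma~\ref{lemma:psd_preserving} is a linear bijection preserving positive semidefiniteness.

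The one genuinely new ingredient relative to Theorem~\ref{thm: computing_channel_divergence_1} is the pair of partial-transpose constraints. The key point is that, since every $P_Y(\pi)$ is a real permutation matrix, the partial transpose $T_{Y^{\otimes k}}$ commutes with the conjugation action $Z\mapsto\br{P_X(\pi)\otimes P_Y(\pi)}Z\br{P_X(\pi)\otimes P_Y(\pi)}^{*}$; hence $T_{Y^{\otimes k}}$ is a linear automorphism of $\End^{\mathfrak{S}_k}\br{(X\otimes Y)^{\otimes k}}$, so $\br{J^{\cM}}^{T_{Y^{\otimes k}}}$ and $R^{T_{Y^{\otimes k}}}$ are still $\mathfrak{S}_k$-invariant and the maps $\phi_{X\otimes Y}\circ T_{Y^{\otimes k}}\circ\phi_{X\otimes Y}^{-1}$ are fixed linear automorphisms of $\bigoplus_{\lambda}\CC^{m_{\lambda}^{X\otimes Y}\times m_{\lambda}^{X\otimes Y}}$. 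Similarly $\id_{X^{\otimes k}}\otimes S\in\End^{\mathfrak{S}_k}\br{(X\otimes Y)^{\otimes k}}$, with $\phi_{X\otimes Y}\br{\id_{X^{\otimes k}}\otimes S}$ a fixed linear image of $\phi_Y(S)$, and $\tr_{Y^{\otimes k}}(A)\in\End^{\mathfrak{S}_k}(X^{\otimes k})$, with $\phi_X\br{\tr_{Y^{\otimes k}}(A)}$ a fixed linear image of $\phi_{X\otimes Y}(A)$. Using $\phi_{X\otimes Y}\br{J^{\cM}\#_{1/\alpha}A}=\phi_{X\otimes Y}(J^{\cM})\#_{1/\alpha}\phi_{X\otimes Y}(A)$ (Property~$2$ of the $\alpha$-geometric mean together with Lemma~\ref{lemma:psd_preserving}) and the direct-sum property (Property~$5$), the first constraint of~\eqref{sdp:Upsilon} decomposes blockwise; by Lemma~\ref{lemma:psd_preserving} each $\,\ge 0\,$ constraint becomes positive semidefiniteness of all its diagonal blocks; and since $\tr(S)=\sum_{\lambda}d_{\lambda}^{Y}\,\tr\br{\llbracket\phi_Y(S)\rrbracket_{\lambda}}$, where $d_{\lambda}^{Y}$ is the dimension of the corresponding irreducible $\mathfrak{S}_k$-module, the normalization $\tr(S)\le 1$ becomes a single linear inequality in the block entries of $\phi_Y(S)$. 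The transformed program then has as variables the block entries of $\phi_{X\otimes Y}(A),\phi_{X\otimes Y}(R),\phi_{X\otimes Y}(J^{\cM}),\phi_Y(S)$ and the scalar $y$, subject to the blockwise versions of the five constraints.

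For the size bounds I would use Eqs.~\eqref{eq:partitions}, \eqref{eq:dim_S_lambda} and~\eqref{eq:dim_Sk_inv}: the number of variables is $\mathrm{O}\br{(k+1)^{d^2}}+\mathrm{O}\br{(k+1)^{d_Y^2}}=\mathrm{O}\br{k^{d^2}}$ with $d=d_Xd_Y$; the PSD constraints are indexed by $\lambda\in\mathrm{Par}(d,k)$ (for $A,R,J^{\cM}\ge 0$, the geometric-mean constraint, and the two partial-transpose constraints), by $\lambda\in\mathrm{Par}(d_X,k)$ (for the linearized objective), and by $\lambda\in\mathrm{Par}(d_Y,k)$ (for $S\ge 0$), giving $\mathrm{O}\br{k^{d}}$ constraints, each involving a matrix of size $m_{\lambda}^{\cH}\le(k+1)^{d(d-1)/2}$. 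The step I expect to be the main obstacle is exactly the bookkeeping around the partial transpose --- checking that $T_{Y^{\otimes k}}$ preserves the permutation-invariant algebra and hence descends to a well-defined linear action on the blocks, and that the two partial-transpose inequalities remain equivalent after applying $\phi_{X\otimes Y}$ --- while everything else is a direct transcription of the argument for Theorem~\ref{thm: computing_channel_divergence_1}.
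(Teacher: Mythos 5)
Your proposal is correct and follows essentially the same route as the paper's proof: restrict to the $\mathfrak{S}_k$-invariant algebra via Lemma~\ref{lem:inv_space_2}, linearize the objective with a scalar $y$, push each constraint through the block-diagonalizing maps $\phi_\cH$ for $\cH\in\{X,Y,X\otimes Y\}$, and count blocks and block sizes via Eqs.~\eqref{eq:partitions} and~\eqref{eq:dim_S_lambda}. Your explicit observation that $T_{Y^{\otimes k}}$ commutes with the conjugation action by $P_X(\pi)\otimes P_Y(\pi)$ (so that it descends to a linear automorphism of $\End^{\mathfrak{S}_k}((X\otimes Y)^{\otimes k})$, though not necessarily a block-respecting one) is exactly the point implicit in the paper's treatment of the partial-transpose constraints, and your trace formula $\tr(S)=\sum_\lambda d_\lambda^Y\,\tr(\llbracket\phi_Y(S)\rrbracket_\lambda)$ is, if anything, slightly more careful than the paper's display.
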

\begin{proof}
    Let $Q$ denote the permutation matrix which maps $X^{\otimes k}\otimes Y^{\otimes k}$ to $\br{X\otimes Y}^{\otimes k}$. Then, by Lemma~\ref{lem:inv_space_2}, the optimization problem~\eqref{sdp:Upsilon} can be written as  
  	\begin{align}
  	    \Upsilon(\newD_{\alpha},k)(\cN) = \frac{1}{\alpha-1}\;\log\; &\min \; y \\
  	    \textrm{s.t.} \quad &\tr_{Y^{\otimes k}}(A) \leq y \; \id_{X^{\otimes k}} \enspace, \label{constraint_1}\\ 
  	    &\br{J^{\cN}}^{\otimes k} \leq J^{\cM} \#_{1/\alpha} A \enspace, \label{constraint_2} \\
  	    &R \pm (J^{\cM})^{T_{Y^{\otimes k}}} \geq 0 \enspace, \label{constraint_3}\\
  	    &Q(\id_{X^{\otimes k}}\otimes S)Q^T \pm R^{T_{Y^{\otimes k}}} \geq 0 \enspace, \label{constraint_4}\\
  	    &\tr(S) \leq 1 \enspace, \label{constraint_5}
  	\end{align}
    where $A,J^{\cM},R \in \End^{\mathfrak{S}_k}\left(\br{X \otimes Y}^{\otimes k} \right)$ and $S\in \End^{\mathfrak{S}_k}\left(Y^{\otimes k} \right)$ are positive semidefinite operators and $y\in \RR$.

    Following the notation introduced in Theorem~\ref{thm: computing_channel_divergence}, for $\cH\in\{X,Y,X\otimes Y\}$, let $\phi_\cH: \End^{\mathfrak{S}_k}(\cH^{\otimes k}) \to \bigoplus_{i=1}^{t^\cH} \CC^{m_i^\cH \times m_i^\cH}$ be the bijective linear map which block-diagonalizes the corresponding invariant algebra, where to simplify the notation, the blocks are indexed by $i\in[t^\cH]$ instead of $\lambda\in \mathrm{Par}(d_\cH,k)$. For $Z\in \End^{\mathfrak{S}_k} (\cH^{\otimes k})$, we write $\llbracket \phi_\cH(Z) \rrbracket_i$ to denote the $i$-th block of $\phi_\cH(Z)$. Since $J^{\cN^{\otimes k}}$, $J^\cM$, $A$ and $R$, $(J^{\cM})^{T_{Y^{\otimes k}}}$ are elements of $\End^{\mathfrak{S}_k}\left(\br{X\otimes Y}^{\otimes k}\right)$, the constraints~\eqref{constraint_2} and~\eqref{constraint_3} can be mapped into the direct sum form under $\phi_{X \otimes Y}$. Similarly, since $Q\br{\id_{X^{\otimes k}}\otimes S}Q^T$, $R^{T_{Y^{\otimes k}}} \in \End^{\mathfrak{S}_k}\left(\br{X\otimes Y}^{\otimes k}\right)$, by properties~$2$ and~$5$ of the $\alpha$-geometric mean, the constraint~\eqref{constraint_4} can be decomposed into constraints involving the smaller diagonal blocks  by applying $\phi_{X\otimes Y}$. Finally, since $\tr_{Y^{\otimes k}}(A)$, $\id_{X^{\otimes k}} \in \End^{\mathfrak{S}_k}(X^{\otimes k})$, the constraint~\eqref{constraint_1} can be mapped by $\phi_X$ into the direct sum form. The transformed convex program is given by
  	\begin{IEEEeqnarray*}{rCl"r}
  	    \frac{1}{\alpha - 1}\,\log \quad \min& \quad &y \\
  	    \mathrm{s.t.}&  &\left\llbracket \br{\phi_X \circ \tr_{Y^{\otimes k}} \circ \phi_{X\otimes Y}^{-1}} \br{ \oplus_l A_l} \right\rrbracket_j \leq y\,\id_{m_j^X} \enspace,\\
  	    & &\left\llbracket \phi_{X\otimes Y} \br{\br{J^{\cN}}^{\otimes k}} \right\rrbracket_i \leq  J_i \, \#_{1/\alpha} \, A_i \enspace,\\
  	    & &R_i \pm \left\llbracket \br{ \phi_{X\otimes Y} \circ T_{Y^{\otimes k}} \circ \phi_{X\otimes Y}^{-1}}\br{ \oplus_l A_l} \right\rrbracket_i \geq 0 \enspace,\\
  	    & &\left\llbracket \phi_{X\otimes Y} \br{ Q\br{\id_{X^{\otimes k}}\otimes \phi_Y^{-1}\br{ \oplus_r S_r}} Q^T} \right\rrbracket_i \pm \left\llbracket \br{ \phi_{X\otimes Y} \circ T_{Y^{\otimes k}} \circ \phi_{X\otimes Y}^{-1}}\br{ \oplus_l R_l} \right\rrbracket_i \geq 0\enspace,\\
  	    & &\textstyle \sum_r \tr(S_r) \leq 1\enspace,\\
  	    & &A_i,R_i,J_i \in \Pos\br{\CC^{m_i^{X\otimes Y}}}, S_r \in \Pos\br{\CC^{m_r^Y}}\enspace,
  	\end{IEEEeqnarray*}
    for all $i \in \Br{t^{X\otimes Y}}$, $j \in \Br{t^X}$, and $r \in \Br{t^Y}$. 
    The statement of the theorem follows since for $\cH\in\{X,Y,X\otimes Y\}$, we have $t^\cH \leq (k+1)^{d_\cH}$ and $m_i^\cH \leq (k+1)^{d_\cH(d_\cH-1)/2}$, for every $i\in \Br{t^\cH}$.
 \end{proof}

Finally, we show how to efficiently compute a formulation of  $\frac{1}{k}\Upsilon(\newD_{\alpha},k)(\cN)$ as a convex program of polynomial size.

\begin{theorem} \label{thm:computing_Upsilon}
    Let $\cN_{X \rightarrow Y}$ be a quantum channel. There exists an algorithm which given as input $J^\cN$ and $k\in \NN$, outputs in $\poly(k)$ time (for fixed $\dim(X \otimes Y)$) the description of a convex program of size described in Theorem~\ref{thm:computing_Upsilon_1} for computing the strong converse bound $\frac{1}{k}\Upsilon(\newD_{\alpha},k)(\cN)$. 
\end{theorem}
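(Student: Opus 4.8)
The plan is to mirror the proof of Theorem~\ref{thm: computing_channel_divergence}: I would show that every fixed operator and every linear map appearing in the block-diagonalized convex program produced in the proof of Theorem~\ref{thm:computing_Upsilon_1} can be assembled in $\poly(k)$ time. For $\cH\in\{X,Y,X\otimes Y\}$ I would work in the canonical basis $\cC^\cH=\{C^\cH_r\}_{r\in[m^\cH]}$ of $\End^{\mathfrak{S}_k}(\cH^{\otimes k})$ from Eq.~\eqref{eq:canonical_basis}, whose orbit representatives are enumerated in $\poly(k)$ via the $E$-matrix correspondence of Eq.~\eqref{eq:number_pair}, and recall that by Lemma~\ref{lem:computing_entries_block} the block-diagonal images $\phi_\cH(C^\cH_r)$ are computable in $\poly(k)$, for fixed $\dim\cH$. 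Writing the program variables $A,R,J^\cM\in\End^{\mathfrak{S}_k}((X\otimes Y)^{\otimes k})$ and $S\in\End^{\mathfrak{S}_k}(Y^{\otimes k})$ in these bases, it then remains to express the constant data $(J^\cN)^{\otimes k}$, the partial traces, the partial transposes, and the embedding $S\mapsto Q(\id_{X^{\otimes k}}\otimes S)Q^T$ in the same bases.

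The constant operator $(J^\cN)^{\otimes k}\in\End^{\mathfrak{S}_k}((X\otimes Y)^{\otimes k})$ is expanded exactly as in Theorem~\ref{thm: computing_channel_divergence}: for each orbit $O^{X\otimes Y}_r$ pick a representative $(p_1,\dots,p_k)$ and take coefficient $\prod_{t=1}^{k}(J^\cN)_{p_t}$. The trace-out map $\tr_{Y^{\otimes k}}$ sends $C^{X\otimes Y}_r$ to $\alpha_r\, C^X_{t(r)}$ with $\alpha_r=\prod_{a,b\in[d_X]}[(E^{(i^X,j^X)})_{a,b}]!$ and target orbit $O^X_{t(r)}$ read off a representative, just as before. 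The partial transpose $T_{Y^{\otimes k}}$ commutes with the simultaneous $\mathfrak{S}_k$-action and maps the pair $((x,y),(x',y'))$ (with $Y$-parts $y,y'$) to $((x,y'),(x',y))$; hence it carries each orbit incidence matrix $C^{X\otimes Y}_r$ to another one $C^{X\otimes Y}_{\sigma(r)}$, and the involution $\sigma$ is computed in $\poly(k)$ from representatives via the $E$-matrix encoding. The embedding $S\mapsto Q(\id_{X^{\otimes k}}\otimes S)Q^T$ is a linear map $\End^{\mathfrak{S}_k}(Y^{\otimes k})\to\End^{\mathfrak{S}_k}((X\otimes Y)^{\otimes k})$, and $Q(\id_{X^{\otimes k}}\otimes C^Y_r)Q^T$ is the zero-one $\mathfrak{S}_k$-invariant matrix supported on all pairs $((x,y),(x',y'))$ with $x=x'$ (equivalently $E_{(a,b),(a',b')}=0$ for $a\ne a'$) and $(y,y')\in O^Y_r$; this support is a union of orbits $\bigcup_{s\in S_r}O^{X\otimes Y}_s$, and $S_r$ is obtained in $\poly(k)$ by scanning the $\poly(k)$ orbit representatives and testing the two $E$-matrix conditions. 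Finally $\tr(S)\le 1$ becomes a linear constraint on the coefficients of $S$, computable from representatives.

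Composing these with the block-diagonalizations — forming on basis elements the maps $\phi_X\circ\tr_{Y^{\otimes k}}\circ\phi_{X\otimes Y}^{-1}$, $\phi_{X\otimes Y}\circ T_{Y^{\otimes k}}\circ\phi_{X\otimes Y}^{-1}$, and $\phi_{X\otimes Y}\circ\big(S\mapsto Q(\id_{X^{\otimes k}}\otimes S)Q^T\big)\circ\phi_Y^{-1}$ — yields in $\poly(k)$ time all the block matrices and structure constants needed to write down the convex program of the proof of Theorem~\ref{thm:computing_Upsilon_1}, whose size is the one asserted there by Eqs.~\eqref{eq:partitions}–\eqref{eq:dim_S_lambda}. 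The only genuinely new bookkeeping relative to Theorem~\ref{thm: computing_channel_divergence} is tracking the partial transpose and the $\id_{X^{\otimes k}}\otimes(\cdot)$ embedding at the level of orbits, and I expect this to be the main (though routine) point to get right: one must check that both operations respect the orbit partition and that the induced orbit-to-orbit correspondences are read off correctly from the $E$-matrix encodings rather than from the exponentially large matrices themselves.
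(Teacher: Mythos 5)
Your proposal is correct and follows essentially the same strategy as the paper's proof: both expand all fixed operators and linear maps in the canonical orbit basis $\{C_r^\cH\}$, enumerate orbits via the $E$-matrix encoding, and show that the partial trace, the partial transpose, and the $\id_{X^{\otimes k}}\otimes(\cdot)$ embedding all act at the orbit level and can be composed with the $\phi_\cH$ in $\poly(k)$ time. The only cosmetic difference is that the paper writes $Q(\id_{X^{\otimes k}}\otimes C_s^Y)Q^T$ as $\sum_r \alpha_r C_r^{X\otimes Y}$ with $\{0,1\}$ coefficients read off a representative of each orbit, while you describe the same thing as a union of orbits determined by scanning representatives.
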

\begin{proof}
    As in the proof of Theorem~\ref{thm: computing_channel_divergence}, for $\cH\in\{X,Y,X\otimes Y\}$, let $\cbr{O_{r}^{\cH}}_{r\in[m^\cH]}$ denote the set of orbits of pairs and $\cbr{C_{r}^{\cH}}_{r\in[m^\cH]}$ denote the canonical basis of $\End^{\mathfrak{S}_k}\left(\cH^{\otimes k} \right)$ defined in Eq.~\eqref{eq:canonical_basis}. For every $r\in [m^{X \otimes Y}]$, we define $D_r\coloneqq \tr_{Y^{\otimes k}}\br{C_r^{X \otimes Y}}$. Note that $D_r\in \End^{\mathfrak{S}_k} \br{X^{\otimes k}}$. Then by Theorem~\ref{thm:computing_Upsilon_1}, $\Upsilon(\newD_{\alpha},k)(\cN)$ can be formulated as the following convex program:
    
    \begin{IEEEeqnarray*}{rCl"r}
  	    \frac{1}{\alpha - 1}\,\log \; \min& \;\; &y \\
  	    \mathrm{s.t.}&  &\sum_{r=1}^{m^{X\otimes Y}}z_r \left\llbracket \phi_X\br{D_r}  \right\rrbracket_j \leq y\,\id_{m_j^X} \enspace,\\
  	    & &\left\llbracket \phi_{X\otimes Y} \br{\br{J^{\cN}}^{\otimes k}} \right\rrbracket_i \leq  \sum_{l=1}^{m^{X\otimes Y}}x_l \left\llbracket \phi_{X\otimes Y} \br{C_l^{X\otimes Y}} \right\rrbracket_i \, \#_{1/\alpha} \, \sum_{r=1}^{m^{X\otimes Y}}z_r \left\llbracket \phi_{X\otimes Y} \br{C_r^{X\otimes Y}} \right\rrbracket_i \enspace,\\
  	    & & \sum_{l=1}^{m^{X\otimes Y}}y_l \left\llbracket \phi_{X\otimes Y} \br{C_l^{X\otimes Y}} \right\rrbracket_i \pm \sum_{r=1}^{m^{X\otimes Y}}z_r \left\llbracket  \phi_{X\otimes Y}\br{ \br{C_r^{X\otimes Y}}^{T_{Y^{\otimes k}}}} \right\rrbracket_i \geq 0 \enspace,\\
  	    & &\sum_{s=1}^{m^{Y}}w_s \left\llbracket \phi_{X\otimes Y} \br{ Q\br{\id_{X^{\otimes k}}\otimes C_s^{Y}}Q^T}  \right\rrbracket_i \pm \sum_{l=1}^{m^{X\otimes Y}}y_l \left\llbracket \phi_{X\otimes Y} \br{\br{C_l^{X\otimes Y}}^{T_{Y^{\otimes k}}}} \right\rrbracket_i \geq 0\enspace,\\
  	    & &\sum_{s=1}^{m^{Y}} w_s \, \tr(C_s^{Y}) \leq 1\enspace,\\
  	    & &\sum_{r=1}^{m^{X\otimes Y}}z_r \left\llbracket \phi_{X\otimes Y} \br{C_r^{X\otimes Y}} \right\rrbracket_i \geq 0 \enspace,\\
  	    & &\sum_{r=1}^{m^{X\otimes Y}}y_r \left\llbracket \phi_{X\otimes Y} \br{C_r^{X\otimes Y}} \right\rrbracket_i \geq 0 \enspace,\\
  	    & &\sum_{r=1}^{m^{X\otimes Y}}x_r \left\llbracket \phi_{X\otimes Y} \br{C_r^{X\otimes Y}} \right\rrbracket_i \geq 0 \enspace,\\
  	    & &\sum_{s=1}^{m^{Y}}w_s \left\llbracket \phi_{Y} \br{C_s^{Y}} \right\rrbracket_t \geq 0 \enspace,\\
  	    & &x_r,y_r,z_r,w_s,y\in \RR, \quad \forall r\in [m^{X\otimes Y}],s\in [m^Y]\enspace,
  	\end{IEEEeqnarray*}
    where $j\in \Br{t^X}$, $i\in\Br{t^{X \otimes Y}}$ and $t\in\Br{t^Y}$.    
    
    In Theorem~\ref{thm: computing_channel_divergence}, we showed how to efficiently compute $\phi_{X}(D_r)$, $\phi_{X \otimes Y}(C_{r}^{X \otimes Y})$, and $\phi_{X \otimes Y}(\br{J^{\cN}}^{\otimes k})$. Note that $\phi_{Y}(C_s^Y)$ can be similarly computed in $\poly(k)$ time. Therefore, to complete the proof it suffices to show that $\phi_{X \otimes Y}\br{\br{C_r^{X\otimes Y}}^{T_{Y^{\otimes k}}}}$, $\phi_{X \otimes Y}\br{Q^{T}(I_{X^{\otimes k}} \otimes C_{r}^{Y})Q}$, and $\tr(C_{s}^Y)$ can computed in $\poly(k)$ time. 
    
    Recall that, for every $r \in \Br{m^{X \otimes Y}}$, we have
  	\begin{align*}
  	  	C_r^{X\otimes Y}   = \sum_{(i,j) \in O_r^{X \otimes Y}} \ketbra{i}{j} \, ,
  	\end{align*}
  	where $i= \br{i_{1}^{X}i_{1}^{Y}\cdots i_{k}^{X}i_{k}^{Y}}$ and $j= \br{j_{1}^{X}j_{1}^{Y}\cdots j_{k}^{X}j_{k}^{Y}}$. Therefore, we have
    \begin{align*}
        \br{C_r^{X\otimes Y}}^{T_{Y^{\otimes k}}} = \sum_{(i,j) \in O_r^{X \otimes Y}} \ketbra{i_{1}^{X}j_{1}^{Y}\cdots i_{k}^{X}j_{k}^{Y}}{j_{1}^{X}i_{1}^{Y}\cdots j_{k}^{X}i_{k}^{Y}}= C_{T(r)}^{X\otimes Y}\enspace,
    \end{align*}
    where $T(r)$ denotes the index of the orbit given by 
    \begin{equation*}
        O_{T(r)}^{X \otimes Y}=\cbr{\br{i_{1}^{X}j_{1}^{Y}\cdots i_{k}^{X}j_{k}^{Y},j_{1}^{X}i_{1}^{Y}\cdots j_{k}^{X}i_{k}^{Y}}\,:\,(i,j)\in O_r^{X \otimes Y}}\enspace.
    \end{equation*}
    Therefore, $\phi_{X \otimes Y}\br{\br{C_r^{X\otimes Y}}^{T_{Y^{\otimes k}}}}=\phi_{X \otimes Y}\br{C_{T(r)}^{X\otimes Y}}$ can be computed efficiently.
    
    For $r=1,\dots,m^{X \otimes Y}$, let $(i,j)$ be an arbitrary representative element of $O_{r}^{X \otimes Y}$. Let
    \begin{align*}
    	\alpha_{r}: = (\id_{X^{\otimes k}})_{(i^X,j^X)} \cdot (C_{r}^{Y})_{(i^Y,j^Y)} \, ,
    \end{align*}
    where $i^X=\br{i_1^X \ldots i_k^X}$, $i^Y=\br{i_1^Y \ldots i_k^Y}$, and $j^X$ and $j^Y$ are defined in a similar way. Then we have $Q^{T}(I_{X^{\otimes k}} \otimes C_{r}^{Y})Q = \sum_{r=1}^{m^{X \otimes Y}} \alpha_{r}C_{r}^{X \otimes Y}$, which implies that $\phi_{X \otimes Y}\br{Q^{T}(I_{X^{\otimes k}} \otimes C_{r}^{Y})Q}$  can be computed in $\poly(k)$ time by Lemma~\ref{lem:computing_entries_block}. 
    
    Finally, for every $s \in [m^Y]$, we have 
    \begin{align*}
    	C_{s}^{Y} = \sum_{(i_1\dots i_k,j_{1}\dots j_k) \in O_{s}^{Y}} \ketbra{i_{1}\dots i_{k}}{j_1 \dots j_k} \, .
    \end{align*}
    Therefore, $\tr(C_{s}^{Y})>0$ iff $O_{s}^{Y}=\cbr{(\pi(i),\pi(i)): \pi\in \mathfrak{S}_k}$, for some $i\in[d_Y]^k$. Let $s \in [m^Y]$ such that $\tr(C_{s}^{Y})>0$ and let $(i_1\dots i_k,i_1\dots i_k)$ be an arbitrary representative element of $O_{s}^Y$. For every $a \in [d_Y]$, define $\beta(a) \coloneqq |\{v \in [k]: i_{v} = a \}|$, then $\tr(C_{s}^Y) = k!/\prod_{a \in [d_Y]}\beta(a)!$. 
 \end{proof}

As an example, $\Upsilon(\newD_{2},6)$ is computed for the amplitude damping (AD) channel $\cA_{p,0}$, defined in Eq.~\eqref{eq:GAD_channel}, for different values of $p$. For this channel, the best previously known upper bound on the classical capacity $\CCap(\cA_{p,0})$ for $p \in [0,0.75]$ is given by quantity $\CCap_{\beta}(\cA_{p,0}) = \log(1+\sqrt{1-p})$ in~\cite{XWang_17}.   
Table~\ref{table:upperbound_capacity} shows that $\frac{1}{6}\Upsilon(\newD_{2},6)$ is a slightly improved upper bound compared to the bounds obtained using $\widehat{\rD}_{\alpha}$ and $\rD_\mathrm{\max}$ which happen to coincide for the AD channel~\cite{Fang_19} with the value $\log(1+\sqrt{1-p})$. We remark that the best known upper bound for the AD channel $\cA_{p,0}$ with $p \in [0.75,1]$ is given by the entanglement-assisted classical capacity~\cite{Bennett} of the channel.
  
\begin{table}[ht]
	\centering
	\begin{tabular}{|c|c|c|} 
		\hline 
		$p$ & $\Upsilon(\rD_\mathrm{\max},1)$, $\Upsilon(\widehat{\rD}_2,1)$  and $\CCap_{\beta}$ & $\frac{1}{6}\Upsilon(\newD_{2},6)$ \\
		\hline
		$0.1$ & $0.9626$&$ \textbf{0.9615}$ \\ \hline 
		$0.2$ & $0.9218$&$ \textbf{0.9201}$ \\ \hline 
		$0.3$ & $0.8770$&$ \textbf{0.8745}$ \\ \hline 
		$0.4$ & $0.8274$&$ \textbf{0.8239}$ \\ \hline 
		$0.5$ & $0.7716$&$ \textbf{0.7670}$ \\ \hline
		$0.6$ & $0.7071$&$ \textbf{0.7014}$ \\ \hline  
		$0.7$ & $0.6302$&$ \textbf{0.6234}$ \\ \hline  
		$0.75$ & $0.5850$&$ \textbf{0.5777}$ \\ \hline     
	\end{tabular}
	\caption{\label{table:upperbound_capacity} Upper bounds on the classical capacity of the amplitude damping channel $\cA_{p,0}$ with different parameters $p$.}
\end{table} 
 
\section{Two-way assisted quantum capacity}
 \label{section:6}
 In this section, we consider $\newD_{\alpha}$ in the framework of \emph{generalized Theta-information} which was introduced in~\cite{Fang_19}. As we will see, the generalized Theta-information induced by $\#$-channel divergence gives efficiently computable strong converse bounds on the two-way-assisted quantum capacity, $\QCapTW(\cN)$, for any quantum channel $\cN$.
 \par 
The two-way assisted quantum capacity of a quantum channel $\cN$ is the maximum rate at which quantum information can be transmitted reliably from a sender to a receiver, when the parties are allowed to perform arbitrary LOCC (short for local operations and classical communication) between consecutive channel uses~\cite{bennett1996mixed}. While the two-way assisted quantum capacity for some specific channels such as the quantum erasure channel is known~\cite{bennett1997capacities}, no general characterization of $\QCapTW(\cN)$ is known for an arbitrary quantum channel $\cN$.  
 \par 
 In~\cite{rains1999bound,rains2001semidefinite}, the authors relaxed the set LOCC to a larger class of operations known as PPT-preserving operations, which is the set of channels that are positive partial transpose preserving. A quantum channel $\mathcal{P}_{AB\rightarrow A'B'}$ is PPT-preserving if the linear map $T_{B'}\circ \mathcal{P}_{AB\rightarrow A'B'}\circ T_B$ is completely positive and trace-preserving~\cite{rains2001semidefinite}, where $T_{B'}$ and $T_B$ denote the partial transpose map.  
 For any quantum channel $\cN$, we denote by $\QCapPPT(\cN)$ the PPT-assisted quantum capacity of $\cN$ . In this case, the operations between the channel uses are allowed to be PPT-preserving operations. Because of the  containment LOCC $\subset$ PPT~\cite{rains2001semidefinite}, we have the following inequality
 \begin{align*}
 	\QCapTW(\cN) \leq \QCapPPT(\cN) \, ,
 \end{align*}    
 for all quantum channels $\cN$. 
 
 Inspired by the formulation of the Rains set~\cite{rains2001semidefinite}, in~\cite{Fang_19} the authors introduced the set of subchannels given by the zero set of the Holevo-Werner bound~\cite{holevo2001evaluating} as
 \begin{align*}
 	\Theta(X,Y) \coloneqq \{ \cM \in \CP(X:Y): \exists R_{XY} \text{ s.t. } R_{XY} \pm (J_{XY}^{\cM})^{T_Y} \geq 0, \tr_{Y}(R_{XY}) \leq \id_{X} \} \, .
 \end{align*} 
 
Let  $\textbf{D}$ be a generalized divergence. For any quantum channel $\cN_{X \to Y}$, define
\begin{align*}
	R_{\Theta}(\textbf{D},k)(\cN) \coloneqq \textbf{D}(\cN^{\otimes k}\| \cM^{\otimes k}) \, ,
\end{align*}
where $\cM  = \argmin_{\cM \in \Theta(X,Y)} \textbf{D}(\cN \| \cM)$.       
 
 For any quantum channel $\cN$, by~\cite[Theorem 17]{Fang_19},~\cite[Proposition 5.9]{Omar_20},~\cite[Corollary 5]{berta2018amortization} and the relation between the divergences in Proposition~\ref{prop:compair_divergences}, the following holds:
 \begin{proposition}
 	Let $\cN$ be a quantum channel. For any $\alpha\in (1,2]$ and $k \geq 1$, 
 	\begin{align*}
 		\QCapTW(\cN) \leq \QCapPPT(\cN) \leq \SQCapPPT(\cN) \leq  \frac{1}{k}R_{\Theta}(\newD_{\alpha},k)(\cN) \leq R_{\Theta}(\widehat{\rD}_{\alpha},1)(\cN) \leq R_{\Theta}(\rD_{\max},1)(\cN) \, ,
 	\end{align*}
  where $\SQCapPPT(\cN)$ is the strong converse capacity corresponding to $\QCapPPT(\cN)$. 
 \end{proposition}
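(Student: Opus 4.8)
The plan is to verify the displayed chain of inequalities link by link; most links are definitional or follow from results already recorded in the excerpt, and the real work is concentrated in a single place. The first link, $\QCapTW(\cN)\le\QCapPPT(\cN)$, is the containment $\mathrm{LOCC}\subset\mathrm{PPT}$ noted above. The second link, $\QCapPPT(\cN)\le\SQCapPPT(\cN)$, is immediate from the meaning of the strong converse capacity: a rate strictly below $\QCapPPT(\cN)$ is achievable with asymptotically vanishing error, hence lies below the threshold above which the error probability tends to $1$.

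For the central link $\SQCapPPT(\cN)\le\frac{1}{k}R_{\Theta}(\newD_\alpha,k)(\cN)$ I would proceed in two moves. First, reduce it to a statement about the Theta-information $\Theta_{\newD_\alpha}(\cN):=\min_{\cM\in\Theta(X,Y)}\newD_\alpha(\cN\|\cM)$ induced by $\newD_\alpha$: by the meta-converse of \cite[Theorem~17]{Fang_19} together with the amortization-collapse arguments of \cite[Corollary~5]{berta2018amortization}, any generalized quantum divergence $\mathbf{D}$ that (a) is lower bounded by $\widetilde{\rD}_{\alpha}$ for some $\alpha>1$, so that the resulting bound is a \emph{strong} converse, and (b) satisfies the chain rule under the composition structure of a PPT-assisted protocol, yields $\SQCapPPT(\cN)\le\lim_{n\to\infty}\frac{1}{n}\Theta_{\mathbf{D}}(\cN^{\otimes n})$. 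For $\mathbf{D}=\newD_\alpha$, property (a) holds with $\alpha\in(1,2]$ by Proposition~\ref{prop:compair_divergences}, and property (b) is provided by \cite[Proposition~5.9]{Omar_20}. Second, pass from the regularized quantity to the finite-$k$ one. The set $\Theta(X,Y)$ is closed under tensor products~\cite{Fang_19} — a witness $R_{XY}$ for $\cM$ yields the witness $R_{XY}^{\otimes k}$ for $\cM^{\otimes k}$ — so, writing $\cM_{\#}$ for the single-copy minimizer of $\newD_\alpha(\cN\|\cdot)$ over $\Theta(X,Y)$, the tensor power $\cM_{\#}^{\otimes k}$ is feasible at level $k$ and hence $\Theta_{\newD_\alpha}(\cN^{\otimes k})\le\newD_\alpha(\cN^{\otimes k}\|\cM_{\#}^{\otimes k})=R_{\Theta}(\newD_\alpha,k)(\cN)$. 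Moreover, combining this closedness with the subadditivity of the $\newD_\alpha$ channel divergence under tensor products (recorded in Section~\ref{sec:prelim}) shows that $n\mapsto\Theta_{\newD_\alpha}(\cN^{\otimes n})$ is itself subadditive, so $\lim_{n\to\infty}\frac{1}{n}\Theta_{\newD_\alpha}(\cN^{\otimes n})\le\frac{1}{k}\Theta_{\newD_\alpha}(\cN^{\otimes k})$ for every $k$. Chaining these inequalities gives $\SQCapPPT(\cN)\le\frac{1}{k}\Theta_{\newD_\alpha}(\cN^{\otimes k})\le\frac{1}{k}R_{\Theta}(\newD_\alpha,k)(\cN)$.

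The remaining two links are comparisons of divergences. For $\frac{1}{k}R_{\Theta}(\newD_\alpha,k)(\cN)\le R_{\Theta}(\widehat{\rD}_{\alpha},1)(\cN)$: by subadditivity of the $\newD_\alpha$ channel divergence, $R_{\Theta}(\newD_\alpha,k)(\cN)=\newD_\alpha(\cN^{\otimes k}\|\cM_{\#}^{\otimes k})\le k\,\newD_\alpha(\cN\|\cM_{\#})$, whence $\frac{1}{k}R_{\Theta}(\newD_\alpha,k)(\cN)\le\newD_\alpha(\cN\|\cM_{\#})=\min_{\cM\in\Theta(X,Y)}\newD_\alpha(\cN\|\cM)$; since Proposition~\ref{prop:compair_divergences} gives $\newD_\alpha(\rho\|\sigma)\le\widehat{\rD}_{\alpha}(\rho\|\sigma)$ for all $\rho,\sigma$, and the channel divergences are suprema over input states of the corresponding state divergences, we get $\newD_\alpha(\cN\|\cM)\le\widehat{\rD}_{\alpha}(\cN\|\cM)$ for every $\cM$, so the preceding minimum is at most $\min_{\cM\in\Theta(X,Y)}\widehat{\rD}_{\alpha}(\cN\|\cM)=R_{\Theta}(\widehat{\rD}_{\alpha},1)(\cN)$. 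The final link, $R_{\Theta}(\widehat{\rD}_{\alpha},1)(\cN)\le R_{\Theta}(\rD_{\max},1)(\cN)$, is the same comparison applied once more, now via $\widehat{\rD}_{\alpha}\le\rD_{\max}$ from Proposition~\ref{prop:compair_divergences}.

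The step I expect to be the main obstacle is the central link, and within it the verification that $\newD_\alpha$ fits into the meta-converse of \cite{Fang_19}: one must confirm that the chain-rule inequality used to telescope a PPT-assisted protocol holds for $\newD_\alpha$ — equivalently, that amortization does not enhance $\Theta_{\newD_\alpha}$ — which is exactly what \cite[Proposition~5.9]{Omar_20} and \cite[Corollary~5]{berta2018amortization} supply. Everything else reduces to routine bookkeeping with tensor subadditivity and the ordering of divergences in Proposition~\ref{prop:compair_divergences}.
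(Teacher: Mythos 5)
Your proposal is correct, and it takes essentially the same approach as the paper: the paper's proof is a one-line citation to \cite[Theorem 17]{Fang_19}, \cite[Proposition 5.9]{Omar_20}, \cite[Corollary 5]{berta2018amortization}, and Proposition~\ref{prop:compair_divergences}, and your argument simply fills in the connective tissue between these references, assembling them in the intended way (the meta-converse plus amortization-collapse and the chain rule for $\newD_\alpha$ give the central link, tensor-closure of $\Theta$ and subadditivity of the $\newD_\alpha$ channel divergence relate the finite-$k$ and single-copy quantities, and the divergence ordering gives the last two links). You have also correctly identified the distinction that $R_{\Theta}(\mathbf{D},k)$ uses the tensor power of the \emph{single-copy} minimizer, which makes $\Theta_{\newD_\alpha}(\cN^{\otimes k}) \leq R_{\Theta}(\newD_\alpha,k)(\cN)$ the right intermediate inequality.
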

 The squashed entanglement of the channel $\cN$ introduced in~\cite{takeoka2014squashed} is known to be a converse bounds for $\QCapPPT(\cN)$. However, it remains open whether it is a strong converse and the quantity itself is NP-hard to compute ~\cite{huang2014computing}. Using a similar method as in Section~\ref{section:5}, we can show that $R_{\Theta}(\newD_{\alpha},k)(\cN)$ can be computed in $\poly(k)$ time for any quantum channel $\cN$. 
 
 As an example, $R_{\Theta}(\newD_{\alpha},6)$ is computed for the qubit amplitude damping channel $\cA_{p,0}$, defined in Eq.~\eqref{eq:GAD_channel}, for values of $p \in [0,1]$. The comparison between the two-way/PPT assisted quantum capacity is given in Figure~\ref{fig:quantum_capacity}. The bound $\frac{1}{6}R_{\Theta}(\newD_{2},6)$ demonstrates an improvement compared to the best previously known strong converse bound given by $R_{\Theta}(\widehat{D}_{2},1)$. 
 
 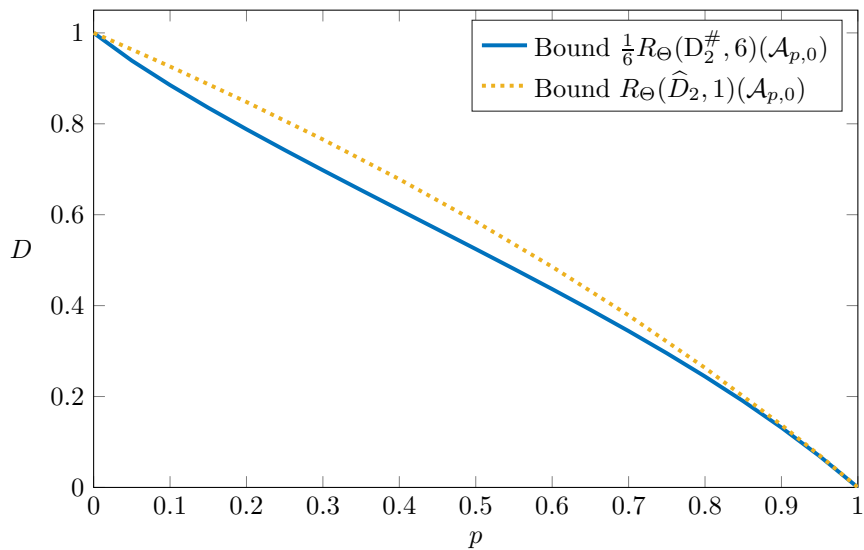
\begin{figure}[ht!]
 	\centering
%
%
\definecolor{mycolor1}{rgb}{0.00000,0.44700,0.74100}%
\definecolor{mycolor2}{rgb}{0.85000,0.32500,0.09800}%
\definecolor{mycolor3}{rgb}{0.92900,0.69400,0.12500}%
\begin{tikzpicture}

\begin{axis}[%
width=4in,
height=2.5in,
at={(1.048in,0.726in)},
scale only axis,
xmin=0,
xmax=1,
ymin=0,
ymax=1.05,
xlabel={$p$},
ylabel style={rotate=-90},
ylabel={$D$},
axis background/.style={fill=white},
legend style={legend cell align=left, align=left, draw=white!15!black}
]

\addplot [color=mycolor1, line width=1.5pt]
  table[row sep=crcr]{%
0  1\\
0.05  0.9389\\
0.1  0.88519\\
0.15  0.83532\\
0.2  0.78795\\
0.25  0.74229\\
0.3  0.69779\\
0.35  0.65407\\
0.4  0.61079\\
0.45  0.56766\\
0.5  0.52441\\
0.55  0.48074\\
0.6  0.43635\\
0.65  0.3909\\
0.7  0.34402\\
0.75  0.29524\\
0.8  0.24404\\
0.85  0.18977\\
0.9  0.13166\\
0.95  0.068773\\
1  0\\
};
\addlegendentry{Bound $\frac{1}{6}R_{\Theta}(\newD_{2},6)(\cA_{p,0})$}

\addplot [color=mycolor3, line width=1.5pt, dotted]
  table[row sep=crcr]{%
0	1.0000122376519\\
0.025	0.981864556738505\\
0.05	0.963487028379919\\
0.1	0.926003073430793\\
0.2	0.848000492348132\\
0.3	0.765541627600207\\
0.4	0.678077466946298\\
0.5	0.584967868575473\\
0.6	0.48543167442716\\
0.7	0.378517121200577\\
0.8	0.263035706749808\\
0.9	0.137512932339375\\
0.95	0.0703955823322711\\
1	0\\
};
\addlegendentry{Bound $R_{\Theta}(\widehat{D}_{2},1)(\cA_{p,0})$}

\end{axis}
\end{tikzpicture}%
 	\caption{Comparison between two strong converse bounds $R_{\Theta}(\widehat{D}_2,1)$ and $\frac{1}{6}R_{\Theta}(\newD_{2},6)$ on for two-way/PPT assisted quantum capacity for the qubit amplitude damping channel $\cA_{p,0}$ for $p \in [0,1]$}  
 	\label{fig:quantum_capacity}
 \end{figure}

\section*{Acknowledgments}
We would like to thank Hamza Fawzi for useful discussions. HT would like to thank Sven Carel Polak for helpful discussions. This project has received funding from the European Research Council (ERC Grant Agreement No. 851716).  The research of HT is supported by the LABEX MILYON (ANR-10-LABX-0070) of Universit\'e de Lyon, within the program ``Investissements d'Avenir" (ANR-11-IDEX-0007) operated by the French National Research Agency (ANR).


\newpage
\phantomsection
\addcontentsline{toc}{section}{Bibliography}
\bibliographystyle{alphaurl}
\bibliography{bibliofile}
\newpage

\appendix

\end{document}